\newcommand{\norm}[1]{\Vert{#1}\Vert}
\newtheorem{assump}{Assumption}
\newcommand{\tbl}[1]{\textcolor{uvablue!70!white}{#1}}
\newcommand{\tor}[1]{\textcolor{uvaorange}{#1}}
\newcommand{\RR}{\mathbb{R}}
\newcommand{\PP}{\mathbb{P}}
\newcommand{\EE}{\mathbb{E}}
\newcommand{\NN}{\mathbb{N}}
\newcommand{\hol}[1]{\mathcal{H}(#1)}
\definecolor{uvablue}{RGB}{12, 39, 92}
\definecolor{uvaorange}{RGB}{243, 115, 33}
\title{Perfect clustering in nonuniform hypergraphs
}
\begin{document}

\title{Perfect Clustering in Nonuniform Hypergraphs}

\author{\name Ga-Ming (Angus) Chan \email gc8ev@virginia.edu \\
       \addr Department of Statistics\\
       University of Virginia\\
       Charlottesville, VA 22904, USA
       \AND
       \name Zachary Lubberts \email zlubberts@virginia.edu \\
       \addr Department of Statistics\\
       University of Virginia\\
       Charlottesville, VA 22904, USA}

\editor{TBD}

\maketitle

\begin{abstract}
  While there has been tremendous activity in the area of statistical network inference on graphs, hypergraphs have not enjoyed the same attention, on account of their relative complexity and the lack of tractable statistical models. We introduce a hyper-edge-centric model for analyzing hypergraphs, called the interaction hypergraph, which models natural sampling methods for hypergraphs in neuroscience and communication networks, and accommodates interactions involving different numbers of entities. We define latent embeddings for the interactions in such a network, and analyze their estimators. In particular, we show that a spectral estimate of the interaction latent positions can achieve perfect clustering once enough interactions are observed.
\end{abstract}

\begin{keywords}
Spectral clustering, nonuniform hypergraphs, random hypergraphs, network inference
\end{keywords}

\section{Introduction}
\label{sec:intro}

Networks have received widespread recognition as a useful tool for modeling complex systems throughout the past decade. However, much of the focus has been on studying pairwise connections. Extending these methods to the case where connections can involve more than two nodes, in \emph{hypernetworks} or \emph{hypergraphs}, is an important step for the area of statistical network inference, since it allows for higher-order relationships to be captured. However, these models have not received the same attention due to their novelty and mathematical complexity. This is to the detriment of the practitioner, since in many real systems, the relationships are not pairwise. In neuroscience, such as the celebrated whole-brain connectomes \cite{winding2023connectome}, synapses correspond to discrete spatial locations in the brain where several neurons are in close proximity. Email communication networks and coauthorship networks also involve interactions between groups of users. While reducing these relationships into collections of pairwise edges is commonplace, this loses much of the complexity and information of the original data. 

From our perspective, there are four issues with the traditional pairwise approaches to such systems:
\begin{itemize}
    \item The meanings of the relationships change: A single email involving 4 people is quite different from 6 sets of pairwise email exchanges. A synaptic firing that affects two other neurons simultaneously is mechanically quite different from a pair of synapses that can operate independently.
    \item The sampling methods are not representative of reality: When we choose a window of time to construct an email network, we select a subset of all \emph{emails}, not all users. When we select a spatial brain region, we select a subset of all \emph{synapses}, not all neurons, since the latter are spatially disparate. \citep{crane2018edge}
    \item Dependence relationships within interactions are ignored: Replacing a hyper-edge with $k$ vertices with $\binom{k}{2}$ pairwise edges results in a clique, and models which suppose independent edge formations will not capture this.
    \item Edge multiplicity is the rule, rather than the exception: We expect to see email threads involving several interactions between the same 5 people. In the Drosophila connectome, synapses among the same neurons occur multiple times.
\end{itemize}

In what follows, we introduce the idea of \emph{interaction hypergraphs} to model hypergraphs. Contrary to the node-centric view of sampling, interaction hypergraphs consider an interaction-centric view where the sampling units are interactions. For the email network example, we might consider a sample of emails: If we treat each email as an interaction, the sampling units are interactions and the users involved in an email are the vertices comprising that interaction. Since we think of interactions as the sampling units, we want hyper-edge exchangeability, and this necessitates the possibility of repeated edges (just as we can, and often do, have several emails among the same set of users during a given time period). This motivates our usage of the term \emph{interactions}, making a clear distinction between the present setting and that of traditional network models, which assume each edge can appear only once. Weighted network approaches may address this point in part, but they lose the fundamentally discrete nature of repeated interactions, and still fail to address the previous points. The interaction hypergraph approach preserves the information contained in the interactions; reflects realistic sampling methods; naturally encodes the dependence that appears within interactions; and allows for edge multiplicity, and as such, it addresses all of the issues mentioned in the previous paragraph. 

Previous work on hypergraphs has focused on connectivity \citep{hu2012algebraic,hu2015laplacian,xie2016spectral,banerjee2021spectrum}, Hamiltonian cycles \citep{gu2022hamiltonian}, dynamical systems \citep{boccaletti2023structure}, colorings and clique numbers \citep{cooper2012spectra,rota2009new,xie2015clique}, and other structural properties. Moreover, many of these papers assume that the hypergraphs are \emph{uniform}, so that all interactions have the same size. The few, very recent exceptions which have considered general hypergraphs \citep{sun2019spectral,cardoso2021principal,cardoso2021general,lin2020alpha} are understandably narrow in scope, focusing on either the spectral radius or principal eigenvector, and in a deterministic setting. As such, while these results may provide some ideas for statistical inference on general hypergraphs, there is still a good deal of work to be done. We also note the use of tensor-based methods for studying hypergraphs. These necessarily suppose uniformity, and often assume asymmetry: see \citet{luo2022tensor,han2022exact,agterberg2022estimating}. 

An important exception to these comments can be found in \citet{chodrow2023nonbacktracking}, which considers the pointed line graph model, and does not require tensor or degree uniformity. While the models for hypergraph generation bear a surface resemblance, our model allows for interaction multiplicity, since it imposes interaction exchangeability. Moreover, we do not suppose a particular probability distribution for the sizes of different interactions, and our estimation procedure is quite different.

The structure of this paper is as follows. In Section~\ref{S:model}, we will go over our notations and model. Then in Section~\ref{sec:theory}, we discuss the properties of our proposed spectral method. Sections~\ref{sec:simulation} and \ref{sec:realdata} verify our results through generated and real data respectively. All proofs, as well as additional simulation results, may be found in the Appendix.

\section{Model and notation}
\label{S:model}

In this section, we detail the notations we will use throughout the paper and set down the exact form of our model for interaction hypergraphs. We propose a version of the model with some generality, but we prove theoretical results for a more restricted class.

\subsection{Notation}
We use $\PP$ to denote probabilities, $\EE$ for expectations. For a positive integer $n$, $[n]$ denotes the set $\{1,2,\dots,n\}$. For a vector $v$, $\|v\|$ denotes its Euclidean length. For an $n\times m$ real-valued matrix $M$, $M^\top$ denotes its transpose, $\norm{M}_F$ denotes its Frobenius norm, $\norm{M}$ or $\norm{M}_2$ denotes its spectral norm, and $\lambda_1(M), \lambda_2(M), \dots, \lambda_d(M)$ and $\sigma_1(M), \sigma_2(M), \dots, \sigma_d(M)$ denote the eigenvalues and singular values of $M$ respectively, in descending order (i.e. $\lambda_1(M)\geq\lambda_2(M)\geq\dots\geq \lambda_d(M)$, and $\sigma_1(M)\geq\sigma_2(M)\geq\dots\geq \sigma_d(M)$). We use $I_k$ to denote the identity matrix of order $k$, and we use $J_{k_1,k_2}, 0_{k_1,k_2}$ to denote the $k_1\times k_2$ matrices of all ones and all zeroes, respectively. We will write $J_{k_1,k_1}$ as $J_{k_1}$.

\begin{figure}
\centering
    \begin{minipage}[c]{0.475\textwidth}
      \centering
      \begin{tikzpicture}[
      vertex/.style = {circle, draw, inner sep=3pt, fill=white, scale=1},
      vertex1/.style = {vertex, fill=uvablue!30!white},
      vertex2/.style = {vertex, fill=uvaorange!30!white},
      interaction/.style = {draw, inner sep=3pt, fill=white, scale=1,fill=white},
      fill fraction/.style n args={2}{path picture={
        \fill[#1] (path picture bounding box.south west) rectangle
        ($(path picture bounding box.north west)!#2!(path picture bounding box.north east)$);}},
      ]
      \node[vertex1,line width = 1pt] (n1) at (0,4) {$n_1$};
      \node[vertex1,line width = 1pt] (n2) at (3,8/3){$n_2$};
      \node[vertex1,line width = 1pt] (n3) at (11/3,2/3) {$n_3$};
      \node[vertex2,line width = 1pt] (n4) at (2,0){$n_4$};
      \node[vertex2,line width = 1pt] (n5) at (-4/3,0){$n_5$};
      \node[vertex2,line width = 1pt] (n6) at (-2,2){$n_6$};

      \node[interaction,line width = 1pt, fill = uvablue,text=white] (i1) at (4/3,10/3) {$i_1$};
      \node[interaction,line width = 1pt, text=white,fill=uvablue, fill fraction={uvaorange!90!black}{0.5}] (i2) at (4/3,7/3) {$i_2$};
      \node[interaction,line width = 1pt, fill = uvaorange!90!black,text=white] (i3) at (-2/3,4/3) {$i_3$};
      \node[interaction,line width = 1pt, fill=uvablue, fill fraction={uvaorange!90!black}{0.5},text=white] (i4) at (1/3,5/3) {$i_4$};

      \draw[->,line width = 1pt] (i1) to (n2);
      \draw[->,line width = 1pt] (i1) to (n1);
      \draw[->,line width = 1pt] (i2) to (n2);
      \draw[->,line width = 1pt] (i2) to (n3);
      \draw[->,line width = 1pt] (i2) to (n6);
      \draw[->,line width = 1pt] (i3) to (n4);
      \draw[->,line width = 1pt] (i3) to (n5);
      \draw[->,line width = 1pt] (i3) to (n6);
      \draw[->,line width = 1pt] (i4) to (n1);
      \draw[->,line width = 1pt] (i4) to (n3);
      \draw[->,line width = 1pt] (i4) to (n4);
      \draw[->,line width = 1pt] (i4) to (n5);        
    \end{tikzpicture}
    \end{minipage}
    \begin{minipage}[c]{0.475\textwidth}
    \begin{equation*}
     R = \kbordermatrix{
      &\tbl{i_1}&i_2&\tor{i_3}&i_4\\
     \tbl{n_1}&\tbl{1}&\tbl{0}&\tbl{0}&\tbl{1}\\
     \tbl{n_2}&\tbl{1}&\tbl{1}&\tbl{0}&\tbl{0}\\
     \tbl{n_3}&\tbl{0}&\tbl{1}&\tbl{0}&\tbl{1}\\
     \tor{n_4}&\tor{0}&\tor{0}&\tor{1}&\tor{1}\\
     \tor{n_5}&\tor{0}&\tor{0}&\tor{1}&\tor{1}\\
     \tor{n_6}&\tor{0}&\tor{1}&\tor{1}&\tor{0}
     }
    \end{equation*}
    \end{minipage}
      \caption{Demonstration of interaction hypergraphs: visualization (left) and incidence matrix $R$ (right). Nodes are represented by the circles $v_1,\dots,v_6$, and the colors blue and orange denote their node classes. Sampled interactions are represented by squares $i_1,\dots,i_4$, where the arrows point to the nodes comprising each interaction. Pure and mixed interactions are denoted by color, corresponding to the nodes making up each interaction.}
      \label{fig:IH-Demo}
\end{figure}

\subsection{Interaction hypergraphs}

An \emph{interaction hypergraph} $H=(N,E)$ consists of a pair of a set of \emph{nodes} or \emph{vertices}, $N$, and a multiset of interactions, $E$. We fix the node set as $N=[n]=\{1,\ldots,n\}$, where $n$ is a positive integer, and $E=\{e_1,\ldots,e_m\}$, where $e_p\subseteq N$, and need not be distinct for different values of $1\leq p\leq m$. The \emph{node degree} of a node $v\in N$ is the number of interactions it belongs to, i.e., $\mathrm{degree}(v)=\#\{1\leq p\leq m: v\in e_p\},$ while the \emph{interaction degree} of an interaction $e\in E$ is the number of interactions it intersects with, $\mathrm{degree}(e)=\#\{1\leq p\leq m: e\cap e_p\neq \varnothing, e_p\neq e\}.$ Since we are in a hypergraph setting, we also define the \emph{interaction size}, which is the number of nodes comprising that interaction, $|e|=\#\{v\in N: v\in e\}.$ We show an example of this in the left panel of Figure~\ref{fig:IH-Demo}, where nodes are represented by circles, and interactions by squares. The colors of the nodes and interactions show the clusters to which the nodes belong, and the types of the interactions. We say that an interaction is \emph{pure} when it only contains nodes from one cluster, and otherwise it is \emph{mixed}.

Given a hypergraph $H=(N,E)$, we may associate to it an \emph{incidence matrix} $R\in\{0,1\}^{n\times m}$, where the $(i,p)$ entry of $R$, denoted by $R_{ip}$, equals $1$ if the $p$-th interaction involves node $i$ (i.e. $i\in e_p$), and $0$ otherwise, i.e.
\begin{equation*}
    R_{ip}=
    \begin{cases}
        1&\text{if }i\in e_p,\\
        0&\text{o.w.}
    \end{cases}
\end{equation*}
As interactions are assumed to be independent, the columns of $R$ are independent, but entries in the same column of $R$ are not independent. The right panel of Figure~\ref{fig:IH-Demo} shows the incidence matrix for this example.

Now we introduce the interaction hypergraph model:

\begin{definition}[Latent position interaction hypergraph]
Let $\omega: \mathcal{X}\times \mathcal{Y}\rightarrow [0,+\infty)$ be a nonnegative \emph{weight function}. For each $i \in N$, let $x_i\in\mathcal{X}$ be its associated \emph{node latent position}. For each $p\in [m]$, let $y_p\in\mathcal{Y}$ be the associated \emph{interaction latent position}, and let $k_p=|e_p|$ be the interaction size. Then we obtain a \emph{Latent Position Interaction Hypergraph (LPH)} by independently drawing interactions $e_p$ having size $k_p$, sampled non-uniformly without replacement for initial weights $\omega(x_i, y_p)$.
\end{definition}

\begin{remark}
To explain what is meant by ``sampled non-uniformly without replacement,'' let us consider a small example. Suppose interaction 1 has a latent position satisfying $\omega(x_1,y_1)=1, \omega(x_2,y_1)=2, \omega(x_3, y_1)=3,$ and size 2. We draw the first vertex according to a Multinomial trial with probability vector $(1,2,3)/6$. Once the first vertex has been selected, we condition on that event and draw the second vertex according to a Multinomial trial with conditional probability vector again proportional to the weights, so if vertex 1 were chosen first, we choose between vertices 2 and 3 with probabilities $2/5$ and $3/5$, and if vertex 3 were chosen first, we choose between vertices 1 and 2 with probabilities $1/3$ and $2/3$. Note that unlike the case with uniform sampling, the marginal probabilities for draws after the first do not agree with the probability distribution implied by the initial vector of weights: For example, vertex 3 has probability $3/6=1/2$ of being the first vertex drawn, but has probability $(1/6)(3/5)+(2/6)(3/4)=7/20$ of being the second vertex drawn. This effect becomes less significant as the number of possible objects becomes larger, so long as the weights are of comparable size.
\end{remark}

\begin{remark}
Latent position graphs (LPGs) \cite{hoff2002latent} enjoy a limited sense of node exchangeability, in the sense that the distribution of the generated network does not change if nodes with the same latent positions are permuted. However, edge exchangeability in this setting is impossible, since any particular edge appears at most once. In the LPH model, we have this same limited sense of node exchangeability, as well as an analogous notion of hyper-edge exchangeability, in that permutations of the interactions which only exchange interactions having the same latent positions and sizes will also preserve the distribution. This sense of hyper-edge exchangeability is important so long as we consider the \emph{incidence matrix} of the hypergraph, $R\in \{0,1\}^{n\times m}$, where $R_{ip}=1$ whenever $i\in e_p$. If we do not concern ourselves with the order in which the interactions appear, then hyper-edge exchangeability is a moot point, since the set of interactions is unordered.
\end{remark}

\begin{definition}[Generalized random dot product interaction hypergraph]
When $\omega$ takes the form $\omega((x,x'),(y,y'))=\langle x,y\rangle - \langle x',y'\rangle$, so that $\mathcal{X}=\mathcal{Y}=\RR^{d_1}\times \RR^{d_2}$, we call the interaction hypergraph that arises a \emph{Generalized Random Dot Product Interaction Hypergraph (GRDPH).} If $d_2=0$, this is a \emph{Random Dot Product Interaction Hypergraph (RDPH).}
\end{definition}

\begin{definition}[Stochastic blockmodel interaction hypergraph]
When $\mathcal{X}=[d]$ is a discrete set, we call the generated interaction hypergraph a \emph{Stochastic Blockmodel Interaction Hypergraph (Hyper-SBM).} In this setting, we say that any vertices with latent position $r\in[d]$ belong to class $r$, and denote the number of nodes belonging to each class $r$ by $n_r$. We may also define a \emph{type vector} $\tau_p$ for each interaction $p$, so that its $r$th entry $\tau_{rp}$ is equal to the number of nodes from class $r$ belonging to this interaction. The \emph{type matrix} $\mathcal{T}\in \NN^{d\times m}$ collects the type vectors together as its columns. Given $\mathcal{T}$, the interactions are sampled uniformly without replacement, with $\tau_{rp}$ entries being selected from the nodes belonging to class $r$, independently across classes and interactions. Note, however, that the presence or absence of vertices from the same class within any particular interaction are not independent events.
\end{definition}

It is occasionally more convenient to refer to the \emph{basic type vectors} $\bar{\tau}_p$, which binarize the ordinary type vectors to indicate the membership of any nodes from each class in the interaction $p$. That is, $\bar{\tau}_{rp}=1$ if $\tau_{rp}>0$, and otherwise $\bar{\tau}_{rp}=0$. The corresponding matrix is denoted $\bar{\mathcal{T}}$.

In what follows, we will consider the case of Hyper-SBMs, and consider the spectral analysis of their incidence matrices. Given the type vectors, we may write the mean matrix $\Gamma=\EE[R|\mathcal{T}]$ as $ZB\mathcal{T}$, where $Z\in \{0,1\}^{n\times d}$ gives the class memberships for each of the $n$ vertices, and $B=\mathrm{diag}(1/n_r)$. Indeed, when vertex $i$ belongs to class $r$, we have
$$ \EE[R_{ip}|\tau_p] = \PP[i \in e_p | \tau_{rp}] = \frac{\tau_{rp}}{n_r}.$$ As such, the mean matrix $\Gamma$ always has rank at most $d$.

We define the SVD of $\Gamma$ as
$$\Gamma=[U|U_{\perp}][S\oplus 0][V|V_{\perp}]^\top,$$ but rather than working with the SVD of $R$ directly, we will instead consider a certain subset of the eigenvectors of $\hol{RR^\top}=RR^\top-\mathrm{diag}(RR^\top)$. In order to define the eigenspace precisely, consider the following lemma:

\begin{lemma}[Eigendecomposition of $\mathcal{H}(\EE(RR^\top))$]
\label{lem:eigenH}
The mean matrix $\mathcal{H}(\EE(RR^\top))$ may be split into its action on orthogonal invariant subspaces as 
$$
\mathcal{H}(\EE(RR^\top))=ZB\Sigma_{\mathcal{T}}BZ^\top-\bigoplus_{r=1}^d \mu_r (I_{n_r}-J_{n_r}/n_r),
$$
where $\Sigma_{\mathcal{T}}=\mathcal{T}\mathcal{T}^\top-\mathrm{diag}\left(\mathcal{T}j\right)$, for $j\in \RR^m$ the vector of all ones, and $\mu_r=\sum_{p=1}^m \frac{\tau_{rp}(\tau_{rp}-1)}{n_r(n_r-1)}.$ 
Thus, $\mathcal{H}(\EE(RR^\top))$ has $d$ eigenvalues equal to those of $\sqrt{B}\Sigma_{\mathcal{T}}\sqrt{B}$ with corresponding eigenvectors contained in $\mathrm{range}(Z)$, and $n-d$ negative eigenvalues $-\mu_1,\ldots, -\mu_d$, with multiplicities $n_1-1,\ldots, n_d-1$. 
\end{lemma}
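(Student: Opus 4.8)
The plan is to verify the displayed matrix identity entrywise, and then extract the eigendecomposition from two explicit orthogonal invariant subspaces on which the two summands act separately.

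First I would compute the entries of $\mathcal{H}(\EE(RR^\top))$ directly from the sampling model. Hollowing only changes the diagonal, which becomes $0$; for $i\neq j$ we have $\mathcal{H}(\EE(RR^\top))_{ij}=\sum_{p=1}^m\PP[i\in e_p,\ j\in e_p]$. If $i$ and $j$ both belong to class $r$, the $\tau_{rp}$ class-$r$ members of interaction $p$ form a uniform size-$\tau_{rp}$ subset of the $n_r$ class-$r$ nodes, so $\PP[i\in e_p,\ j\in e_p]=\tau_{rp}(\tau_{rp}-1)/(n_r(n_r-1))$, and summing over $p$ yields exactly $\mu_r$. If $i$ belongs to class $r$ and $j$ to class $s\neq r$, independence across classes gives $\PP[i\in e_p,\ j\in e_p]=(\tau_{rp}/n_r)(\tau_{sp}/n_s)$, which sums over $p$ to $(\Sigma_{\mathcal{T}})_{rs}/(n_rn_s)$ since $(\Sigma_{\mathcal{T}})_{rs}=\sum_p\tau_{rp}\tau_{sp}$ for $r\neq s$.

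Next I would expand the claimed right-hand side. Because $i\in$ class $r$, $j\in$ class $s$ implies $(ZMZ^\top)_{ij}=M_{rs}$, the matrix $ZB\Sigma_{\mathcal{T}}BZ^\top$ has $(i,j)$ entry $(\Sigma_{\mathcal{T}})_{rs}/(n_rn_s)$, so for a cross-class pair it already matches the computation above, with the block-diagonal term $\bigoplus_r\mu_r(I_{n_r}-J_{n_r}/n_r)$ contributing nothing off the diagonal blocks. For a same-class pair, including the diagonal, the ``naive'' value $(\Sigma_{\mathcal{T}})_{rr}/n_r^2=\sum_p\tau_{rp}(\tau_{rp}-1)/n_r^2$ carries a spurious factor $1/n_r^2$ in place of $1/(n_r(n_r-1))$; subtracting $\mu_r(I_{n_r}-J_{n_r}/n_r)$, whose within-block-$r$ entries are $\mu_r(1-1/n_r)$ on the diagonal and $-\mu_r/n_r$ off it, converts this into $\mu_r$ off the diagonal and $0$ on the diagonal, matching the first computation. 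This is the step that needs the most care, since the single correction term must simultaneously repair the same-class off-diagonal entries and annihilate the diagonal; the identity then follows by comparing the three cases.

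Finally, for the spectral statement I would use that $\mathrm{range}(Z)$ and its orthogonal complement $\bigoplus_{r=1}^d\{v:\mathbf{1}_{n_r}^\top v=0\}$, of dimensions $d$ and $n-d$, are invariant subspaces for both summands: $ZB\Sigma_{\mathcal{T}}BZ^\top v=0$ whenever $Z^\top v=0$, while each $I_{n_r}-J_{n_r}/n_r$ annihilates $\mathbf{1}_{n_r}$, so $\bigoplus_r\mu_r(I_{n_r}-J_{n_r}/n_r)$ vanishes on $\mathrm{range}(Z)$. On the complement the second summand acts as $\mu_r$ times the identity on the $(n_r-1)$-dimensional space $\mathbf{1}_{n_r}^\perp$, producing the eigenvalues $-\mu_r$ with multiplicities $n_r-1$. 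On $\mathrm{range}(Z)$, set $W=Z\sqrt{B}$; since $Z^\top Z=\mathrm{diag}(n_r)=B^{-1}$, $W$ has orthonormal columns spanning $\mathrm{range}(Z)$, and $ZB\Sigma_{\mathcal{T}}BZ^\top=W(\sqrt{B}\Sigma_{\mathcal{T}}\sqrt{B})W^\top$, so conjugating the spectral decomposition of the symmetric matrix $\sqrt{B}\Sigma_{\mathcal{T}}\sqrt{B}$ by the isometry $W$ gives $d$ eigenvalues of $\mathcal{H}(\EE(RR^\top))$ equal to those of $\sqrt{B}\Sigma_{\mathcal{T}}\sqrt{B}$, with eigenvectors in $\mathrm{range}(Z)$. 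I expect the only real obstacle to be the bookkeeping with the without-replacement probabilities and this correction term; the remainder is routine linear algebra.
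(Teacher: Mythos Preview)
Your proposal is correct and follows essentially the same approach as the paper: compute the entries of $\mathcal{H}(\EE(RR^\top))$ in the three cases (diagonal, same class off-diagonal, different classes), and match them against the right-hand side block by block. The paper presents the diagonal-block verification slightly differently, writing $\mu_r(J_{n_r}-I_{n_r})=\mu_r(n_r-1)(J_{n_r}/n_r)-\mu_r(I_{n_r}-J_{n_r}/n_r)$ and identifying the first summand with $(ZB\Sigma_{\mathcal{T}}BZ^\top)_{rr}$, whereas you start from $(\Sigma_{\mathcal{T}})_{rr}/n_r^2$ and check the correction; these are the same computation read in opposite directions. Your final paragraph on the invariant-subspace spectral conclusions is actually more explicit than the paper, which states those consequences but leaves their verification to the reader.
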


Let $\tilde{\Lambda}$ be the diagonal matrix of eigenvalues of $ZB\Sigma_{\mathcal{T}}BZ^\top$. Then we may define the eigendecomposition of $\hol{\EE(RR^\top)}$ as
\begin{equation}
\hol{\EE(RR^\top)}=[\tilde U|\tilde U_{\perp}][\tilde \Lambda\oplus\tilde\Lambda_{\perp}][\tilde U|\tilde U_{\perp}]^\top.
\label{eq:tildeu}
\end{equation}
Note that $\mathrm{span}(\tilde{U})\subseteq \mathrm{span}(Z)$.

Let us define 
\begin{equation}
\Delta = \min_{r,s} \left|\lambda_r(\sqrt{B}\Sigma_{\mathcal{T}}\sqrt{B})-(-\mu_s)\right|.
\label{eq:delta}
\end{equation}
This will serve as a measure of the signal strength in the matrix $\mathcal{H}(\EE(RR^\top))$. Under the very weak assumption $\Delta>0$, we may distinguish the eigenvalues of $\mathcal{H}(\EE(RR^\top))$ belonging to the term $ZB\Sigma_{\mathcal{T}}BZ^\top$ from the values $-\mu_r$, since the multiplicities for the latter are $n_r-1$, which is much, much larger than the $d$ eigenvalues of this first term. In fact, with a stronger assumption on $\Delta$, we can ensure that we can distinguish these eigenvalues even from the noisy matrix $\hol{RR^\top}$. Throughout this work, we will make the following assumption:

\begin{assump}[Balanced communities]
\label{assump:community}
There exists a constant $\tilde c>0$ s.t
$$n_r\geq \tilde cn,\quad r=1,\dots,d$$
\end{assump}

\begin{lemma}
\label{lem:eigtrapping}
Suppose that $m\geq n$, Assumption~\ref{assump:community} holds, and we have the signal strength condition
\begin{equation}
\Delta \geq 21\sqrt{m\log(m)k_{\mathrm{max}}\bar{k}/\tilde{c}}=:3b.
\label{eq:bigdelta}
\end{equation}
Then with overwhelming probability as $m\rightarrow\infty$, ordering eigenvalues using the natural order on the real line,
$$ 
\max_{1\leq i\leq n} |\lambda_i(\hol{RR^\top})-\lambda_i(\hol{\EE(RR^\top)})|\leq b.
$$
In particular,
\begin{itemize}
    \item $\hol{RR^\top}$ will have $n_r-1$ eigenvalues in $-\mu_r\pm b$, for each $1\leq r\leq d,$
    \item The remaining eigenvalues of $\hol{RR^\top}$ will lie in the intervals $I_r:=\lambda_r(ZB\Sigma_{\mathcal{T}}BZ^\top)\pm b,$
    \item If the union of $k$ of the intervals $I_r$ form a maximal connected set $S$ of this form, so that no additional interval $I_{s}$ could be added to this union while remaining connected, then $S$ contains exactly $k$ eigenvalues of $\hol{RR^\top}$.
\end{itemize}
\end{lemma}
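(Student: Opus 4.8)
The plan is to reduce the eigenvalue statement to a single spectral-norm estimate via Weyl's inequality, prove that estimate with the matrix Bernstein inequality, and then recover the three trapping bullets by elementary interval arithmetic built on Lemma~\ref{lem:eigenH} and the gap hypothesis $\Delta\ge 3b$. Since $A\mapsto\hol{A}=A-\diag(A)$ is linear, $\hol{\EE(RR^\top)}=\EE[\hol{RR^\top}]$, so Weyl's inequality (with eigenvalues in the natural order on $\RR$) gives
$$\max_{1\le i\le n}\bigl|\lambda_i(\hol{RR^\top})-\lambda_i(\hol{\EE(RR^\top)})\bigr|\le\norm{\hol{RR^\top}-\EE[\hol{RR^\top}]},$$
so it suffices to bound this norm by $b$ with overwhelming probability. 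Writing $r_p$ for the $p$-th column of $R$ we have $RR^\top=\sum_{p=1}^m r_pr_p^\top$, hence $\hol{RR^\top}-\EE[\hol{RR^\top}]=\sum_{p=1}^m X_p$ with $X_p:=\hol{r_pr_p^\top}-\EE[\hol{r_pr_p^\top}]$; since interactions are sampled independently, the columns of $R$, and hence the $X_p$, are independent, self-adjoint, and mean zero, which is exactly the setting of matrix Bernstein.

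I would feed matrix Bernstein a uniform bound $L$ on $\norm{X_p}$ and a bound $v$ on the variance proxy $\norm{\sum_p\EE X_p^2}$. For $L$: restricted to the $k_p:=|e_p|$ coordinates of $e_p$ the matrix $\hol{r_pr_p^\top}$ equals $J_{k_p}-I_{k_p}$ and vanishes elsewhere, so $\norm{\hol{r_pr_p^\top}}=k_p-1$ and, by Jensen, $\norm{X_p}\le 2(k_{\mathrm{max}}-1)=:L$. For $v$: $\EE X_p^2\preceq\EE[\hol{r_pr_p^\top}^2]$, and all these matrices have nonnegative entries, so $v$ is at most the largest row sum of $\sum_p\EE[\hol{r_pr_p^\top}^2]$; computing $\hol{r_pr_p^\top}^2$ (again via the $J_{k_p}-I_{k_p}$ form) shows the $i$-th row sum of $\EE[\hol{r_pr_p^\top}^2]$ equals $(k_p-1)^2\,\PP[i\in e_p]$, and since $\PP[i\in e_p]=\tau_{rp}/n_r\le 1$ this gives $v\le\sum_p(k_p-1)^2\le k_{\mathrm{max}}\sum_p k_p=k_{\mathrm{max}}m\bar{k}\le k_{\mathrm{max}}m\bar{k}/\tilde{c}=:v_0$, matching $b=7\sqrt{v_0\log m}$. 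Taking $t=b$ in $\PP[\norm{\sum_p X_p}\ge t]\le 2n\exp(-t^2/(2v_0+\tfrac{2}{3}Lt))$, together with $m\ge n$ (so $\log m\ge\log n$) and the crude bounds $k_{\mathrm{max}}\le n\le m$, makes the exponent at most $-c\log m$ for a constant $c>1$, so the failure probability is $O(m^{1-c})\to 0$; the factor $21=3\cdot 7$ in \eqref{eq:bigdelta} is precisely what keeps this concentration valid while delivering $\Delta\ge 3b$.

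For the bullets, recall from Lemma~\ref{lem:eigenH} that the $n$ eigenvalues of $\hol{\EE(RR^\top)}$ are the $d$ eigenvalues of $\sqrt{B}\Sigma_{\mathcal{T}}\sqrt{B}$ — equivalently the nonzero eigenvalues of $ZB\Sigma_{\mathcal{T}}BZ^\top$ — together with $-\mu_r$ repeated $n_r-1$ times, $r=1,\dots,d$. The displayed Weyl bound (now established) places each eigenvalue of $\hol{RR^\top}$ within $b$ of the correspondingly-ranked eigenvalue of $\hol{\EE(RR^\top)}$, hence inside $\bigcup_r I_r\cup\bigcup_r(-\mu_r\pm b)$ with multiplicities matched by rank. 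Since $\Delta\ge 3b$, every $I_r$ is separated from every $-\mu_s\pm b$ by a gap of at least $3b-2b=b$, so these two families of intervals are disjoint; therefore the $n_r-1$ eigenvalues ranked like a $-\mu_r$ eigenvalue stay in $-\mu_r\pm b$ (first bullet) and the remaining $d$ stay in $\bigcup_r I_r$ (second bullet). For the third bullet, a maximal connected union $S$ of $k$ of the $I_r$ is an interval, every point of which lies in some $I_r$; if an eigenvalue from the $-\mu_s$ block fell in $S$ it would be within $b$ of some $\lambda_r(ZB\Sigma_{\mathcal{T}}BZ^\top)$, forcing $|\lambda_r(\sqrt{B}\Sigma_{\mathcal{T}}\sqrt{B})-(-\mu_s)|<2b<\Delta$, a contradiction; and by maximality $S$ avoids the other $I_r$. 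Hence $S$ contains exactly the $k$ eigenvalues of $\hol{RR^\top}$ ranked like those $k$ chosen eigenvalues.

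The step I expect to be the real obstacle is the variance-proxy estimate: matrix Bernstein genuinely requires the operator norm $\norm{\sum_p\EE X_p^2}$ of a \emph{sum}, in which the interaction supports vary with $p$, rather than the far weaker $\sum_p\norm{\EE X_p^2}$, and getting a bound that is clean in $k_{\mathrm{max}}$, $\bar{k}$, and $\tilde{c}$ requires the explicit form of $\hol{r_pr_p^\top}^2$ and Assumption~\ref{assump:community} to control $\PP[i\in e_p]$. A secondary check is that $b$ dominates the Bernstein threshold in the regime where the $\tfrac{2}{3}Lt$ term is not negligible against $v_0$; this is where $m\ge n$ is used, and it is comfortable unless the interaction sizes are allowed to grow almost as fast as $m$.
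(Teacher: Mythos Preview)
Your proposal is correct and follows essentially the same route as the paper: the paper proves the spectral-norm bound $\norm{\hol{RR^\top}-\hol{\EE(RR^\top)}}\le 7\sqrt{m\log(m)k_{\mathrm{max}}\bar{k}/\tilde{c}}$ via matrix Bernstein in a separate lemma (Lemma~\ref{lem:hollowconcentration}), and then states that Lemma~\ref{lem:eigtrapping} ``follows immediately by applying Weyl's inequality''---exactly your reduction. Your variance computation is a minor variant of the paper's: they write $\hol{r_pr_p^\top}^2=(k_p-2)r_pr_p^\top+D_p$ and bound entries using $\EE[R_{ip}R_{jp}]\le k_p/(\tilde{c}n)$ before applying a Perron--Frobenius/row-sum bound, whereas you compute the row sum $(k_p-1)^2\PP[i\in e_p]$ directly and bound $\PP[i\in e_p]\le 1$; both land on $v_0=mk_{\mathrm{max}}\bar{k}/\tilde{c}$ (yours after the harmless weakening by $1/\tilde{c}$), and your final remark that Assumption~\ref{assump:community} is needed to control $\PP[i\in e_p]$ is therefore a slight overstatement---your own argument shows the $\tilde{c}$ enters only to match the stated form of $b$.
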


In light of this lemma, we may choose the $d$ eigenvalues of $\hol{RR^\top}$ not belonging to any of the intervals $-\mu_r\pm b$ for $1\leq r\leq d$, and arrange these as the diagonal elements of $\Lambda$. Then we define the eigendecomposition of this matrix as
$$
\hol{RR^\top}=[\hat U|\hat U_{\perp}][\hat \Lambda\oplus\hat\Lambda_{\perp}][\hat U|\hat U_{\perp}]^\top.
$$
We will also consider the SVD of $\hat U^\top R$, defined as
$$\hat U^\top R=\hat X\hat S\hat V^\top.$$ As we will soon show, $\hat{U}$ will span approximately the same subspace as $U$ and $\tilde{U}$, which span exactly the same space; $\hat{S}$ will approximate $S$; and $\hat{V}$ will span approximately the same subspace as $V$. While this sort of result might be expected in light of standard results from the statistical analysis of networks, we emphasize that the dependence within interactions mean that the distributional properties of $R$ are quite different from an ordinary adjacency matrix with conditionally independent edge formation, pairwise edges, and no sense in which edge exchangeability holds. This leads to very different arguments in order to arrive at a similar conclusion.

\section{Theoretical Properties}
\label{sec:theory}

A key feature of our theory is that it applies in the settings where $\bar{k}, k_{\mathrm{max}}$ are both of constant order, or when both of these quantities grow; we also allow $n$ to remain bounded or to grow with $m$, so long as $m\geq n$. Rather than splitting into several different cases for $\bar{k}, k_{\mathrm{max}}$, and $n$, we impose the following condition, which jointly controls all of the relevant quantities.

\begin{assump}[Sufficient Information]
\label{assump:information}
Consider the quantity $\Delta$ defined in Equation~\ref{eq:delta}, and let $\kappa=\sigma_1(\Gamma)/\sigma_d(\Gamma)$. As $m\rightarrow\infty$, we suppose the following condition on the growth of the signal strength:
$$\Delta^2 \gg m\log^2(m)\kappa^{14}k_{\mathrm{max}}^6 \bar{k} d^3.$$
\end{assump}

The following two examples demonstrate the requirements of this condition in more concrete settings. From the perspective of networks, this condition may seem very strong, since the number of observed edges can never exceed $n^2/2$. We first point out that in an edge-exchangeable setting, this restriction no longer applies, but there are a few additional subtleties to point out. The role of $k_{\mathrm{max}}$ is of particular interest, since larger interactions contain both more signal as well as additional variance, so the true relationship between the signal strength is more complicated than this sufficient condition would suggest: see Figure~\ref{fig:ARItables} and the supplementary figures in Appendix~\ref{sec:additional} for more on this point. As $k_{\mathrm{max}}$ gets larger, this forces $n$ to grow, and also significantly increases the number of possible clusters of the interaction types, meaning that the clustering problem becomes dramatically more challenging for larger $k_{\mathrm{max}}$. If we were to consider a weaker notion of clustering performance, the corresponding signal strength condition should also become much more lenient. 

\begin{example}
Suppose that $d=2$, $n_1=n_2=n/2$, and suppose that there are $\alpha m$ pure interactions involving $k$ type 1 vertices, $\alpha m$ pure interactions involving $k$ type 2 nodes, and $(1-2\alpha)m$ mixed interactions involving $k$ nodes of each type. Then $\mu_1=\mu_2=(1-\alpha) m \frac{2k(k-1)}{n(n-2)}$, and $\sqrt{B}\Sigma_{\mathcal{T}}\sqrt{B}=\frac{2m}{n}\begin{bmatrix}(1-\alpha)k(k-1)& (1-2\alpha)k^2\\ (1-2\alpha)k^2&(1-\alpha)k(k-1) \end{bmatrix},$ which has eigenvalues $(2m/n)\{(2-3\alpha)k^2-(1-\alpha)k,\alpha k^2-(1-\alpha) k\}.$ When $\alpha\in (1/(k+1),(2k-1)/(3k-1)),$ this matrix is positive definite, corresponding to an assortative setting. For this example, $\kappa=(2-3\alpha)/\alpha.$ In particular, for fixed $k$, we have $\Delta^2 \sim (m/n)^2$.
\end{example}

\begin{example}
Suppose that $d=2$, $n_1=n_2=n/2$, and that all interactions have size $k_p=3k$, with half having $\tau_{1p}=2k,\tau_{2p}=k$, and the other half having $\tau_{1p}=k, \tau_{2p}=2k$. Then $\mu_1=\mu_2=\frac{2m [2k(2k-1)+k(k-1)]}{n(n-2)}=\frac{2m(5k^2-3k)}{n(n-2)}$. On the other hand, $\sqrt{B}\Sigma_{\mathcal{T}}\sqrt{B} =\frac{m}{n}\begin{bmatrix}5k^2-3k&4k^2\\ 4k^2&5k^2-3k\end{bmatrix},$ which has eigenvalues $(m/n)\left\{3k(3k-1),k(k-3)\right\}.$ We can check that $\kappa=9$. When $k=1,2,$ or $3$, this corresponds to a disassortative setting, but the eigenvalues are still well-separated from the values $-\mu_1, -\mu_2$. For any fixed $k$ and large enough $m,n$, we have $\Delta^2 \sim (m/n)^2$.
\end{example}

In order to obtain a perfect clustering of the interactions, we will show that the estimates of the interaction latent positions given by the rows of $\hat{V}\hat{S}$ are uniformly close to their theoretical counterparts, which perfectly reveal the communities of the type vectors since $SV_p=S V_q$ if and only if $\tau_p=\tau_q$. However, the number of clusters can be very large when $k_{\mathrm{max}}$ is large, which makes the clustering problem challenging. To remedy this, we consider applying hierarchical clustering with complete linkage \citep{johnson1967hierarchical}. This algorithm is provided in Algorithm~\ref{alg:hc}. Analysis of this procedure reveals that at some point along the sequence of cluster merges, the exact clusters of the type vectors are recovered. If one knew the true number of distinct type vectors, they could then recover the true clustering, but the proof of this theorem also shows that there must be a significant increase in the linkage function just after the true number of clusters has been passed, suggesting that in practice, this number could be recovered.

\begin{theorem}[Perfect clustering in nonuniform hypergraphs]
\label{thm:scaledv}
Suppose Assumptions~\ref{assump:community} and \ref{assump:information} hold. Then with overwhelming probability as $m\rightarrow\infty$, 
$$
\|\hat{V}\hat{S}-VSW\|_{2\rightarrow\infty} \lesssim \frac{\sqrt{m}\log(m)\kappa^7 k_{\mathrm{max}}^3\sqrt{\bar{k}} d^{3/2}}{\sqrt{n}\Delta},
$$
so $\sqrt{n}\|\hat{V}\hat{S}-VSW\|_{2\rightarrow\infty}\rightarrow0$ as $m\rightarrow\infty$.

On the high-probability event where $\sqrt{n}\|\hat{V}\hat{S}-VSW\|_{2\rightarrow\infty}\leq 1/(10\tilde{c})$, if we consider the sequence of clusterings returned by hierarchical clustering with complete linkage applied to the rows of $\hat{V}\hat{S}$, one of these clusterings perfectly recovers the true communities of the type vectors $\{\tau_p\}$.
\end{theorem}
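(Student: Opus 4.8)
The plan is to establish the two claims of the theorem in sequence. First I would prove the row-wise (two-to-infinity) bound on $\hat{V}\hat{S} - VSW$, and then use it to argue that complete-linkage hierarchical clustering recovers the true type-vector communities at some stage.

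For the first part, the strategy is a standard (but here, technically delicate) perturbation argument carried out in the $2\rightarrow\infty$ norm. I would proceed in the following steps. (i) Using Lemma~\ref{lem:eigtrapping}, under Assumption~\ref{assump:information} (which is far stronger than \eqref{eq:bigdelta}), the $d$ ``signal'' eigenvalues of $\hol{RR^\top}$ are well-separated from the block $-\mu_r \pm b$ and from each other up to the perturbation scale, so $\hat U$, the span of the selected eigenvectors, is well-defined and close to $\tilde U$ (equivalently to $U$, since $\mathrm{span}(\tilde U) = \mathrm{span}(U)$). (ii) Bound $\|\hol{RR^\top} - \hol{\EE(RR^\top)}\|$ and, more importantly, obtain a row-wise control: this is where the dependence within interactions forces a departure from the usual adjacency-matrix arguments, and one must bound the columns of $R - \Gamma$ and the fluctuations of $RR^\top$ entrywise and in operator norm using concentration for sampling-without-replacement-type statistics (a matrix Bernstein or matrix Rosenthal inequality applied to the independent columns of $R$, together with a separate handling of the removed diagonal). (iii) Apply a Davis--Kahan / eigenvector-perturbation bound, then upgrade it to a $2\rightarrow\infty$ bound for $\hat U$ via a leave-one-out or entrywise eigenvector perturbation argument (e.g., the ``$\hat U = $ self-consistent expansion'' trick: $\hat U \approx \hol{RR^\top}\tilde U \tilde\Lambda^{-1} + \text{(rotation)}$, and control the rows of the first term). (iv) Relate $\hat V \hat S$ to $VSW$ via the identity $\hat U^\top R = \hat X \hat S \hat V^\top$ and the corresponding population identity $U^\top \Gamma = \ldots$, writing $\hat V\hat S - VSW = (\hat U^\top R)^\top \hat X - (U^\top\Gamma)^\top X W'$ up to orthogonal alignment, and bounding this in $2\rightarrow\infty$ using the row-wise closeness of $\hat U$ to $U$, the operator-norm closeness of $R$ to $\Gamma$, and the condition-number factors $\kappa$. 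Tracking the powers of $\kappa$, $k_{\mathrm{max}}$, $\bar k$, $d$ through these steps is what produces the exponents in the stated bound. The main obstacle is step (iii): obtaining a genuinely row-wise eigenvector bound for $\hat U$ despite the within-interaction dependence in $R$, since the usual leave-one-out decoupling arguments rely on independence of the entries of a row, which fails here; one must instead leave out whole interactions (columns of $R$), exploiting their independence, and carefully account for the fact that removing a column perturbs many entries of $RR^\top$ at once.

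For the second part, I would argue as follows. On the stated high-probability event, each row of $\hat V\hat S$ lies within $1/(10\tilde c\sqrt n)$ of the corresponding row of $VSW$. Since $SV_p = SV_q$ iff $\tau_p = \tau_q$, the rows of $VSW$ take exactly one value per distinct type vector; I would show these ``centers'' are separated by at least some gap $g$ (a quantitative lower bound on $\|SV_p - SV_q\|$ when $\tau_p\neq\tau_q$, obtainable from Lemma~\ref{lem:eigenH} and Assumption~\ref{assump:community} — this is where the $1/(10\tilde c)$ constant is calibrated, ensuring $2\cdot\frac{1}{10\tilde c\sqrt n} < g/2$ or similar). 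Consequently the rows of $\hat V\hat S$ form well-separated ``clouds,'' one per true cluster, each of diameter at most $2/(10\tilde c\sqrt n)$, with between-cloud distances bounded below by $g - 2/(10\tilde c \sqrt n)$. Then I would invoke the standard structural property of complete-linkage hierarchical clustering: as long as the maximum within-cloud distance is strictly smaller than the minimum between-cloud distance, the dendrogram produced by the algorithm contains, at the merge level equal to the within-cloud diameter bound, exactly the partition into these clouds — because complete linkage never merges two sets whose union has diameter exceeding the current threshold, so all intra-cloud merges happen before any inter-cloud merge. Hence one clustering in the returned sequence is exactly the true community partition of $\{\tau_p\}$. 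The only subtlety here is making the separation bound $g$ explicit and checking it dominates the perturbation with the given constant; this is a short computation rather than a real obstacle.
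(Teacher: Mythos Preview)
Your clustering argument in the second part is essentially the paper's: one lower-bounds $\|SV_p - SV_q\|$ for $\tau_p\neq\tau_q$ via $\|\sqrt{B}(\tau_p-\tau_q)\|$ and Assumption~\ref{assump:community}, compares this gap to twice the $2\to\infty$ perturbation, and observes that complete linkage must finish all intra-cluster merges before attempting any inter-cluster merge.

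For the first part, however, you are missing the paper's key device and, as a result, proposing a harder route than needed. The central observation is Lemma~\ref{lem:identical}: because the class-wise column sums of $R$ are deterministic given $\mathcal T$, one has the exact identity $U^\top(R-\Gamma)=\tilde U^\top(R-\Gamma)=0$. This kills the cross term $V^\top(R-\Gamma)^\top UU^\top\hat U\hat X$ in the intertwining expansion (Lemma~\ref{lem:intertwining}) and, more importantly, lets the paper obtain the $2\to\infty$ bound on $\hat V$ (Theorem~\ref{thm:unscaled2inf}) using only the \emph{operator-norm} Davis--Kahan bound on $\hat U-\tilde U W_U^*$. No $2\to\infty$ control of $\hat U$ is ever used: the row-wise smallness of $e_p^\top(\hat V-VW_V^*)$ comes from writing $W_V^*(\hat V-VW_V^*)^\top e_p$ as $S^{-1}(\hat U(W_U^*)^\top O^*-U)^\top Re_p$ plus an intertwining remainder, then bounding the first piece via the deterministic $\|Re_p\|\le\sqrt{k_{\max}}$ and the spectral-norm subspace error, and the second via Lemmas~\ref{lem:intertwining} and~\ref{lem:intertwininginverses}, closed by a short self-bounding step. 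The decomposition $\hat V\hat S-VSW=(\hat V-VW_V^*)\hat S+V(W_V^*\hat S-SW)$ then gives the scaled bound.

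Your step (iii), a leave-one-out to get $\|\hat U-UW'\|_{2\to\infty}$, is therefore unnecessary. It is also not clear it would work as you describe: rows of $\hat U$ are indexed by \emph{nodes}, while the independence in the model is across \emph{interactions} (columns of $R$). Leaving out an interaction removes one summand from $RR^\top=\sum_p R_{*p}R_{*p}^\top$ but does not decouple any single node-row from the noise in the way standard entrywise leave-one-out arguments require; and leaving out a node does not buy independence, since $R_{ip}$ and $R_{jp}$ are dependent for $i\neq j$ within a column. So the ``main obstacle'' you identify is both avoidable via the paper's identity and, by your own proposed fix, not obviously surmountable.
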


\begin{algorithm}[t]
\caption{Hierarchical clustering with complete linkage}\label{alg:hc}
\begin{algorithmic}
\Require Points $x_1,\ldots,x_n\in \RR^d$
\Ensure Nested clusterings $\{\{C_j^{(i)}\}_{j=1}^{n_i}\}_{i=0}^{n-1}$
\State $C_j^{(0)}\gets \{j\}$ for $1\leq j\leq n$, $n_0\gets n$.
\For{$i=1,\ldots,n-1$}
    \State For $1\leq k<\ell\leq n_{i-1}$, compute
        $$d_{k,\ell}^{(i-1)}=\max\left\{\|x_p-x_q\|:p\in C_k^{(i-1)}, q\in C_\ell^{(i-1)}\right\}.$$
    \State For the pair $(k^*,\ell^*)$ minimizing $d_{k,\ell}^{(i-1)}$, set $C_1^{(i)}\gets C_{k^*}^{(i-1)}\cup C_{\ell^*}^{(i-1)}.$
    \State Set the remaining clusters so that $\{C_j^{(i)}\}_{j=2}^{n_{i-1}-1}=\{C_j^{(i-1)}: 1\leq j\leq n_{i-1}, j\neq k^*,\ell^*\}.$
    \State $n_i \gets n-i$
\EndFor
\end{algorithmic}
\end{algorithm}

\section{Simulations}
\label{sec:simulation}

We verify our theoretical results in simulated interaction hypergraphs. We generated hypergraphs with $m=999\times \{3^0,3^1,3^2,3^3,3^4,3^5\}$ interactions and $n=10\times \{2^0,2^1,2^2,2^3,2^4,2^5\}$ nodes, for a total of $6\times 6=36$ combinations. In all experiments, we suppose that nodes belong to $d=2$ classes, so there are $s=3$ basic interaction types, i.e. $2$ pure and $1$ mixed basic types of interactions. For simplicity, we set $n_r=n/2, r=1,2$ and the basic type vectors as
$$\bar\tau_{p}=
\begin{cases}
(1,0)&\text{if }0\leq p\leq m/3,\\
(0,1)&\text{if }m/3\leq p\leq 2m/3,\\
(1,1)&\text{if }2m/3\leq p\leq m.
\end{cases}$$
In other words, nodes and interactions are evenly allocated to each class and basic type. For each interaction $p=1,\dots,m$, we sample $k_p$ as follows: For pure interactions, $k_p\geq 2=:k_{\mathrm{min}}$, while for mixed interactions, $k_p \geq \sum_r \bar \tau_{rp}$, where this is the basic type vector for the interaction $p$, so we can sample at least one node from each node class represented in the interaction. Since $d=2$, this means that all interactions have $k_p\geq 2$. The number of nodes involved in the $p$-th interaction follows the distribution $k_{\mathrm{min}}+\mathrm{Binomial}(k_{\mathrm{max}}-k_{\mathrm{min}},\alpha=0.4)$. In particular, this means that $\EE[k_p]=0.6 k_{\mathrm{min}}+0.4 k_{\mathrm{max}}.$ In settings where mixed interactions $k_p\geq k_{\mathrm{min}}^{\mathrm{mixed}}\neq k_{\mathrm{min}}^{\mathrm{pure}}$, one could consider different distributions of the interaction sizes for the pure and mixed cases, but this is not necessary in the present setting.

We consider two settings, one with growing $k_{\mathrm{max}}$ and one with fixed $k_{\mathrm{max}}$. For growing $k_{\mathrm{max}}$, the maximum interaction size is set to be $k_{\mathrm{max}}=n/d$, so $\EE[k_p]= 1.2+0.2 n$. For fixed $k_{\mathrm{max}}$, we set $k_{\mathrm{max}}=5$, so $\EE[k_p]=3.2$. An example of the right singular vectors for the case $m=999, n=40,$ and growing $k_{\mathrm{max}}$ (so $k_{\mathrm{max}}=20$) can be seen in Figure~\ref{fig:IH-sim-clus}.

\begin{figure}[h]
\centering
\begin{subfigure}{0.49\textwidth}
    \includegraphics[width=\textwidth]{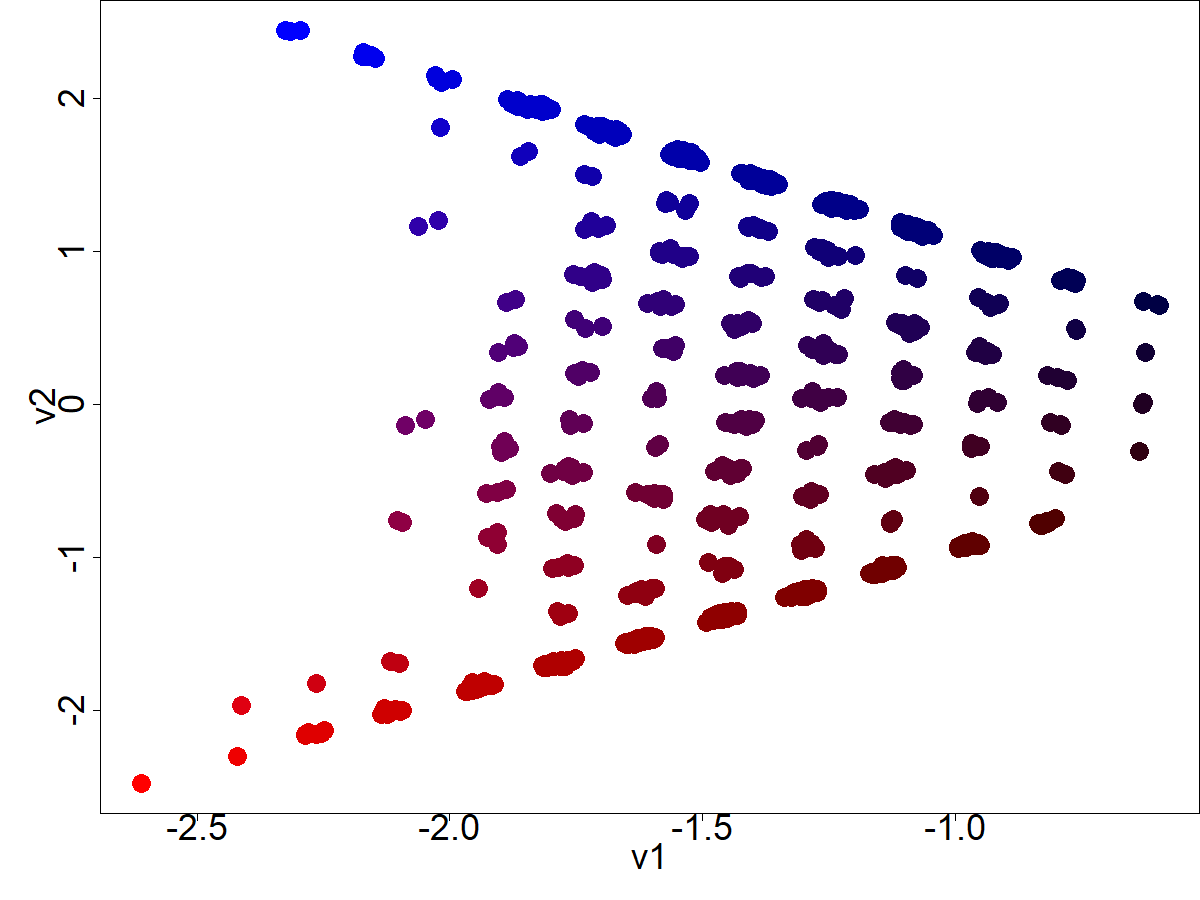}
    \caption{Scatter plot of first two columns of $\hat{V}\hat{S}$.}
\end{subfigure}
\hfill
\begin{subfigure}{0.49\textwidth}
    \includegraphics[width=\textwidth]{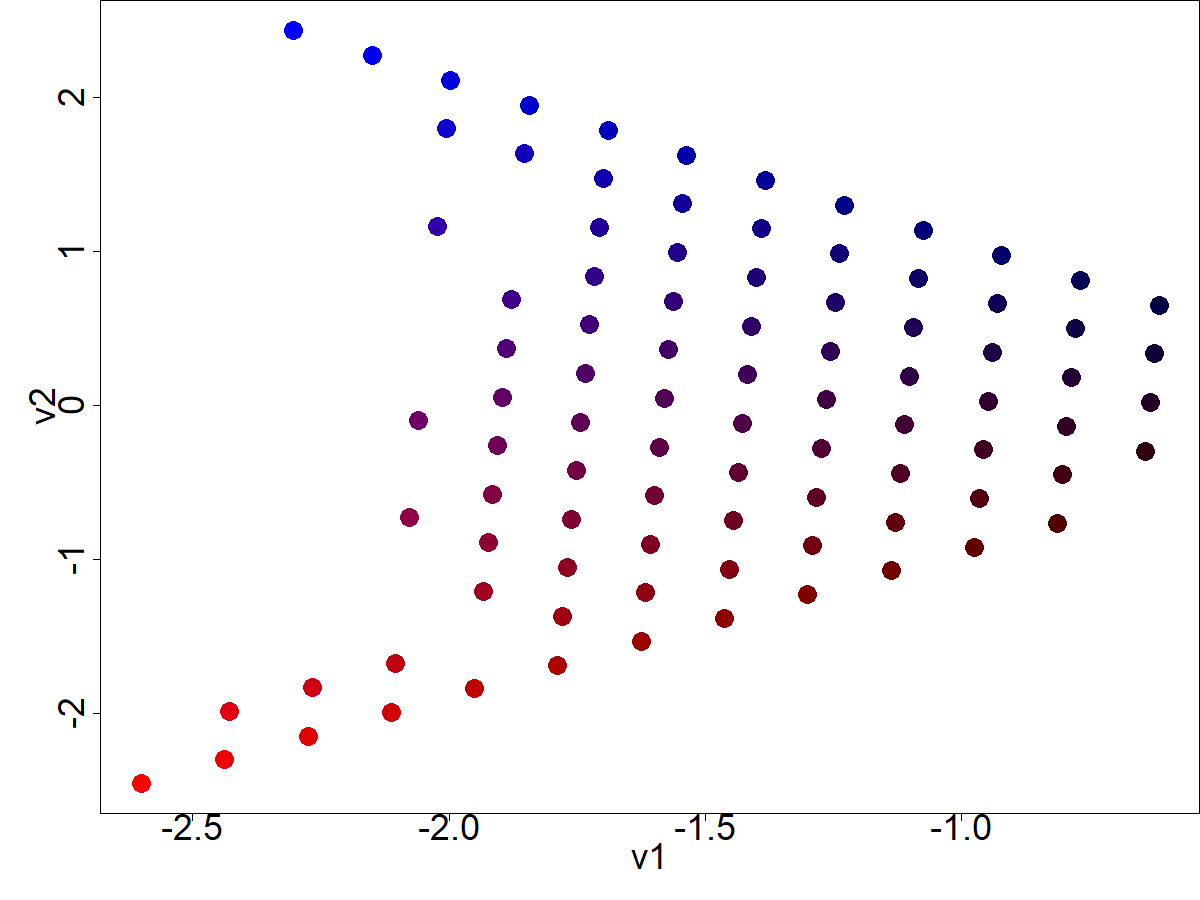}
    \caption{Scatter plot of first two columns of $VS$.}
\end{subfigure}
\caption{Scatter plots of estimated and theoretical right singular vectors with $m=999,n=40, k_{\mathrm{max}}=20$, colored by the type vectors of the interactions.}
\label{fig:IH-sim-clus}
\end{figure}

We visualize the decrease in the quantity $\|\hat{V}\hat{S}-VSW\|_{2\rightarrow\infty}$ as $m$ grows in Figure~\ref{fig:vhatminusv}. Since we show these results in log-log scale, monomials in $m$ appear as linear functions, enabling easier comparison of the observed rates with our empirical ones. 

\begin{figure}[h]
\centering
\begin{subfigure}{0.49\textwidth}
\includegraphics[width=\textwidth]{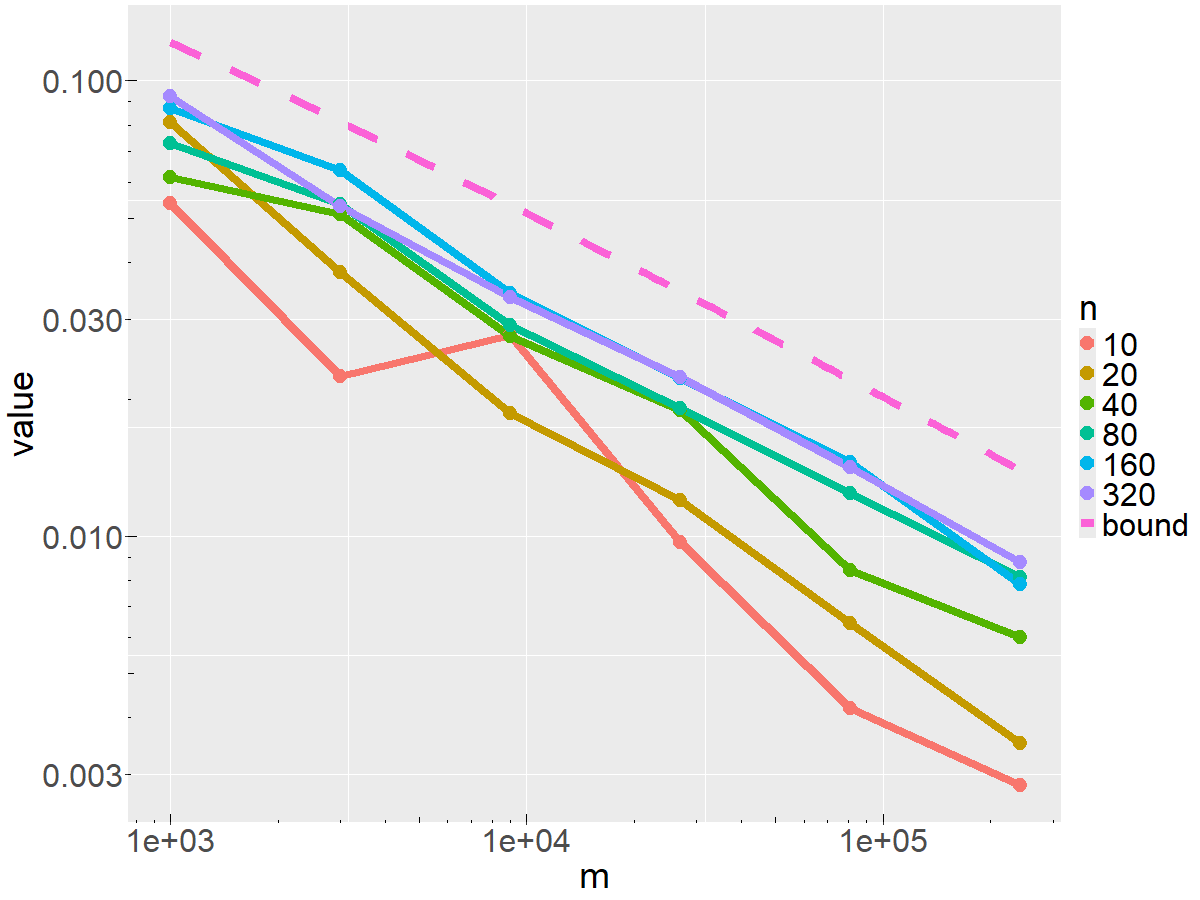} 
\caption{Growing $k_{\mathrm{max}}=n/2$.}
\end{subfigure}
\hfill
\begin{subfigure}{0.49\textwidth}
\includegraphics[width=\textwidth]{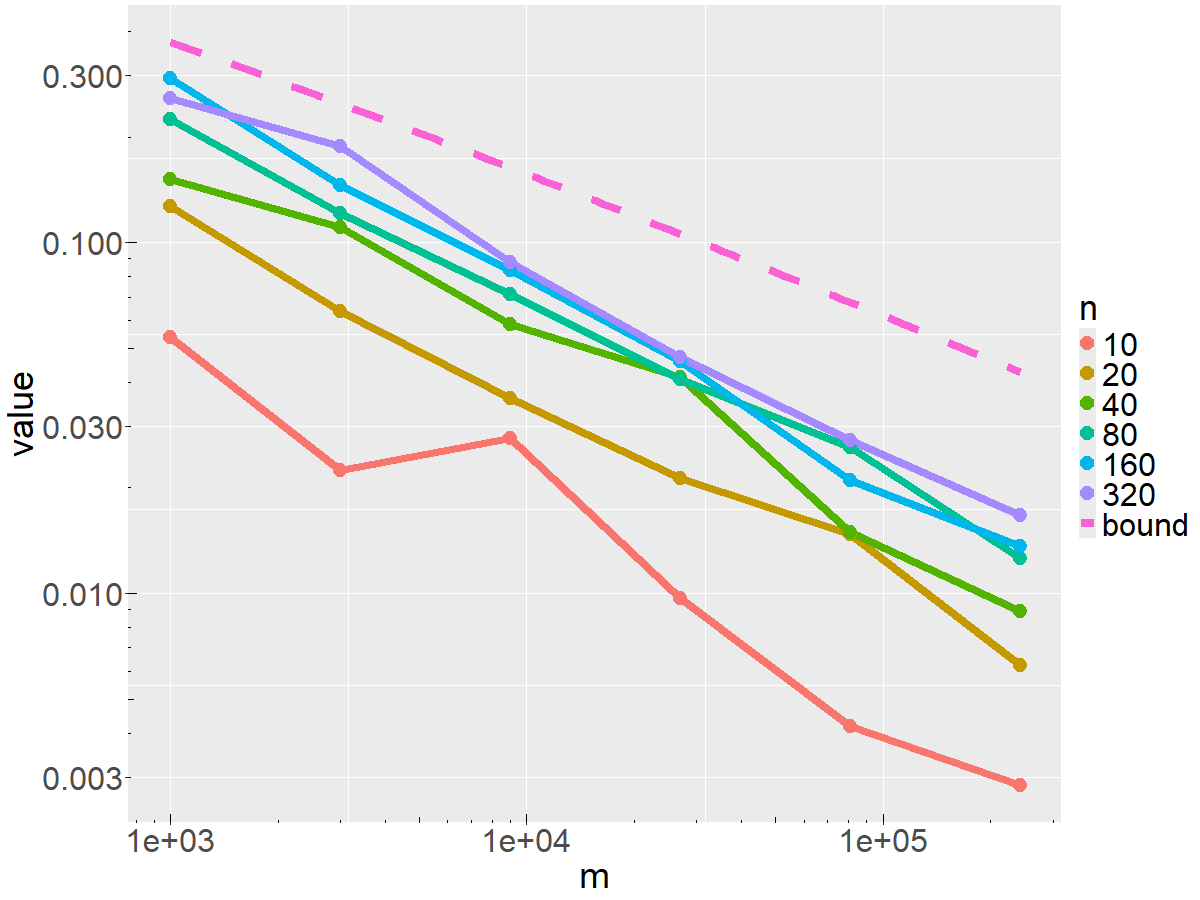}
\caption{Fixed $k_{\mathrm{max}}=5$.}
\end{subfigure}
\caption{Decrease of $\|\hat{V}\hat{S}-VSW\|_{2\rightarrow\infty}$ as $m$ grows, for various values of $n$. Both panels are shown in log-log scale. The reference curves are given by the bound in Theorem~\ref{thm:scaledv}, with intercepts chosen to place them above the empirical curves, taking into account any unknown constants.}
\label{fig:vhatminusv}
\end{figure}

In Figure~\ref{fig:ARItables}, we show the change in ARI as $m$ and $n$ vary, for the cases of growing $k_{\mathrm{max}}$ or fixed, as before. For clustering, since the number of clusters is very large, $K$-means clustering is unstable without very careful initialization or many restarts. As such, in both our theoretical results and in these simulations, we use hierarchical clustering on the estimated interaction latent positions with complete linkage. We find that the true clusters are recovered remarkably well across the range of parameters that we consider, in keeping with our theoretical results. In particular, the worst behavior in the clustering performance is observed in the case of fixed $k_{\mathrm{max}}$, large $n$, and small $m$. This is most likely caused by the reduced information provided by smaller interactions, which is not yet compensated by the reduction in variance. For large $n$, as $m$ grows, we see that the large variance in the case of growing $k_{\mathrm{max}}$ still causes a small number of errors in the clustering, but the additional signal strength dominates the variance in the case of fixed $k_{\mathrm{max}}$.

\begin{figure}
\centering
\begin{subfigure}{0.8\textwidth}
\centering
\begin{tabular}{c|c|c|c|c|c|c|}
$n$, $m$ & 999 & 2997 & 8991 & 26973 & 80919 & 242757\\
\hline
10 & 1.000 & 1.000 & 1.000 & 1.000 & 1.000 & 1.000\\
20 & 1.000 & 1.000 & 1.000 & 1.000 & 1.000 & 1.000\\
40 & 0.997 & 0.994 & 0.994 & 0.994 & 0.994 & 0.994\\
80 & 0.932 & 0.997 & 0.996 & 0.996 & 0.996 & 0.996\\
160 & 0.819 & 0.944 & 0.996 & 0.997 & 0.997 & 0.997\\
320 & 0.607 & 0.741 & 0.978 & 0.998 & 0.998 & 0.998\\
\hline
\end{tabular}
\caption{Growing $k_{\mathrm{max}}=n/2$.}
\end{subfigure}
\vskip6pt
\begin{subfigure}{0.8\textwidth}
\centering
\begin{tabular}{c|c|c|c|c|c|c|}
$n$, $m$ & 999 & 2997 & 8991 & 26973 & 80919 & 242757\\
\hline
10 & 1.000 & 1.000 & 1.000 & 1.000 & 1.000 & 1.000\\
20 & 1.000 & 1.000 & 1.000 & 1.000 & 1.000 & 1.000\\
40 & 0.972 & 0.964 & 1.000 & 1.000 & 1.000 & 1.000\\
80 & 0.745 & 0.963 & 1.000 & 1.000 & 1.000 & 1.000\\
160 & 0.407 & 0.702 & 0.944 & 1.000 & 1.000 & 1.000\\
320 & 0.361 & 0.506 & 0.812 & 1.000 & 1.000 & 1.000\\
\hline
\end{tabular}
\caption{Fixed $k_{\mathrm{max}}=5.$}
\end{subfigure}
\caption{Comparison of clustering from estimated interaction latent positions with the true type vectors.}
\label{fig:ARItables}
\end{figure}


\section{Real-World Networks}
\label{sec:realdata}

We apply our approach to interaction hypergraphs obtained from email communications and brain synapses in larval \emph{Drosophila}. The Enron email network \citep{amburg2021planted} records the email communication between employees of the Enron company and includes $m=15653$ emails/ interactions and $n=4423$ email users/ nodes. Each email is considered an interaction where the email users involved, either senders or receivers, are considered the nodes comprising the interaction.

We apply our method to obtain a three-dimensional representation of the email users in the Enron dataset. First, notable nodes that appear at the extremes of the scatter plots of $\hat V$ are identified and found to form clusters. In Figure \ref{Enron-v-3d}, we show selections of the estimated vectors, colored by the numbers of important members from each of the three groups. The colors of the points are encoded by the standard red-green-blue (RGB), where red, green and blue represent the involvement of nodes from clusters 1, 2 and 3 in an interaction, respectively. While we show two particular sets of singular vectors, in practice, all $5$ or more of the singular vectors can be used for any downstream inference procedure.

\begin{figure}[t]
\centering
\begin{subfigure}{0.49\textwidth}
    \includegraphics[width=\textwidth]{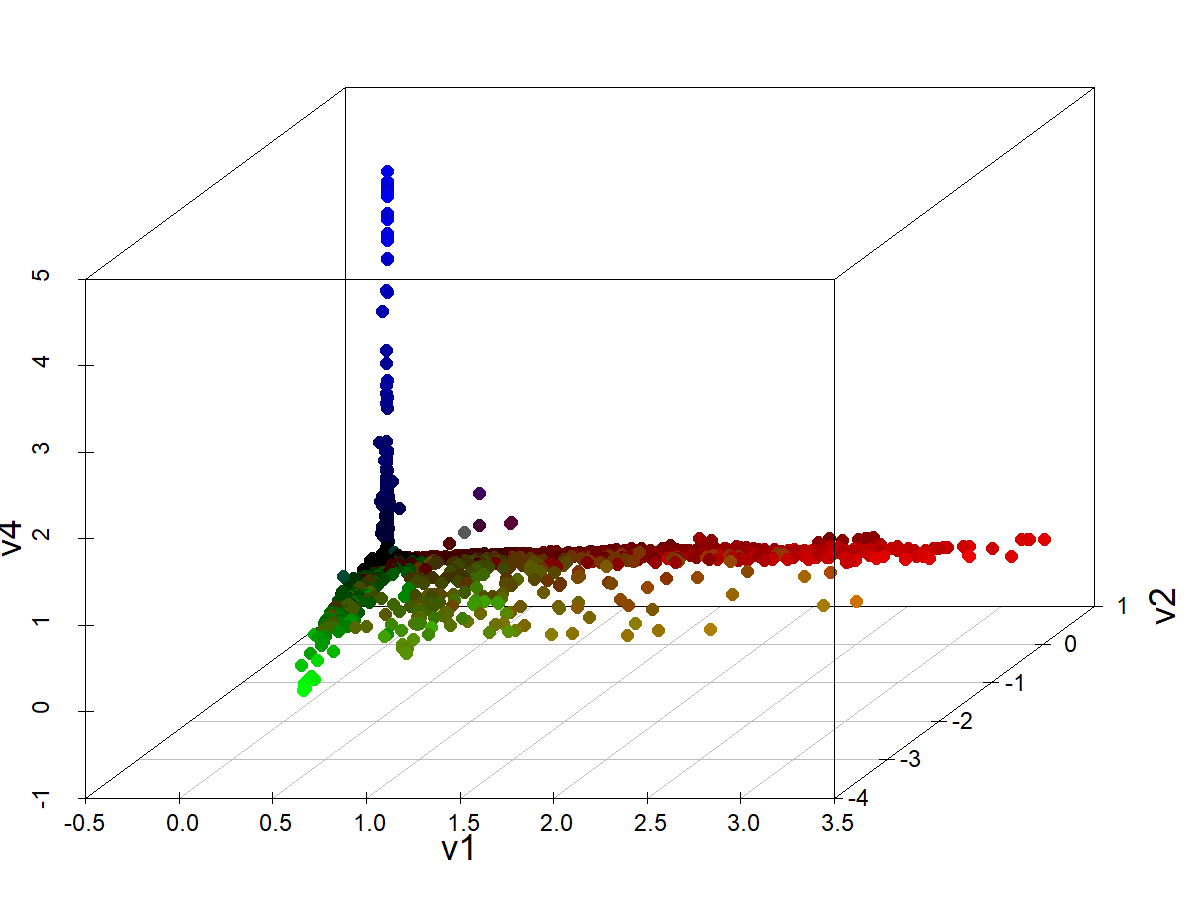}
    \caption{Dimensions $(\hat{V}\hat{S})_1, (\hat{V}\hat{S})_2, (\hat{V}\hat{S})_4$}
\end{subfigure}
\hfill
\begin{subfigure}{0.49\textwidth}
    \includegraphics[width=\textwidth]{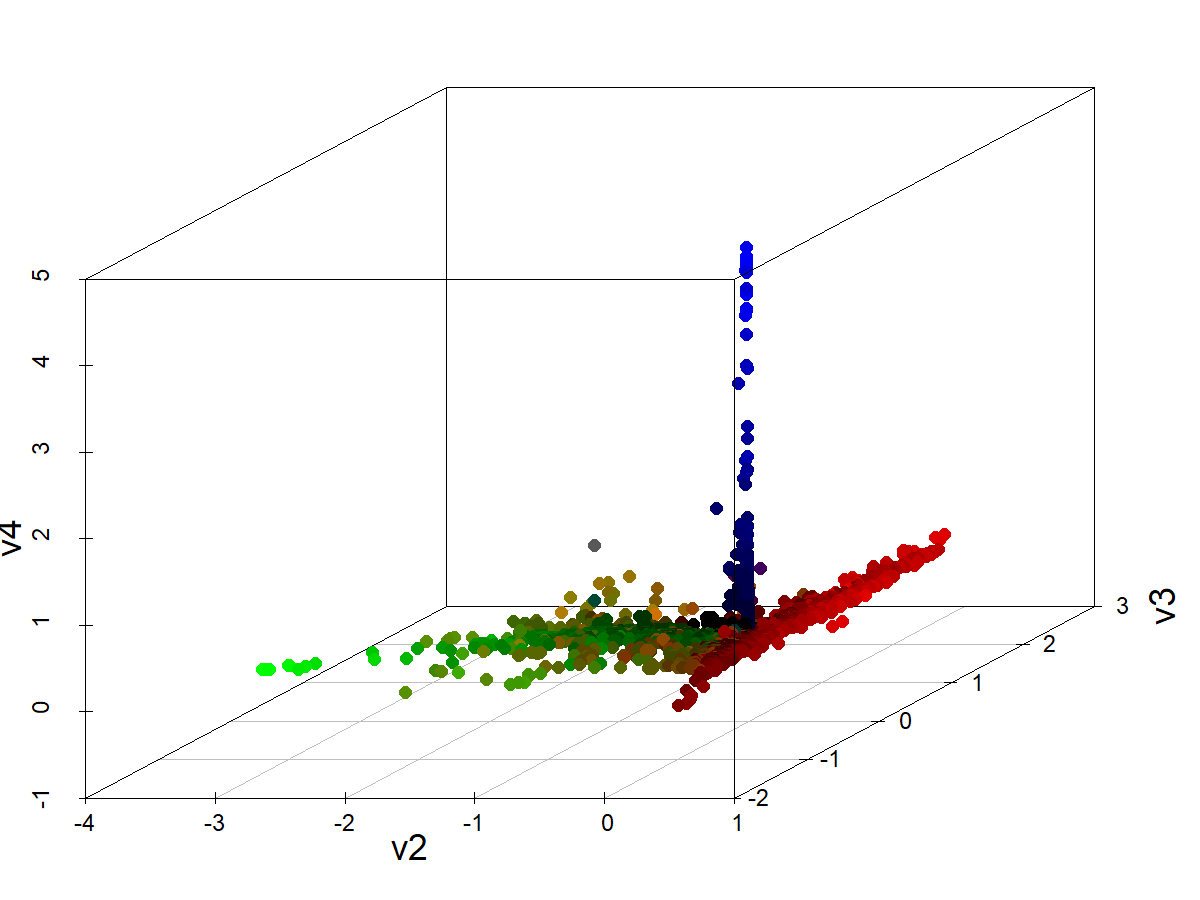}
    \caption{Dimensions $(\hat{V}\hat{S})_2, (\hat{V}\hat{S})_3, (\hat{V}\hat{S})_4$}
\end{subfigure}
\caption{Visualization of the interaction latent position embeddings for the Enron dataset. A small set of influential users are labeled red, blue, and green, and emails are colored by the number of these influential users included.}
\label{Enron-v-3d}
\end{figure}

The Drosophila brain network \citep{winding2023connectome} records the brain connectome of the Drosophila larva. After selecting a spatial region as shown in the left panel of Figure~\ref{Droso-v-3d}, $m=3463$  synapses/ interactions and $n=484$ neurons/ nodes are included. Each synapse is considered an interaction, and the neurons involved, both presynaptic and postsynaptic, are considered the nodes comprising the interaction. We then plot the estimated interaction latent positions from this selected set of interactions, and color the points by the participation of neurons belonging to the left or right brain hemisphere. This reveals an interesting core-periphery structure, with branches of the interaction latent position plot corresponding to groups of interactions mostly belonging to the left brain, mostly belonging to the right brain, and well-mixed interactions, all of which appear to be playing distinct roles in this brain region.

\begin{figure}[t]
\centering
\begin{subfigure}{0.49\textwidth}
    \includegraphics[width=\textwidth]{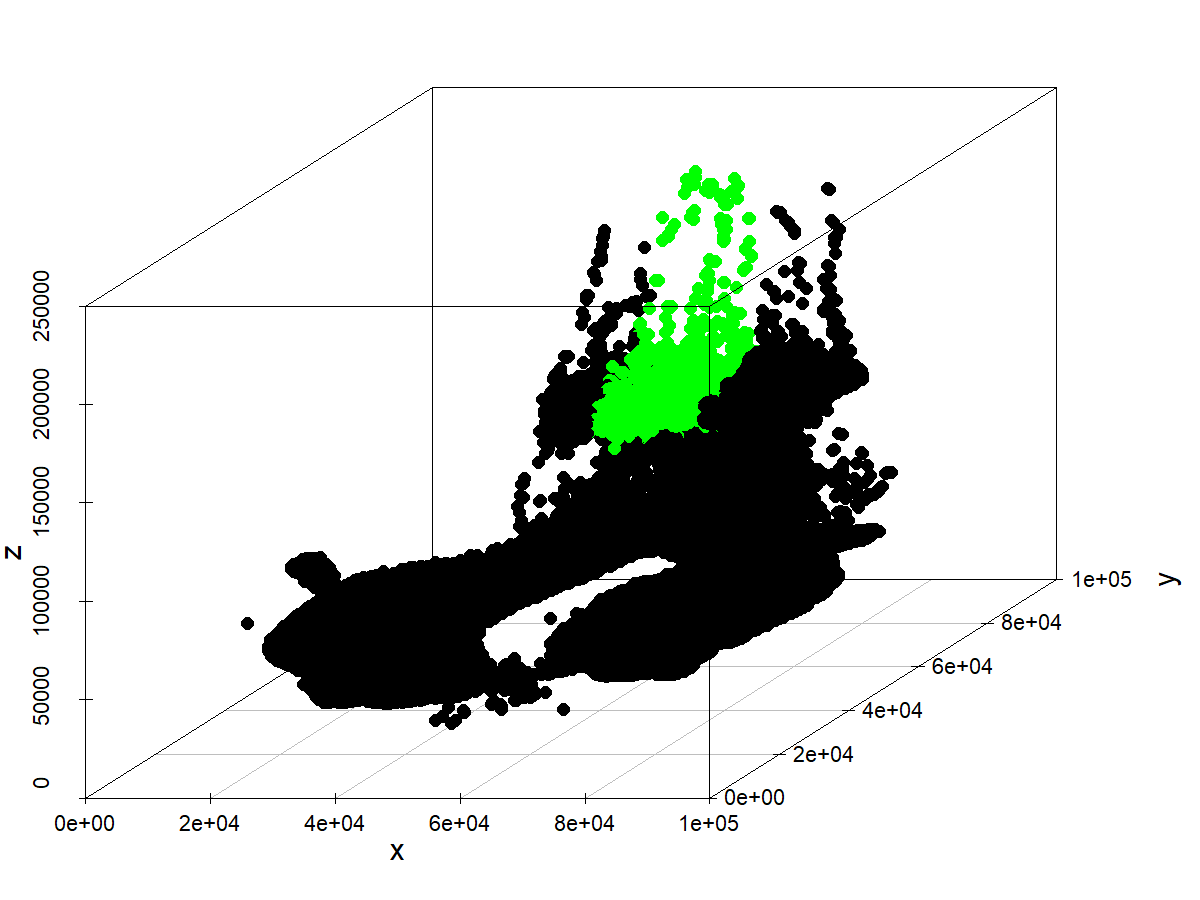}
    \caption{Spatial locations of all synapses. Synapses falling into the selected region are highlighted in green.}
\end{subfigure}
\hfill
\begin{subfigure}{0.49\textwidth}
    \includegraphics[width=\textwidth]{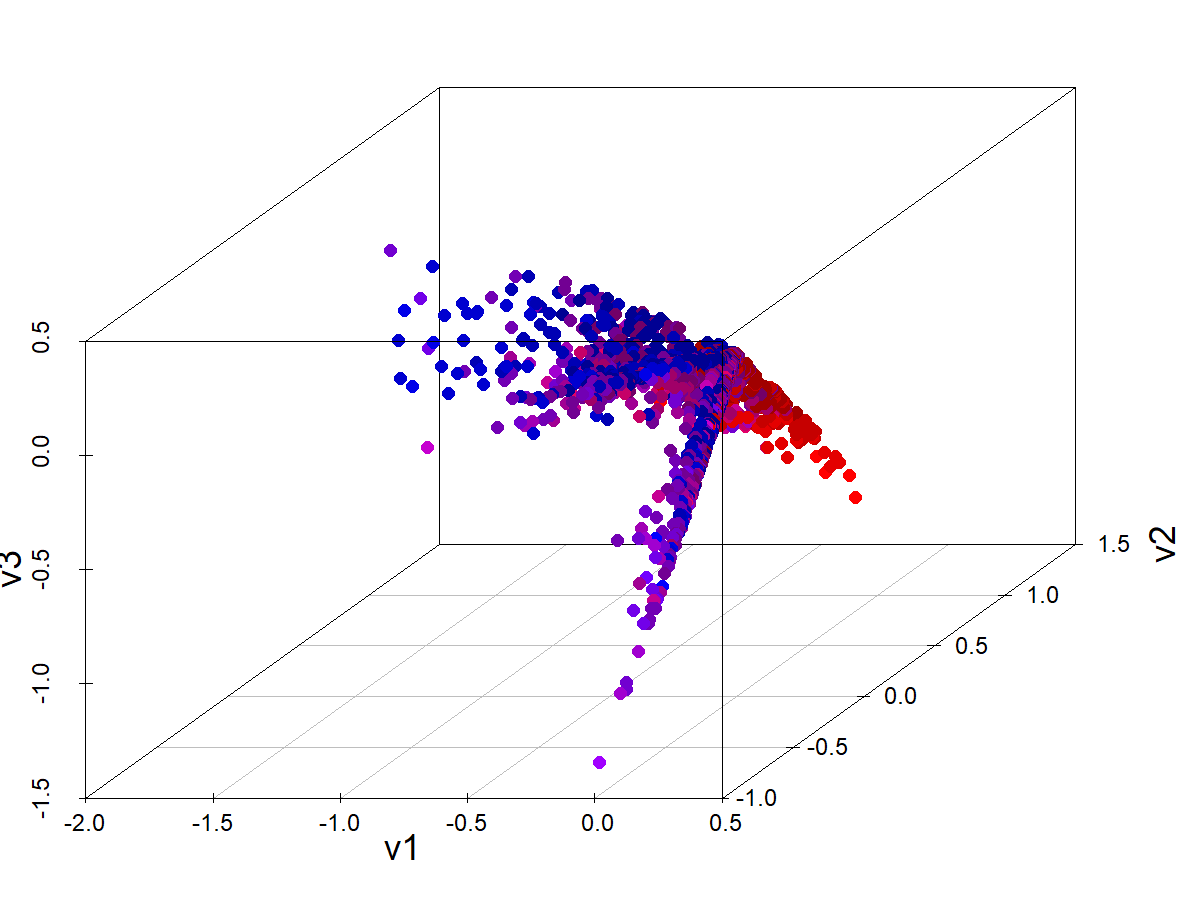}
    \caption{Dimensions $(\hat{V}\hat{S})_1, (\hat{V}\hat{S})_2, (\hat{V}\hat{S)}_3$ of our estimated interaction latent position embeddings.}
\end{subfigure}
\caption{Visualizations of the \emph{Drosophila} larva brain dataset. For the right panel, blue interactions contain only neurons from the left brain hemisphere, while red interactions contain only neurons from the right brain hemisphere. Mixed interactions are shaded by the contributions of left and right hemisphere neurons.}
\label{Droso-v-3d}
\end{figure}

\section{Discussion}
\label{sec:discussion}
We have considered statistical models for random hypergraphs whose sampling units are the \emph{interactions}, rather than the entities in the network. We argue that these models better reflect the sampling mechanisms used for observing real-world hypergraphs, while retaining statistical interpretability and theoretical tractability. In a blockmodel-like setting, we have shown under a mild signal strength condition that we can identify a subspace that captures node communities. Using this projection leads to latent position estimates for the interactions, revealing more detailed structure about the hypergraph than the weighted adjacency matrix alone. When the signal strength is sufficient, these interaction latent position estimates concentrate tightly around the type vectors (up to an orthogonal transformation). We then demonstrated these results through simulation, and tested the methodology with real data from the Enron email network and \emph{Drosophila} larva brain. Future work can extend these results to more general settings, for example, the (G)RDPH or LPH models. Refined methodology may improve the signal strength condition that is required for tight concentration of the interaction latent positions. Finally, incorporating interaction covariate information with these latent positions may enable stronger inference about these networks.

\bibliography{bibliography}

\appendix
\section{Proofs}
\label{sec:proofs}

\begin{lemma}[Spectral norm bound for $R-\Gamma$]
\label{lem:rgamma}
    Let $R\in \{0,1\}^{n\times m}, n\leq m$ be the incidence matrix of a Hyper-SBM, and let $\Gamma = \EE(R|\mathcal T)$. Suppose Assumption~\ref{assump:community} holds, and $m\bar{k} \geq n k_{\mathrm{max}}$. Then with overwhelming probability,
    $$\norm{R-\Gamma}\leq O\left(\sqrt{\frac{m\log(m)\bar k}{n}}\right).$$
\end{lemma}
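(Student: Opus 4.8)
The plan is to exhibit $R-\Gamma=\sum_{p=1}^m (R_{\cdot p}-\Gamma_{\cdot p})e_p^\top$, where $e_p$ is the $p$-th coordinate vector in $\RR^m$, as a sum of $m$ independent, mean-zero, rank-one random matrices --- independence being exactly the independence of interactions --- and then invoke the rectangular matrix Bernstein inequality. Two structural facts will be used repeatedly: the columns of $R$ are independent; and within a fixed column $p$, the coordinates indexed by class $r$ are the indicator of a uniformly random $\tau_{rp}$-subset of the $n_r$ nodes in that class, with these sub-vectors independent across the $d$ classes.

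First I would establish the two inputs to matrix Bernstein. For the uniform bound, $\norm{(R_{\cdot p}-\Gamma_{\cdot p})e_p^\top}=\norm{R_{\cdot p}-\Gamma_{\cdot p}}$, and since $R_{\cdot p}$ has $\tau_{rp}$ ones among its class-$r$ entries while $\Gamma_{ip}=\tau_{rp}/n_r$ there, a direct computation gives the deterministic identity $\norm{R_{\cdot p}-\Gamma_{\cdot p}}^2=\sum_{r}\tau_{rp}(1-\tau_{rp}/n_r)\le \sum_r \tau_{rp}=k_p\le k_{\mathrm{max}}$, so we may take $L=\sqrt{k_{\mathrm{max}}}$. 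For the matrix variance, $\EE[(R_{\cdot p}-\Gamma_{\cdot p})(R_{\cdot p}-\Gamma_{\cdot p})^\top]=\operatorname{Cov}(R_{\cdot p})$ is block-diagonal over the classes (independence across classes), with the class-$r$ block equal to $c_p^{(r)}(I_{n_r}-J_{n_r}/n_r)$ where $c_p^{(r)}=\tau_{rp}(n_r-\tau_{rp})/(n_r(n_r-1))$; the projection $I_{n_r}-J_{n_r}/n_r$ and the factor $(\tau_{rp}-1)/(n_r-1)$ hidden in $c_p^{(r)}$ are precisely the footprint of sampling without replacement. Summing over $p$ and using Assumption~\ref{assump:community} together with $\sum_r\bar k_r=\bar k$ gives $\norm{\sum_p \operatorname{Cov}(R_{\cdot p})}=\max_r\sum_p c_p^{(r)}\le \max_r\frac{1}{n_r-1}\sum_p\tau_{rp}\le \frac{m\bar k}{\tilde c n-1}$. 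The transposed variance is $\norm{\sum_p \EE\norm{R_{\cdot p}-\Gamma_{\cdot p}}^2 e_pe_p^\top}=\max_p\EE\norm{R_{\cdot p}-\Gamma_{\cdot p}}^2\le k_{\mathrm{max}}$, which is dominated by the previous quantity under the hypothesis $m\bar k\ge n k_{\mathrm{max}}$, so the matrix variance parameter is $\nu=O(m\bar k/(\tilde c n))$.

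Plugging $\nu=O(m\bar k/(\tilde c n))$ and $L=\sqrt{k_{\mathrm{max}}}$ into matrix Bernstein (using $n\le m$, so $\log(n+m)\asymp\log m$) yields, with overwhelming probability, $\norm{R-\Gamma}\lesssim \sqrt{\nu\log m}+L\log m\lesssim \sqrt{m\log(m)\bar k/(\tilde c n)}+\sqrt{k_{\mathrm{max}}}\log m$; the first term is the asserted bound. I expect the main obstacle to be absorbing the additive $\sqrt{k_{\mathrm{max}}}\log m$ into the leading term --- this is exactly where the information-type hypothesis $m\bar k\ge n k_{\mathrm{max}}$ does its work, and if a bare factor of $\log m$ turns out to be lacking one can recover it either by a correspondingly mild strengthening, or, more cleanly, by replacing the crude per-column bound $\sqrt{k_{\mathrm{max}}}$ with a sharper non-homogeneous matrix concentration estimate (of Bandeira--van Handel type), or by a negative-association argument comparing $R-\Gamma$ with the independent-entry matrix having the same mean, whose entries are bounded by $1$ rather than having columns of norm $\sqrt{k_{\mathrm{max}}}$. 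Apart from this tension between the heavy columns and the desired sub-Gaussian scale, the only genuinely delicate step is the exact without-replacement covariance of a column, which is the reason the variance is small; everything else is bookkeeping with Assumption~\ref{assump:community}.
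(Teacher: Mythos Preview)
Your proposal is essentially identical to the paper's proof: the same column-wise decomposition $R-\Gamma=\sum_p(R_{\cdot p}-\Gamma_{\cdot p})e_p^\top$, the same deterministic bound $\|R_{\cdot p}-\Gamma_{\cdot p}\|^2=\sum_r\tau_{rp}(n_r-\tau_{rp})/n_r\le k_{\max}$, the same block-diagonal without-replacement covariance yielding $\nu=O(m\bar k/(\tilde c n))$, and the same appeal to matrix Bernstein. The additive $\sqrt{k_{\max}}\log m$ term you worry about is not addressed by the paper either---it invokes Bernstein and asserts the sub-Gaussian rate directly under $m\bar k\ge nk_{\max}$---so your treatment is at least as careful.
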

\begin{proof}
    Let $R_{\ast p}$ and $\Gamma_{\ast p}$ denote the $p$-th columns of $R$ and $\Gamma$, respectively, and let $e_p\in\{0,1\}^m$ be the $p$th standard unit vector. Then we may write
    $$R -\Gamma=\sum_{p=1}^m(R_{\ast p}-\Gamma_{\ast p})e_p^\top,$$
    which is a sum of independent, mean-0 matrices. For any index $p$, we may bound these summands by
    \begin{align}
    \|[R_{\ast p}-\Gamma_{\ast p}]e_p^\top\|^2&=\|R_{\ast p}-\Gamma_{\ast p}\|^2=\sum_{r=1}^d\left[\tau_{rp}(1-\tau_{rp}/n_r)^2+(n_r-\tau_{rp})(\tau_{rp}/n_r)^2\right]\notag\\
    &=\sum_{r=1}^d \frac{\tau_{rp}(n_r-\tau_{rp})}{n_r}\leq k_{\mathrm{max}}.\label{eq:termbound}
    \end{align}
    Denote the matrix variance statistic
    \begin{align}
    v(R -\Gamma)&=\mathrm{max}\left\{\left\|\sum_{p=1}^m\EE\left[(R_{\ast p}-\Gamma_{\ast p})e_p^\top e_p(R_{\ast p}-\Gamma_{\ast p})^\top\right]\right\|,\left\|\sum_{p=1}^m\EE\left[e_p(R_{\ast p}-\Gamma_{\ast p})^\top(R_{\ast p}-\Gamma_{\ast p})e_p^\top\right]\right\|\right\}\notag\\
    &=\mathrm{max}\left\{\left\|\sum_{p=1}^m\EE\left[(R_{\ast p}-\Gamma_{\ast p})(R_{\ast p}-\Gamma_{\ast p})^\top\right]\right\|,\left\|\sum_{p=1}^m\EE\left[e_p(R_{\ast p}-\Gamma_{\ast p})^\top(R_{\ast p}-\Gamma_{\ast p})e_p^\top\right]\right\|\right\}.\label{eq:matrixvariance}
    \end{align}
    The matrix in the left quantity is given by
    \begin{align*}
    \EE\left[(R_{\ast p}-\Gamma_{\ast p})(R_{\ast p}-\Gamma_{\ast p})^\top\right]&=\bigoplus_{r=1}^d \left[\frac{\tau_{rp}(\tau_{rp}-1)}{n_r(n_r-1)}-\frac{\tau_{rp}^2}{n_r^2}\right]J_{n_r}+ \left[\frac{\tau_{rp}}{n_r}-\frac{\tau_{rp}(\tau_{rp}-1)}{n_r(n_r-1)}\right]I_{n_r}\\
    &=\bigoplus_{r=1}^d \frac{\tau_{rp}(n_r-\tau_{rp})}{n_r^2}\left[\frac{n_r}{n_r-1}I_{n_r}-\frac{1}{n_r-1}J_{n_r}\right],
    \end{align*}
    so
    $$
    \left\|\sum_{p=1}^m \EE\left[(R_{\ast p}-\Gamma_{\ast p})(R_{\ast p}-\Gamma_{\ast p})^\top\right]\right\|= \max_r \left[\sum_{p=1}^m\frac{\tau_{rp}(n_r-\tau_{rp})}{n_r^2}\right]\frac{n_r}{n_r-1}=\max_r \sum_{p=1}^m\frac{\tau_{rp}(n_r-\tau_{rp})}{n_r(n_r-1)}.
    $$
    We may further upper bound this quantity by $m\bar{k}/\tilde{c}n$, since $\frac{n_r-\tau_{rp}}{n_r-1}\leq 1$ for all $r,p$; $\max_r \sum_p \tau_{rp}\leq \sum_{r,p} \tau_{rp}=m\bar{k}$; and using the lower bound $n_r\geq \tilde{c}n$ afforded by Assumption~\ref{assump:community}. In sum,
    $$ \left\|\sum_{p=1}^m \EE\left[(R_{\ast p}-\Gamma_{\ast p})(R_{\ast p}-\Gamma_{\ast p})^\top\right]\right\| \leq O(m\bar{k}/n). $$
    To bound the right quantity of Equation~\ref{eq:matrixvariance}, we use the calculation in Equation~\ref{eq:termbound} to obtain
    $$
    \left\|\sum_{p=1}^m\EE\left[e_p(R_{\ast p}-\Gamma_{\ast p})^\top(R_{\ast p}-\Gamma_{\ast p})e_p^\top\right]\right\|=\|\mathrm{diag}(\EE[\norm{R_{\ast p}-\Gamma_{\ast p}}^2])\|=\max_p \|R_{\ast p}-\Gamma_{\ast p}\|^2\leq k_{\mathrm{max}}.
    $$
    Note that $\|R_{\ast p}-\Gamma_{\ast p}\|^2$ is not random, so the expectation has no effect on this quantity. This yields
    $$v(R-\Gamma)\leq \max\{m\bar{k}/n, k_{\mathrm{max}}\}= m\bar{k}/n,$$ 
    so by the Matrix Bernstein Inequality \cite{tropp2015}, and in light of Equation~\ref{eq:termbound}, we have
    $$\PP(\norm{R-\Gamma}\geq t)\leq(n+m)\,\mathrm{exp}\left(\frac{-t^2/2}{v(R-\Gamma)+\sqrt{k_{\mathrm{max}}}\,t/3}\right).$$
    Then with overwhelming probability as $m\rightarrow\infty$, we get
    $$
    \|R-\Gamma\|=O\left(\sqrt{\frac{m\log(m)\bar{k}}{n}}\right)
    $$
\end{proof}

Proof of Lemma~\ref{lem:eigenH}.
\begin{proof}
The $i,j$ entry of $\mathcal{H}(\EE(RR^\top))$ is given by
\begin{equation*}
    \left[\hol{\EE(RR^\top)}\right]_{ij}=
    \begin{cases}
        0&\text{when }i=j,\\
        \mu_r &\text{when }i\neq j,\text{ but both in class }r,\\
        \sum_{p=1}^m\frac{\tau_{rp}\tau_{sp}}{n_r n_s}&\text{when }i\neq j,\text{ and in different classes }r,s.
    \end{cases}
\end{equation*}
For $r\neq s$, we can see immediately that the $r,s$ block of this matrix is equal to 
$$
\frac{(\mathcal{T}\mathcal{T}^\top)_{r,s}}{n_r n_s} J_{n_r,n_s} = (ZB\Sigma_{\mathcal{T}}BZ^\top)_{r,s}
$$
as claimed. The $r,r$ diagonal block is given by 
$$
\mu_r (J_{n_r}-I_{n_r})= \mu_r (n_r-1)(J_{n_r}/n_r)-\mu_r(I_{n_r}-J_{n_r}/n_r).
$$
The first term equals
$$
\left[\sum_{p=1}^m \tau_{rp}(\tau_{rp}-1)\right] \frac{1}{n_r^2}J_{n_r} = \left[(\mathcal{T}\mathcal{T}^\top)_{rr}- (\mathcal{T}j)_r\right]\frac{1}{n_r^2} J_{n_r}=(ZB\Sigma_{\mathcal{T}}BZ^\top)_{rr},
$$
which completes the proof.
\end{proof}

\begin{lemma}[Spectral norm bound for $\mathcal{H}(RR^\top) -\mathcal{H}(\EE(RR^\top))$]
\label{lem:hollowconcentration}
With overwhelming probability as $m\rightarrow\infty$, $m\geq n$,
$$\|\mathcal{H}(RR^\top) - \mathcal{H}(\EE(RR^\top))\|=O(\sqrt{m\log(m)k_{\mathrm{max}}\bar k}).
$$
\end{lemma}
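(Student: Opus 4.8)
The plan is to exploit the fact that $RR^\top=\sum_{p=1}^m R_{\ast p}R_{\ast p}^\top$ is a sum of independent rank-one matrices, since the columns $R_{\ast p}$ are independent across interactions, and that the hollowing map $\mathcal{H}(M)=M-\mathrm{diag}(M)$ is linear and hence commutes with $\EE$. Consequently
$$
\mathcal{H}(RR^\top)-\mathcal{H}(\EE RR^\top)=\sum_{p=1}^m M_p,\qquad M_p:=\mathcal{H}(R_{\ast p}R_{\ast p}^\top)-\mathcal{H}\bigl(\EE[R_{\ast p}R_{\ast p}^\top]\bigr),
$$
a sum of independent, symmetric, mean-zero matrices. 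This decomposition is the main point: it sidesteps the within-column dependence of $R$ (which would obstruct an entrywise approach) by treating each interaction's contribution as a single summand, so that the Matrix Bernstein inequality applies directly, exactly as in Lemma~\ref{lem:rgamma}.

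Next I would record the two ingredients Bernstein requires. For the uniform bound on the summands, note $\|R_{\ast p}\|^2=|e_p|=k_p$, so $\|R_{\ast p}R_{\ast p}^\top\|=k_p\leq k_{\mathrm{max}}$, while $\mathrm{diag}(R_{\ast p}R_{\ast p}^\top)=\mathrm{diag}(R_{\ast p})$ has norm at most $1$; the same estimates hold after taking expectations, so $\|M_p\|\leq 2(k_p+1)\leq 2(k_{\mathrm{max}}+1)=:L$. For the matrix variance, since $M_p^2\succeq 0$ and $M_p^2\preceq \|M_p\|^2 I$,
$$
\Bigl\|\sum_{p=1}^m\EE M_p^2\Bigr\|\leq \sum_{p=1}^m\EE\|M_p\|^2\leq 4\sum_{p=1}^m (k_p+1)^2\leq 4(k_{\mathrm{max}}+1)\sum_{p=1}^m (k_p+1)=O(m k_{\mathrm{max}}\bar{k})=:v.
$$
If a sharper variance proxy were wanted, one can instead use the exact identity $\mathcal{H}(R_{\ast p}R_{\ast p}^\top)^2=(k_p-2)R_{\ast p}R_{\ast p}^\top+\mathrm{diag}(R_{\ast p})$ together with $\sum_p(k_p-2)\EE[R_{\ast p}R_{\ast p}^\top]\preceq k_{\mathrm{max}}\EE[RR^\top]$ and a norm bound on $\EE[RR^\top]$ of the kind appearing in the proof of Lemma~\ref{lem:rgamma}, but the crude bound already matches the target rate.

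Finally I would apply Matrix Bernstein, $\PP(\|\sum_p M_p\|\geq t)\leq 2n\exp\bigl(-\tfrac{t^2/2}{v+Lt/3}\bigr)$, with $t=C\sqrt{m\log(m)k_{\mathrm{max}}\bar{k}}$ for a large constant $C$. Since $v=O(mk_{\mathrm{max}}\bar{k})$, the variance term contributes an exponent of order $C^2\log m$, while the additive $Lt$-term is of lower order under the operative sparsity conditions (equivalently, the $L\log m=O(k_{\mathrm{max}}\log m)$ contribution is absorbed into $O(\sqrt{m\log(m)k_{\mathrm{max}}\bar{k}})$); using $m\geq n$ to control the prefactor $2n$, the failure probability is $m^{-\omega(1)}$, which gives the claim.

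The only genuine care needed beyond the (routine) Bernstein bookkeeping is recognizing the correct per-interaction decomposition, so that independence is available and the dependence inside columns never has to be confronted directly; once that is in place, every estimate is a one-line bound and nothing else is delicate.
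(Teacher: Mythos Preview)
Your proposal is correct and follows essentially the same route as the paper: write $\mathcal{H}(RR^\top)-\mathcal{H}(\EE RR^\top)$ as the sum over interactions of independent, centered, symmetric matrices and apply Matrix Bernstein. The only difference is cosmetic: the paper bounds the variance via the exact identity $\mathcal{H}(R_{\ast p}R_{\ast p}^\top)^2=(k_p-2)R_{\ast p}R_{\ast p}^\top+D_p$ and an entrywise comparison, whereas you use the cruder $\|\sum_p\EE M_p^2\|\leq\sum_p\EE\|M_p\|^2\leq O(mk_{\mathrm{max}}\bar{k})$---but you already note the sharper route, and both give the same order, so there is no substantive gap.
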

\begin{proof}
    Letting $R_{\ast p}$ be the $p$-th column of $R$, we may write
    $$\mathcal{H}(RR^\top) - \mathcal{H}(\EE(RR^\top))=\sum_{p=1}^m[\mathcal{H}(R_{\ast p}R_{\ast p}^\top)-\mathcal{H}(\EE(R_{\ast p}R_{\ast p}^\top))],$$
    which is a sum of independent, mean-0 symmetric matrices. Each term is bounded via
    \begin{align*}
    \|\mathcal{H}(R_{\ast p}R_{\ast p}^\top)&-\mathcal{H}(\EE(R_{\ast p}R_{\ast p}^\top))\|^2\\
    &\leq\|R_{\ast p}R_{\ast p}^\top-\EE(R_{\ast p}R_{\ast p}^\top)\|_F^2\\
    &=\sum_{r=1}^d\left[\frac{\tau_{rp}(n_r-\tau_{rp})}{n_r}+\frac{\tau_{rp}(\tau_{rp}-1)(n_r(n_r-1)-\tau_{rp}(\tau_{rp}-1))}{n_r(n_r-1)}\right]+2\sum_{r<s}\frac{\tau_{rp}\tau_{sp}(n_r n_s-\tau_{rp}\tau_{sp})}{n_r n_s}\\
    &\leq \left(\sum_r \tau_{rp}\right)^2\leq k_{\mathrm{max}}^2.
    \end{align*}
    Let $D_p$ be the $n\times n$ diagonal matrix having diagonal entries given by the vector $R_{\ast p}$. Since $R_{ip}\in\{0,1\}$ for all $i,p$, we see that
    \begin{align*}
    [\mathcal{H}(R_{\ast p}R_{\ast p}^\top)^2]_{ij}&=\sum_s (1-\delta_{is})(1-\delta_{sj})R_{ip}R_{sp}R_{jp}=R_{ip}R_{jp}(k_p-R_{ip}-R_{jp})=R_{ip}R_{jp}(k_p-2)\\
    [\mathcal{H}(R_{\ast p}R_{\ast p}^\top)^2]_{ii}&=\sum_s (1-\delta_{is}) R_{ip}R_{sp}=R_{ip}(k_p-R_{ip})=R_{ip}(k_p-1),
    \end{align*}
    where $\delta_{ij}=1$ if $i=j$, and otherwise $\delta_{ij}=0$. So
    $$\mathcal{H}(R_{\ast p}R_{\ast p}^\top)^2=(k_p-2)\mathcal{H}(R_{\ast p}R_{\ast p}^\top)+(k_p-1)D_p=(k_p-2)R_{\ast p}R_{\ast p}^T+D_p.$$
    
    Then we may define the matrix variance statistic as
    \begin{align*}
    v(\mathcal{H}(RR^\top)-\mathcal{H}(\EE(RR^\top)))&=\left\|\sum_{p=1}^m\EE[\mathcal{H}(R_{\ast p}R_{\ast p}^\top-\EE(R_{\ast p}R_{\ast p}^\top))^2]\right\|\\
    &=\left\|\sum_{p=1}^m\left[\EE[\mathcal{H}(R_{\ast p}R_{\ast p}^\top)^2]-\mathcal{H}(\EE(R_{\ast p}R_{\ast p}^\top))^2\right]\right\|\\
    &=\left\|\sum_{p=1}^m\left[(k_p-2)\EE(R_{\ast p}R_{\ast p}^\top)+\EE(D_p)-\mathcal{H}(\EE(R_{\ast p}R_{\ast p}^\top))^2\right]\right\|.
    \end{align*}
    The representation for the matrix inside the spectral norm on the first line makes it clear that this is positive semidefinite. Since $\sum_p \mathcal{H}(\EE(R_{\ast p}R_{\ast p}^\top))^2$ is positive semidefinite, we get
    $$
    v(\mathcal{H}(RR^\top)-\mathcal{H}(\EE(RR^\top)))\leq \left\|\sum_{p=1}^m (k_p-2) \EE[R_{\ast p}R_{\ast p}^\top]+\EE(D_p)\right\|.
    $$
    For this matrix, we have the entrywise bounds for $i\neq j$
    $$
    \sum_{p=1}^m (k_p-2) \EE[R_{i p}R_{jp}]\leq \sum_{p=1}^m \frac{k_p(k_p-2)}{\tilde{c} n},\quad \sum_{p=1}^m (k_p-1)\EE[R_{ip}] \leq \sum_{p=1}^m \frac{k_p(k_p-1)}{\tilde{c}n}. 
    $$ 
    so by \cite{horn2012matrix} Theorem 8.1.18, we obtain
    \begin{align*}
    v(\mathcal{H}(RR^\top)-\mathcal{H}(\EE(RR^\top)))\leq&\left\|\sum_{p=1}^m k_p(k_p/\tilde cn)J_n\right\|\\
    =&\frac{mk_{\mathrm{max}} \bar{k}}{\tilde{c}}.
    \end{align*}
    
    Therefore, according to the Matrix Bernstein Inequality,
    $$
    \PP(\|\mathcal{H}(RR^\top)-\mathcal{H}(\EE(RR^\top))\|\geq t)\leq 2n\,\mathrm{exp}\left(\frac{-t^2/2}{mk_{\mathrm{max}}\bar{k}/\tilde{c}+k_{\mathrm{max}}t/3}\right),
    $$
    so with probability at least $1-m^{-2}$ as $m\rightarrow\infty$, we have
    $$
    \|\mathcal{H}(RR^\top)-\mathcal{H}(\EE(RR^\top))\|\leq
        7\sqrt{m\log(m)k_{\mathrm{max}}\bar{k}/\tilde{c}},
    $$
    since $m\geq n$.
\end{proof}

The proof of Lemma~\ref{lem:eigtrapping} follows immediately by applying Weyl's inequality.

Recall the notation for the following spectral decompositions:
\begin{align*}
  \Gamma&=[U|U_{\perp}][S\oplus 0][V|V_{\perp}]^\top\\
  \hol{RR^\top}&=[\hat U|\hat U_{\perp}][\hat \Lambda\oplus\hat\Lambda_{\perp}][\hat U|\hat U_{\perp}]^\top\\
  \hol{\EE(RR^\top)}&=[\tilde U|\tilde U_{\perp}][\tilde \Lambda\oplus\tilde\Lambda_{\perp}][\tilde U|\tilde U_{\perp}]^\top\\
  \hat U^\top R&=\hat X\hat S\hat V^\top
\end{align*}

\begin{lemma}[Identical actions on $\mathrm{span}(U)$]
\label{lem:identical}
$R$ and $\Gamma$ satisfy the following equations
$$
U^\top(R-\Gamma)=\tilde{U}^\top(R-\Gamma)=0.
$$
\end{lemma}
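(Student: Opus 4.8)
The plan is to exploit the fact that $R$ and $\Gamma$ agree \emph{exactly}, not merely in expectation, once their entries are summed within communities. First I would record the structural observation that $\mathrm{range}(U)\subseteq\mathrm{range}(Z)$. Since $\Gamma=ZB\mathcal{T}$, every column of $\Gamma$ lies in $\mathrm{range}(Z)$, so $\mathrm{range}(\Gamma)\subseteq\mathrm{range}(Z)$; and because $\Gamma$ has rank exactly $d$ (its rank is at most $d$, and $\sigma_d(\Gamma)>0$ under Assumption~\ref{assump:information}, since $\kappa<\infty$), the columns of $U$ form an orthonormal basis of $\mathrm{range}(\Gamma)$. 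Likewise $\mathrm{span}(\tilde U)\subseteq\mathrm{span}(Z)$ was already noted just after Equation~\ref{eq:tildeu}. As $Z$ has full column rank $d$, we may therefore write $U=ZC$ and $\tilde U=Z\tilde C$ for some $C,\tilde C\in\RR^{d\times d}$. Consequently it suffices to prove the single identity $Z^\top(R-\Gamma)=0$, for then $U^\top(R-\Gamma)=C^\top Z^\top(R-\Gamma)=0$, and similarly $\tilde U^\top(R-\Gamma)=\tilde C^\top Z^\top(R-\Gamma)=0$.

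Next I would compute $Z^\top R$ and $Z^\top\Gamma$ directly. The $(r,p)$ entry of $Z^\top R$ equals $\sum_{i:\,Z_{ir}=1}R_{ip}=\#\{i\in\text{class }r: i\in e_p\}=\tau_{rp}$, so $Z^\top R=\mathcal{T}$ holds \emph{deterministically}, by the definition of the type matrix. On the other hand, using $Z^\top Z=\diag(n_1,\dots,n_d)$ and $B=\diag(1/n_r)$, we get $Z^\top\Gamma=Z^\top ZB\mathcal{T}=\diag(n_r)\,\diag(1/n_r)\,\mathcal{T}=\mathcal{T}$ as well. Hence $Z^\top(R-\Gamma)=\mathcal{T}-\mathcal{T}=0$, which together with the previous paragraph completes the argument.

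There is no real obstacle here: the only point needing a moment's care is justifying $\mathrm{range}(U)\subseteq\mathrm{range}(Z)$ (equivalently, ruling out that a rank deficiency of $\Gamma$ forces some column of $U$ into $U_\perp$), which is handled by the rank-$d$ guarantee. The conceptual content is simply that projecting $R$ onto the $d$ community-count coordinates annihilates all of its randomness, because those counts \emph{are} the fixed type vectors; this is precisely the feature of the model that makes the left singular subspace behave cleanly in spite of the within-interaction dependence.
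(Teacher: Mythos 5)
Your proof is correct and follows essentially the same route as the paper's: both establish $Z^\top(R-\Gamma)=0$ by noting that the within-class sums of $R$ and $\Gamma$ both equal the type counts $\tau_{rp}$, and then transfer this to $U$ and $\tilde U$ via $\mathrm{span}(U),\mathrm{span}(\tilde U)\subseteq\mathrm{span}(Z)$. Your extra care about the rank of $\Gamma$ is a reasonable refinement but does not change the argument.
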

\begin{proof}
For any class $r$, we have that 
$$
\sum_{i\in C_r}(R_{ip}-\Gamma_{ip})=\tau_{rp}-n_r\frac{\tau_{rp}}{n_r}=0.
$$
Then since the columns of $Z$ are constant within each node class, we must have $Z^\top(R-\Gamma)=0.$ Note that
$$
\Gamma=USV^\top=ZB\mathcal T,
$$
so $\mathrm{span}(U)\subseteq \mathrm{span}(Z)$, and thus $U^\top (R-\Gamma)=0$, too. As stated just after the definition of $\tilde{U}$ in Equation~\ref{eq:tildeu}, $\mathrm{span}(\tilde{U})\subseteq\mathrm{span}(Z)$, so we have $\tilde{U}^\top(R-\Gamma)=0$.
\end{proof}

Let $\tilde U^\top \hat{U}= W_{U_1}\Sigma_ 
 U W_{U_2}^\top$ and $V^\top \hat{V}=W_{V_1}\Sigma_VW_{V_2}^\top$ be their SVD decompositions. Then, let $W_U^*=W_{U_1}W_{U_2}^\top$ and $W_V^*=W_{V_1}W_{V_2}^\top$. Recall the definition of $\Delta$ from Equation~\ref{eq:delta}:
 $$
 \Delta = \min_{r,s} \left|\lambda_r(\sqrt{B}\Sigma_{\mathcal{T}}\sqrt{B})-(-\mu_s)\right|.
 $$
\begin{lemma}
\label{lem:nearorthogonal}
Suppose Assumption~\ref{assump:community} and Equation~\ref{eq:bigdelta} hold. Then with overwhelming probability as $m\rightarrow\infty$,
    \begin{align*}
        \|\tilde{U}^\top \hat{U}-W_U^*\|_F &\leq O\left(\frac{m\log(m)k_{\mathrm{max}}\bar{k}}{\Delta^2}\right),\\
        \|V^\top \hat{V}-W_V^*\|_F&\leq O\left(\frac{m\log(m) \kappa^4 k_{\mathrm{max}}\bar{k}}{\Delta^2}\right),
    \end{align*}
    where $\kappa=\sigma_1(\Gamma)/\sigma_d(\Gamma)$.
\end{lemma}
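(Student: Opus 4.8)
The plan is to derive both inequalities from one Davis--Kahan/Wedin perturbation analysis, translating $\sin\Theta$ distances into the stated near-orthogonality bounds by means of the following elementary fact: if $M=W_1\Sigma W_2^\top$ is an SVD of a $d\times d$ matrix and $\Sigma=\cos\Theta$, where $\Theta$ collects the principal angles between the two relevant subspaces, then $W_1W_2^\top$ is the orthogonal polar factor of $M$ and $\|M-W_1W_2^\top\|_F=\|I-\cos\Theta\|_F\le\|\sin\Theta\|_F^2$, since $1-\cos\theta_i\le\sin^2\theta_i$. Thus each of the two target bounds will follow from the \emph{square} of a corresponding $\sin\Theta$ bound.

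\textbf{The $\tilde U$ bound.} Set $E=\mathcal H(RR^\top)-\mathcal H(\EE(RR^\top))$, so that $\|E\|=O(\sqrt{m\log(m)k_{\mathrm{max}}\bar k})$ with overwhelming probability by Lemma~\ref{lem:hollowconcentration}. By Lemma~\ref{lem:eigenH}, the $d$ ``signal'' eigenvalues of $\mathcal H(\EE(RR^\top))$ (those of $\sqrt B\Sigma_{\mathcal T}\sqrt B$, with eigenspace $\mathrm{span}(\tilde U)$) are separated from the $n-d$ noise eigenvalues $-\mu_r$ by at least $\Delta$, and Lemma~\ref{lem:eigtrapping} (via Equation~\ref{eq:bigdelta}) guarantees this separation persists for $\mathcal H(RR^\top)$ with constant slack. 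The Davis--Kahan $\sin\Theta$ theorem then gives $\|\sin\Theta(\tilde U,\hat U)\|_F\lesssim \sqrt d\,\|E\|/\Delta$, and applying the polar-factor inequality to $M=\tilde U^\top\hat U$ (whose polar factor is exactly $W_U^*$) yields $\|\tilde U^\top\hat U-W_U^*\|_F\le\|\sin\Theta(\tilde U,\hat U)\|_F^2\lesssim d\,\|E\|^2/\Delta^2$, which is the first claim (absorbing $d$ into the constant).

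\textbf{The $V$ bound.} By Lemma~\ref{lem:identical}, $\tilde U^\top R=\tilde U^\top\Gamma=\tilde U^\top USV^\top=W_0SV^\top$, where $W_0:=\tilde U^\top U$ is $d\times d$ orthogonal because $\mathrm{span}(\tilde U)=\mathrm{span}(U)$. Writing $\hat U=\tilde U(\tilde U^\top\hat U)+(I-\tilde U\tilde U^\top)\hat U$ and replacing $\tilde U^\top\hat U$ by $W_U^*$,
$$\hat U^\top R=(W_U^*)^\top W_0SV^\top+Z,\qquad \|Z\|\le\|\tilde U^\top\hat U-W_U^*\|\,\sigma_1(\Gamma)+\|\sin\Theta(\tilde U,\hat U)\|\,\|R\|,$$
using $\|(I-\tilde U\tilde U^\top)\hat U\|=\|\sin\Theta(\tilde U,\hat U)\|$, $\|\tilde U^\top R\|=\|S\|=\sigma_1(\Gamma)$, and $\|R\|\le\sigma_1(\Gamma)+\|R-\Gamma\|$ (Weyl plus Lemma~\ref{lem:rgamma}). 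The leading matrix $(W_U^*)^\top W_0SV^\top$ has rank $d$, smallest singular value $\sigma_d(\Gamma)$, and right-singular-vector matrix $V$; provided $\|Z\|\ll\sigma_d(\Gamma)$ — which is precisely where Assumption~\ref{assump:information} enters — Wedin's theorem gives $\|\sin\Theta(V,\hat V)\|_F\lesssim\sqrt d\,\|Z\|/\sigma_d(\Gamma)$, and the polar-factor inequality then gives $\|V^\top\hat V-W_V^*\|_F\lesssim d\,\|Z\|^2/\sigma_d(\Gamma)^2$. Expanding $\|Z\|/\sigma_d(\Gamma)=\kappa\|\tilde U^\top\hat U-W_U^*\|+\|\sin\Theta(\tilde U,\hat U)\|(\kappa+\|R-\Gamma\|/\sigma_d(\Gamma))$, inserting the bounds of the previous step together with Lemmas~\ref{lem:hollowconcentration} and~\ref{lem:rgamma}, and relating $\sigma_1(\Gamma),\sigma_d(\Gamma)$ to $\Delta$ via the identity that the nonzero eigenvalues of $\Gamma\Gamma^\top$ are those of $\sqrt B\mathcal T\mathcal T^\top\sqrt B=\sqrt B\Sigma_{\mathcal T}\sqrt B+\sqrt B\,\mathrm{diag}(\mathcal T j)\sqrt B$, then keeping the dominant term, produces the claimed bound with its extra $\kappa^4$.

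\textbf{Expected main obstacle.} The perturbation skeleton (Davis--Kahan, Wedin, polar-factor conversion) is routine; the technical heart is the bookkeeping in the last step. Two points need care: first, obtaining a clean lower bound on $\sigma_d(\Gamma)$ in terms of $\Delta$ that remains valid in the disassortative regime, where the signal eigenvalues of $\mathcal H(\EE(RR^\top))$ entering $\Delta$ may be negative while $\sigma_d(\Gamma)^2\ge0$ is ``rescued'' only by the $\mathrm{diag}(\mathcal T j)$ correction (this is what turns the naive $\kappa^2$ into $\kappa^4$); and second, verifying that under Assumption~\ref{assump:information} the Wedin gap condition $\|Z\|\ll\sigma_d(\Gamma)$ genuinely holds, so that the resulting bound is not vacuous.
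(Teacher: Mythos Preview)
Your skeleton is correct and matches the paper's approach exactly: Davis--Kahan on $\mathcal H(RR^\top)$ for the $\tilde U$ bound, then a singular-vector perturbation of $\hat U^\top R$ against (an orthogonal rotation of) $U^\top\Gamma$ for the $V$ bound, with the polar-factor inequality $\|M-W^*\|_F\le\|\sin\Theta\|_F^2$ converting each $\sin\Theta$ estimate into the stated bound. The paper writes the perturbation as $(\hat U-\tilde U W_U^*)^\top R$ directly rather than splitting $\hat U$ as you do, but the two are equivalent.

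Your ``expected main obstacle'' is, however, a phantom. No relation between $\sigma_d(\Gamma)$ and $\Delta$ is ever needed, and Assumption~\ref{assump:information} is not a hypothesis of this lemma---only Assumption~\ref{assump:community} and Equation~\ref{eq:bigdelta} are. Two clarifications: (i) the Wedin gap condition is automatic here, since both $\hat U^\top R$ and $(W_U^*)^\top W_0 S V^\top$ are $d\times m$ and hence have all ``lower'' singular values equal to zero, so the relevant gap is simply $\sigma_d(\Gamma)$; (ii) once you use $\|R\|\le 2\|\Gamma\|$ (which the paper obtains from Lemma~\ref{lem:rgamma} under Equation~\ref{eq:bigdelta}), the dominant term in $\|Z\|/\sigma_d(\Gamma)$ is $\kappa\,\|\sin\Theta(\tilde U,\hat U)\|\lesssim \kappa\sqrt{m\log(m)k_{\max}\bar k}/\Delta$, already in the right variables. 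Squaring gives the claim with $\kappa^2$; the paper's $\kappa^4$ arises only because it uses the Yu--Wang--Samworth variant of Davis--Kahan for singular vectors, whose numerator carries an extra factor of $\sigma_1(\Gamma)$. The disassortative regime and the $\mathrm{diag}(\mathcal Tj)$ correction play no role in this lemma.
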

\begin{proof}
    Let $\theta_1,\ldots, \theta_d$ denote the canonical angles between $\mathrm{span}(\tilde{U})$ and $\mathrm{span}(\hat{U})$. Then by the Davis-Kahan Theorem \cite{yu2015useful} and Lemma~\ref{lem:hollowconcentration}, we see that with overwhelming probability as $m\rightarrow\infty$,
    \begin{align}
    \max_i |\sin(\theta_i)| &=\norm{\sin\Theta(\tilde{U},\hat{U})}\notag\\
    &\leq \frac{2\sqrt{d}\|\hol{RR^\top}-\hol{\EE(RR^\top)}\|}{\Delta}\notag\\
    &\leq \frac{7\sqrt{m\log(m)k_{\mathrm{max}}\bar{k}/\tilde{c}}}{\Delta}.\label{eq:ubound}
    \end{align}
    And (up to a different constant), the same bound holds for $\min_W\|\tilde{U}-\hat{U}W\|,$ where the minimization is over $d\times d$ orthogonal matrices. To obtain the stated bound, we observe that the singular values of $\tilde{U}^\top \hat{U}$ are the cosines of the canonical angles between $\mathrm{span}(\tilde{U})$ and $\mathrm{span}(\hat{U})$. Then 
    \begin{equation}
    \|\tilde{U}^\top \hat{U}-W_U^*\|_F^2 = \sum_{i=1}^d |\cos(\theta_i)-1|^2\leq \sum_{i=1}^d \sin(\theta_i)^4\leq d\max_i |\sin(\theta_i)|^4.
    \label{eq:usquared}
    \end{equation}

    Note that there is an orthogonal matrix $O^*$ such that $\tilde{U}O^*=U$. Since $\hat{V}$ are the right singular vectors of $\hat{U}^\top R$, and $V$ are the right singular vectors of $U^\top\Gamma$, we will apply Davis-Kahan again to compare the right singular vectors. However, since $\hat{U}\approx \tilde{U}W_U^*$, in order to get a close approximation to $U^\top\Gamma$, we will instead use $(\hat{U}(W_U^*)^\top O^*)^\top R$, since this has the same matrix of right singular vectors $\hat{V}$, up to an orthogonal matrix. Then on the same event as above, we may use Lemma~\ref{lem:identical} to argue that there exists an orthogonal matrix $O$ satisfying
    \begin{align*}
        \|\hat{V}O-V\|_F &\leq \frac{2^{3/2}\sqrt{d}(2\|U^\top \Gamma\|+\|(\hat{U}(W_U^*)^\top O^*)^\top R-U^\top \Gamma\|)\|(\hat{U}(W_U^*)^\top O^*)^\top R-U^\top \Gamma\|}{\sigma_d^2(U^\top \Gamma)}\\
        &= \frac{2^{3/2}\sqrt{d}(2\|\Gamma\|+\|((W_U^*)^\top O^*)^\top(\hat{U}-\tilde{U}W_U^*)^\top R\|)\|((W_U^*)^\top O^*)^\top(\hat{U}-\tilde{U}W_U^*)^\top R\|}{\sigma_d^2(\Gamma)}\\
        &=\frac{2^{3/2}\sqrt{d}(2\|\Gamma\|+\|\hat{U}-\tilde{U}W_U^*\|\|R\|)\|\hat{U}-\tilde{U}W_U^*\|\|R\|}{\sigma_d^2(\Gamma)}.
    \end{align*}
    By Lemma~\ref{lem:rgamma}, and under Condition~\ref{eq:bigdelta},
    $$\|R\|\leq \|\Gamma\|+\|R-\Gamma\|\leq 2\|\Gamma\|.$$
    So 
    $$\|\hat{V}O-V\|_F \leq \frac{C\|\Gamma\|^2\|\hat{U}-\tilde{U}W_U^*\|}{\sigma_d^2(\Gamma)},$$ and after applying Equation~\ref{eq:ubound} and the definition of $\kappa$, this becomes
    $$
    \|\hat{V}O-V\|_F \leq C\kappa^2\frac{\sqrt{m\log(m)k_{\mathrm{max}}\bar{k}}}{\Delta}.
    $$
    Since this holds for some orthogonal matrix $O$, it certainly holds for the minimizing orthogonal matrix, which is given by $(W_V^*)^\top$. Arguing as in Equation~\ref{eq:usquared} gives the stated bound.
\end{proof}

\begin{lemma}
\label{lem:intertwining}
Continue to suppose that Assumption~\ref{assump:community} and Equation~\ref{eq:bigdelta} hold, and $m \bar{k}\geq nk_{\mathrm{max}}$. Let $W=(O^*)^\top W_U^* \hat{X}$ be an orthogonal matrix, where each of these factors is defined above. Then
$$
\|SW-W_V^* \hat{S}\|_F \leq O\left(\max\left\{\frac{m^{3/2}\log(m)\kappa^4 (k_{\mathrm{max}} \bar k)^{3/2}}{\sqrt{n}\Delta^2},\frac{m\log(m)\sqrt{k_{\mathrm{max}}}\bar k}{\sqrt{n}\Delta}\right\}\right).
$$
\end{lemma}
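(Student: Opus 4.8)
The plan is to write $SW$ and $W_V^*\hat{S}$ in a common algebraic form built from $R$, $\hat{U}$, $\hat{X}$ and the population objects, and then to bound the difference term by term. Two structural facts will do most of the work. First, $R$ acts on $\mathrm{span}(U)$ exactly as $\Gamma$ does (Lemma~\ref{lem:identical}), so $U^\top R=U^\top\Gamma=SV^\top$, hence $U^\top R V=S$, and, transposing, $R^\top U_\perp=(R-\Gamma)^\top U_\perp$ since $\Gamma^\top U_\perp=VSU^\top U_\perp=0$. Second, the defining SVD $\hat{U}^\top R=\hat{X}\hat{S}\hat{V}^\top$ gives, after transposing and right-multiplying by $\hat{X}$, the identity $\hat{V}\hat{S}=R^\top\hat{U}\hat{X}$, which involves no inversion of $\hat{S}$. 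Recalling $W=(O^*)^\top W_U^*\hat{X}$ with $\tilde{U}O^*=U$ (from the proof of Lemma~\ref{lem:nearorthogonal}), these identities are exactly what is needed to re-route $W_V^*\hat{S}$ into $SW$ plus controllable noise.

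Concretely, I would substitute $W_V^*=V^\top\hat{V}+(W_V^*-V^\top\hat{V})$ together with $\hat{V}\hat{S}=R^\top\hat{U}\hat{X}$ to write $W_V^*\hat{S}=V^\top R^\top\hat{U}\hat{X}+(W_V^*-V^\top\hat{V})\hat{S}$, then insert $I=UU^\top+U_\perp U_\perp^\top$ between $R^\top$ and $\hat{U}$ and use $V^\top R^\top U=S$ to turn the first piece into $S\,U^\top\hat{U}\,\hat{X}+V^\top R^\top U_\perp U_\perp^\top\hat{U}\hat{X}$, and finally expand $U^\top\hat{U}=(O^*)^\top\tilde{U}^\top\hat{U}=(O^*)^\top W_U^*+(O^*)^\top(\tilde{U}^\top\hat{U}-W_U^*)$ to extract precisely $SW$. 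This gives the identity
$$
W_V^*\hat{S}-SW=(W_V^*-V^\top\hat{V})\hat{S}+S(O^*)^\top(\tilde{U}^\top\hat{U}-W_U^*)\hat{X}+V^\top R^\top U_\perp U_\perp^\top\hat{U}\hat{X},
$$
and it remains to bound three Frobenius norms, using $\|\hat{X}\|_2=1$, $\|\hat{S}\|_2\le\|R\|_2\le 2\|\Gamma\|$, and the elementary estimate $\|\Gamma\|\le\sqrt{m k_{\mathrm{max}}\bar{k}/(\tilde{c}n)}$ which follows from $\Gamma=ZB\mathcal{T}$, $Z^\top Z=\mathrm{diag}(n_r)$, and Assumption~\ref{assump:community} (this is where $m\bar{k}\ge n k_{\mathrm{max}}$ is used, via Lemma~\ref{lem:rgamma}, to ensure $\|R-\Gamma\|$ is controlled).

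For the first two terms, Lemma~\ref{lem:nearorthogonal} bounds $\|W_V^*-V^\top\hat{V}\|_F$ by order $m\log(m)\kappa^4 k_{\mathrm{max}}\bar{k}/\Delta^2$ and $\|\tilde{U}^\top\hat{U}-W_U^*\|_F$ by order $m\log(m)k_{\mathrm{max}}\bar{k}/\Delta^2$; multiplying by $\|\hat{S}\|_2$ and $\|S\|_2=\|\Gamma\|$ respectively, both land inside $O(m^{3/2}\log(m)\kappa^4(k_{\mathrm{max}}\bar{k})^{3/2}/(\sqrt{n}\,\Delta^2))$, the first entry of the claimed maximum. The third, ``leakage'' term is the crucial one: I bound it by $\|R^\top U_\perp\|_2\,\|U_\perp^\top\hat{U}\|_F\,\|\hat{X}\|_2$, and here the point is that $\Gamma^\top U_\perp=0$ lets me replace $\|R^\top U_\perp\|_2$ by $\|(R-\Gamma)^\top U_\perp\|_2\le\|R-\Gamma\|_2=O(\sqrt{m\log(m)\bar{k}/n})$ via Lemma~\ref{lem:rgamma}, a factor $\sqrt{k_{\mathrm{max}}}$ smaller than the trivial bound $\|R\|\approx\|\Gamma\|$; combined with $\|U_\perp^\top\hat{U}\|_F=\|\sin\Theta(\tilde{U},\hat{U})\|_F\lesssim\sqrt{d}\,\sqrt{m\log(m)k_{\mathrm{max}}\bar{k}}/\Delta$ from the Davis--Kahan step (Equation~\ref{eq:ubound}) in the proof of Lemma~\ref{lem:nearorthogonal}, this term is $O(m\log(m)\sqrt{k_{\mathrm{max}}}\,\bar{k}/(\sqrt{n}\,\Delta))$ up to a power of $d$, the second entry of the maximum. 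Taking the maximum gives the stated bound, on the overwhelming-probability event where Lemmas~\ref{lem:rgamma}, \ref{lem:hollowconcentration}, \ref{lem:eigtrapping}, and \ref{lem:nearorthogonal} all hold (this in particular makes $\hat{U}^\top R$ full rank, so $\hat{X}$ is genuinely orthogonal and $W$ well defined).

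I expect the main obstacle to be organizing the algebra so that the noise only ever appears against ``harmless'' directions, i.e., isolating the leakage term $V^\top R^\top U_\perp U_\perp^\top\hat{U}\hat{X}$ so that the cancellation $R^\top U_\perp=(R-\Gamma)^\top U_\perp$ can be invoked: a careless bound of this term by $\|R\|\,\|\sin\Theta\|$ would cost an extra $\sqrt{k_{\mathrm{max}}}$ and break the rate. Secondary care is needed to keep spectral (not Frobenius) norms on $\hat{S}$, $S$, $\hat{X}$, $R-\Gamma$, and $R$ inside the matrix products, so the dimension dependence is incurred only through the canonical-angle and alignment factors, and to use $O^*$, $W_U^*$, $W_V^*$ and the subspace identification $\mathrm{span}(U)=\mathrm{span}(\tilde{U})$ (hence $U_\perp^\top\tilde{U}=0$) consistently throughout.
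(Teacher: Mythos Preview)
Your proposal is correct and takes essentially the same approach as the paper. Your three-term decomposition is algebraically identical to the paper's: where you insert $I=UU^\top+U_\perp U_\perp^\top$ into $V^\top R^\top$ and invoke $V^\top R^\top U=S$ directly, the paper first splits $R=\Gamma+(R-\Gamma)$ and then observes that $V^\top(R-\Gamma)^\top UU^\top\hat{U}\hat{X}=0$; both routes leave exactly the terms $(W_V^*-V^\top\hat{V})\hat{S}$, $S(O^*)^\top(\tilde{U}^\top\hat{U}-W_U^*)\hat{X}$, and $V^\top(R-\Gamma)^\top(I-UU^\top)\hat{U}\hat{X}$, and you bound them with the same ingredients (Lemmas~\ref{lem:rgamma}, \ref{lem:nearorthogonal}, Equation~\ref{eq:ubound}, and the estimate $\|\Gamma\|^2\le m k_{\mathrm{max}}\bar{k}/(\tilde{c}n)$) to obtain the same rates.
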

\begin{proof}
We have the following matrix expansion:
\begin{align}
    W_V^*\hat S=&(W_V^*-V^\top\hat V)\hat S+V^\top\hat V\hat S\notag\\
    =&(W_V^*-V^\top\hat V)\hat S+V^\top R^\top\hat U\hat X\notag\\
    =&(W_V^*-V^\top\hat V)\hat S+V^\top\Gamma^\top\hat U\hat X+V^\top(R-\Gamma)^\top\hat U\hat X\notag\\
    =&(W_V^*-V^\top\hat V)\hat S+SU^\top \hat U\hat X+V^\top(R-\Gamma)^\top UU^\top\hat U\hat X +V^\top(R-\Gamma)^\top(I-UU^\top)\hat U\hat X\notag\\
    =&(W_V^*-V^\top\hat V)\hat S+S(O^*)^\top W_U^*\hat{X}+S(U^\top \hat U-(O^*)^\top W_U^*)\hat X\notag\\
    &\quad +V^\top(R-\Gamma)^\top UU^\top\hat U\hat X+V^\top(R-\Gamma)^\top(I-UU^\top)\hat U\hat X\notag\\
    =&SW\notag\\
    &+(W_V^*-V^\top\hat V)\hat S\label{eq:intertwining1}\\
    &+S(O^*)^\top(\tilde{U}^\top \hat U- W_U^*)\hat X\label{eq:intertwining2}\\
    &+V^\top(R-\Gamma)^\top UU^\top\hat U\hat X\label{eq:intertwining3}\\
    &+V^\top(R-\Gamma)^\top(I-UU^\top)\hat U\hat X\label{eq:intertwining4}
\end{align}

Since $\|\hat{S}\|\leq \|R\|\leq \|\Gamma\|+\|R-\Gamma\|\leq 2\|\Gamma\|$ by Lemma~\ref{lem:rgamma} and under Condition~\ref{eq:bigdelta}, we can bound this factor by controlling $\|\Gamma\|$. 
$$
\|\Gamma\|=\|U^\top \Gamma\|= \|\sqrt{B}\mathcal{T}\|,
$$
since $U=Z\sqrt{B}O$ for some orthogonal matrix $O$. Since this matrix is $d$-dimensional, there is little loss in bounding this with the Frobenius norm, and we have
\begin{equation}
\|\hat{S}\|^2\leq \|\sqrt{B}\mathcal{T}\|_F^2 = \sum_{p,r} \frac{\tau_{rp}^2}{n_r}\leq \frac{m k_{\mathrm{max}}\bar{k}}{\tilde{c}n}.
\label{eq:sbound}
\end{equation}

Lines~\ref{eq:intertwining1} and \ref{eq:intertwining2} may be bounded with nearly identical arguments, as 
\begin{align*}
    \norm{(W_V^*-V^\top\hat V)\hat S}_F
    &\leq\norm{W_V^*-V^\top\hat V}_F\norm{\hat S}
    \leq O\left(\frac{m\log(m) \kappa^4 k_{\mathrm{max}}\bar{k}}{\Delta^2}\right)\,O\left(\sqrt{\frac{m k_{\mathrm{max}}\bar k}{n}}\right)\\
    &=O\left(\frac{m^{3/2}\log(m)\kappa^4 (k_{\mathrm{max}} \bar k)^{3/2}}{\sqrt{n}\Delta^2}\right).
\end{align*}
Note that the term \ref{eq:intertwining2} does not require the factors of $\kappa$. For the term in Line~\ref{eq:intertwining3}, we apply Lemma~\ref{lem:identical} to see that $U^\top(R-\Gamma)=0$, so this whole term is 0. The last term \ref{eq:intertwining4} is bounded as
\begin{align*}
    \norm{V^\top(R-\Gamma)^\top(I- UU^\top)\hat U\hat X}_F&\leq \norm{R-\Gamma}\norm{(I-UU^\top)\hat U}_F\\
    &\leq O\left(\sqrt{\frac{m\log(m)\bar{k}}{n}}\right)O\left(\frac{\sqrt{m\log(m)k_{\mathrm{max}}\bar{k}}}{\Delta}\right)\\
    &= O\left(\frac{m\log(m)\sqrt{k_{\mathrm{max}}}\bar{k}}{\sqrt{n}\Delta}\right).
\end{align*}
In sum,
$$
\|SW-W_V^* \hat{S}\|_F \leq O\left(\max\left\{\frac{m^{3/2}\log(m)\kappa^4 (k_{\mathrm{max}} \bar k)^{3/2}}{\sqrt{n}\Delta^2},\frac{m\log(m)\sqrt{k_{\mathrm{max}}}\bar k}{\sqrt{n}\Delta}\right\}\right).
$$
\end{proof}

We may also use the expression for $\|\Gamma\|$ in Equation~\ref{eq:sbound} to obtain lower bounds that we will apply later:
\begin{equation}
\label{eq:lowerboundgamma}
\sigma_d^2(\Gamma)\geq \frac{\|\Gamma\|^2}{\kappa^2}\geq  \frac{\|\Gamma\|_F^2}{\kappa^2 d}\geq \frac{m\bar{k}}{\tilde{c} n\kappa^2 d}.
\end{equation}
Since $U^\top R=U^\top \Gamma$, $\sigma_d(R)$ satisfies the same lower bound:
\begin{align*}
\sigma_d^2(R)&=\max_{\substack{Y\in \RR^{n\times d}\\ Y^\top Y=I_d}}\min_{\|v\|=1} v^\top Y^\top RR^\top Yv\\
&\geq \min_{\|v\|=1} v^\top U^\top RR^\top Uv\\
&= \min_{\|v\|=1} v^\top U^\top \Gamma \Gamma^\top Uv\\
&= \sigma_d^2(\Gamma).
\end{align*}
As for $\sigma_d(\hat{U}^\top R),$ we obtain
\begin{align*}
\sigma_d(\hat{U}^\top R)&\geq \sigma_d(\tilde{U}^\top R)-\|(\hat{U}-\tilde{U}W_U^*)^\top R\|\\
&\geq \sigma_d(\Gamma)-\frac{14\sqrt{m\log(m)k_{\mathrm{max}}\bar{k}/\tilde{c}}}{\Delta}\|\Gamma\|\\
&= \sigma_d(\Gamma)\left(1-\frac{14\sqrt{m\log(m)\kappa^2 k_{\mathrm{max}}\bar{k}/\tilde{c}}}{\Delta}\right).
\end{align*}

\begin{lemma}
\label{lem:intertwininginverses}
Suppose the slight strengthening of Equation~\ref{eq:bigdelta}, $\Delta \geq 3b\kappa$. Then
$$
\|S^{-1}W-W_V^*\hat{S}^{-1}\|_F\leq O\left(\frac{\sqrt{m}\log(m)\kappa^5 k_{\mathrm{max}}^{3/2}\sqrt{\bar{k}} d}{\sigma_d(\Gamma)\Delta}\right)
$$
\end{lemma}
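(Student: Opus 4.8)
### Proof plan for Lemma~\ref{lem:intertwininginverses}

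The plan is to mimic the algebraic expansion used in the proof of Lemma~\ref{lem:intertwining}, but with $\hat{S}^{-1}$ in place of $\hat{S}$, exploiting the relation $\hat{S}^{-1} = \hat{V}^\top R^\top \hat{U} \hat{X} (\hat{S}^{-1})^2$ — or, more cleanly, writing $W_V^* \hat{S}^{-1} - S^{-1} W$ and multiplying on the left by $S$ and on the right by $\hat{S}$ to reduce the problem to the already-bounded quantity $SW - W_V^* \hat{S}$ from Lemma~\ref{lem:intertwining}. Concretely, first observe the identity
$$
W_V^*\hat{S}^{-1} - S^{-1}W = S^{-1}\bigl(SW - W_V^*\hat{S}\bigr)\hat{S}^{-1} \;+\; S^{-1}\bigl(W_V^*\hat{S} - SW\bigr)\hat{S}^{-1} \quad\text{(only the first term survives)},
$$
so in fact $W_V^*\hat{S}^{-1} - S^{-1}W = S^{-1}(SW - W_V^*\hat{S})\hat{S}^{-1}$ directly, after noting $SW = WS'$ type care is not needed since $W$ is orthogonal and $S, \hat S$ are the relevant singular value matrices — but one must be careful that $W = (O^*)^\top W_U^*\hat X$ intertwines $S$ and $\hat S$ only approximately, so the correct move is the exact algebraic rearrangement $W_V^*\hat S^{-1} - S^{-1} W = -S^{-1}(W_V^* \hat S - SW)\hat S^{-1}$, which holds identically for any matrices once one writes $A^{-1} - B^{-1} = -A^{-1}(A-B)B^{-1}$ with $A, B$ playing the roles of $SW$-ish and $W_V^*\hat S$-ish blocks (using $W$ orthogonal to pass inverses through).

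Next I would bound the three factors. The middle factor $\|W_V^*\hat S - SW\|_F$ is exactly what Lemma~\ref{lem:intertwining} controls, giving the $\max$ of two terms. For the outer factors, $\|S^{-1}\| = 1/\sigma_d(\Gamma)$ exactly, and $\|\hat S^{-1}\| = 1/\sigma_d(\hat U^\top R)$, which is bounded below in the display just before the lemma statement by $\sigma_d(\Gamma)(1 - 14\sqrt{m\log(m)\kappa^2 k_{\max}\bar k/\tilde c}/\Delta)$; under the strengthened condition $\Delta \geq 3b\kappa$ this parenthetical is bounded below by a constant (say $\geq 1/2$), so $\|\hat S^{-1}\| \leq O(1/\sigma_d(\Gamma))$. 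Multiplying the three bounds together yields
$$
\|S^{-1}W - W_V^*\hat S^{-1}\|_F \leq O\!\left(\frac{1}{\sigma_d(\Gamma)^2}\max\left\{\frac{m^{3/2}\log(m)\kappa^4(k_{\max}\bar k)^{3/2}}{\sqrt n\,\Delta^2},\;\frac{m\log(m)\sqrt{k_{\max}}\,\bar k}{\sqrt n\,\Delta}\right\}\right),
$$
and then I would use the lower bound $\sigma_d^2(\Gamma) \geq m\bar k/(\tilde c n\kappa^2 d)$ from Equation~\ref{eq:lowerboundgamma} to clear the $1/\sigma_d(\Gamma)^2$ and check that both branches of the $\max$ collapse (under $\Delta \geq 3b\kappa$, i.e. $\Delta \gtrsim \sqrt{m\log(m)k_{\max}\bar k/\tilde c}\,\kappa$, the first branch dominates the second) into the single stated rate $O(\sqrt m\log(m)\kappa^5 k_{\max}^{3/2}\sqrt{\bar k}\,d/(\sigma_d(\Gamma)\Delta))$. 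Keeping one factor of $\sigma_d(\Gamma)$ in the denominator of the final bound (rather than substituting it out entirely) is presumably deliberate, to be used downstream in the proof of Theorem~\ref{thm:scaledv}.

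The main obstacle I anticipate is bookkeeping rather than conceptual: correctly justifying that inverses can be passed through the orthogonal factor $W$ in the identity $A^{-1}-B^{-1} = -A^{-1}(A-B)B^{-1}$, and verifying that the strengthened signal condition $\Delta \geq 3b\kappa$ is exactly what is needed to (i) keep $\sigma_d(\hat U^\top R)$ bounded away from zero by a constant multiple of $\sigma_d(\Gamma)$ and (ii) ensure the first branch of the Lemma~\ref{lem:intertwining} maximum dominates, so that the two-term max in the intermediate bound simplifies to the clean one-term expression. A secondary subtlety is tracking the power of $\kappa$: the $\kappa^4$ from Lemma~\ref{lem:intertwining}'s $\|W_V^* - V^\top\hat V\|_F$ bound, combined with the $\kappa^2 d$ picked up from Equation~\ref{eq:lowerboundgamma} when partially clearing $\sigma_d(\Gamma)$, should land at $\kappa^5 d$ after accounting for the half-power of $d$ and the remaining $\sigma_d(\Gamma)$; I would double-check this exponent arithmetic carefully since it propagates into the final theorem.
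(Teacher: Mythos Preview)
Your central identity is wrong as written. You assert $W_V^*\hat S^{-1} - S^{-1}W = -S^{-1}(W_V^*\hat S - SW)\hat S^{-1}$, but expanding the right-hand side gives $-S^{-1}W_V^* + W\hat S^{-1}$, not the left-hand side: the orthogonal factors end up on the wrong sides of the inverses. The $A^{-1}-B^{-1}=-A^{-1}(A-B)B^{-1}$ trick \emph{does} work here, but you must take $A=W^\top S$ and $B=\hat S(W_V^*)^\top$, so that $A^{-1}=S^{-1}W$ and $B^{-1}=W_V^*\hat S^{-1}$ (this is where the orthogonality of $W$ and $W_V^*$ actually enters). The correct identity is
\[
S^{-1}W-W_V^*\hat S^{-1}=-S^{-1}W\bigl(W^\top S-\hat S(W_V^*)^\top\bigr)W_V^*\hat S^{-1},
\]
and since $\|W^\top S-\hat S(W_V^*)^\top\|_F=\|(SW-W_V^*\hat S)^\top\|_F=\|SW-W_V^*\hat S\|_F$, Lemma~\ref{lem:intertwining} applies exactly as you intend. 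One further minor point: neither branch of the Lemma~\ref{lem:intertwining} maximum need dominate the other under $\Delta\geq 3b\kappa$ alone; bound each separately and check both land below the target.

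With this fix your route is genuinely different from, and simpler than, the paper's. The paper does \emph{not} reduce directly to Lemma~\ref{lem:intertwining}; instead it writes
\[
S^{-1}W-W_V^*\hat S^{-1}=S^{-2}(SW-W_V^*\hat S)+S^{-2}(W_V^*\hat S^2-S^2W_V^*)\hat S^{-1},
\]
and must separately bound the commutator $\|W_V^*\hat S^2-S^2W_V^*\|_F$ via a fresh expansion that brings in $\|R^\top\hat U\hat U^\top R-\Gamma^\top\Gamma\|$; it is this second term that drives the $\kappa^5 k_{\max}^{3/2} d$ exponents in the stated bound. Your corrected argument avoids that commutator entirely and, after substituting Equation~\ref{eq:lowerboundgamma} for one of the two factors of $\sigma_d(\Gamma)^{-1}$, reaches the stated rate with room to spare.
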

\begin{proof}
\begin{align}
S^{-1}W-W_V^*\hat{S}^{-1}&= S^{-2}(SW-S^2W_V^*\hat{S}^{-1})\notag\\
&=S^{-2}(SW-W_V^*\hat{S})+S^{-2}(W_V^*\hat{S}^2-S^2W_V^*)\hat{S}^{-1}.\label{eq:intertwininginverses}
\end{align}

The first term is bounded by 
$$
O\left(\max\left\{\frac{m\log(m)\kappa^5 k_{\mathrm{max}}^{3/2} \bar k\sqrt{d}}{\sigma_d(\Gamma)\Delta^2},\frac{\sqrt{m}\log(m)\kappa \sqrt{k_{\mathrm{max}}\bar k d}}{\sigma_d(\Gamma)\Delta}\right\}\right).
$$

For the second term, we may bound 
$$
\|S^{-2}\|\|\hat{S}^{-1}\|\leq O\left(\frac{n\kappa^2 d}{\sigma_d(\Gamma)m\bar{k}}\right),
$$
in light of Equation~\ref{eq:lowerboundgamma} and the following equations, and using the stronger assumption on $\Delta$ to ensure that $\sigma_d(\hat{U}^\top R)\gtrsim \sigma_d(\Gamma).$ The internal matrix in the second term of Equation~\ref{eq:intertwininginverses} may be expanded as
    \begin{align*}
    W_V^*\hat S^2&=(W_V^*-V^\top\hat V)\hat S^2+V^\top\hat V \hat S^2\\
    &=(W_V^*-V^\top\hat V)\hat S^2+V^\top R^\top \hat{U}\hat{U}^\top R \hat V\\
    &=(W_V^*-V^\top\hat V)\hat S^2+V^\top(R^\top \hat{U}\hat{U}^\top R -\Gamma^\top \Gamma)\hat V+V^\top \Gamma^\top \Gamma\hat V\\
    &=(W_V^*-V^\top\hat V)\hat S^2
    +V^\top(R^\top \hat{U}\hat{U}^\top R -\Gamma^\top \Gamma)\hat V+S^2V^\top\hat V\\
    &=(W_V^*-V^\top\hat V)\hat S^2
    +V^\top(R^\top \hat{U}\hat{U}^\top R -\Gamma^\top \Gamma)\hat V\\
    &\quad +S^2(V^\top\hat V-W_V^*)
    +S^2W_V^*,
    \end{align*}
    so
    $$
    -(S^2W_V^*-W_V^*\hat S^2) =(W_V^*-V^\top\hat V)\hat S^2-S^2(W_V^*-V^\top \hat{V})
    +V^\top(R^\top \hat{U}\hat{U}^\top R -\Gamma^\top \Gamma)\hat V.
    $$
    And thus we have
    \begin{align*}
    \norm{S^2W_V^*-W_V^*\hat S^2}_F \leq&\norm{W_V^*-V^\top\hat V}_F(\|\hat{S}\|^2+\|S\|^2) +\norm{R^\top \hat{U}\hat{U}^\top R -\Gamma^\top \Gamma}\\
    &\leq O\left(\frac{m\log(m)\kappa^4k_{\mathrm{max}}\bar{k}}{\Delta^2}\right)O\left(\frac{m k_{\mathrm{max}}\bar{k}}{n}\right)\\
    &\quad+O\left(\frac{m^{3/2}\sqrt{\log(m)}(k_{\mathrm{max}}\bar{k})^{3/2}}{n\Delta}\right)
    \end{align*}
    So the second term in Equation~\ref{eq:intertwininginverses} is bounded by
    \begin{align*}
    \|S^{-2}(W_V^*\hat{S}^2-S^2 W_V^*)\hat{S}^{-1}\|_F&\leq O\left(\max\left\{\frac{m\log(m)\kappa^6 k_{\mathrm{max}}^2\bar{k} d}{\sigma_d(\Gamma)\Delta^2},\frac{\sqrt{m\log(m)}\kappa^2k_{\mathrm{max}}^{3/2}\sqrt{\bar{k}} d}{\sigma_d(\Gamma)\Delta}\right\}\right)
    \end{align*}
Combining terms and applying the strengthening of Equation~\ref{eq:bigdelta}, we obtain the bound
\begin{align*}
\|S^{-1}W-W_V^*\hat{S}^{-1}\|_F &\leq O\left(\max\left\{\frac{m\log(m)\kappa^6 k_{\mathrm{max}}^2\bar{k} d}{\sigma_d(\Gamma)\Delta^2},\frac{\sqrt{m}\log(m)\kappa^2k_{\mathrm{max}}^{3/2}\sqrt{\bar{k}} d}{\sigma_d(\Gamma)\Delta}\right\}\right)\\
&=O\left(\frac{\sqrt{m}\log(m)\kappa^5 k_{\mathrm{max}}^{3/2}\sqrt{\bar{k}} d}{\sigma_d(\Gamma)\Delta}\right).
\end{align*}
\end{proof}

\begin{theorem}
\label{thm:unscaled2inf}
Suppose Assumption~\ref{assump:community} and the slight strengthening of Equation~\ref{eq:bigdelta}, $\Delta \geq 3b\kappa$, hold. 

Assume 
$$
\rho:= \frac{\sqrt{m}\log(m)\kappa^6 k_{\mathrm{max}}^{3/2}\sqrt{\bar{k}}d}{\Delta}\rightarrow 0.
$$
Then with overwhelming probability as $m\rightarrow\infty$, 
\begin{equation}
\|\hat{V}-VW_V^*\|_{2\rightarrow\infty} \leq O\left(\frac{\rho}{1-\rho}\|V\|_{2\rightarrow\infty}\right) \leq O\left(\frac{\log(m)\kappa^7 k_{\mathrm{max}}^{5/2}d^{3/2}}{\Delta}\right).
\label{eq:unscaledv}
\end{equation}
\end{theorem}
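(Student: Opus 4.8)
The plan is to combine a short deterministic expansion of $\hat V - VW_V^*$ with the perturbation bounds already proved, plus a leave-one-out argument for the single term where naive estimates are too crude. Starting from the SVD $\hat U^\top R=\hat X\hat S\hat V^\top$ we have $\hat V=R^\top\hat U\hat X\hat S^{-1}$. Substituting $R=\Gamma+(R-\Gamma)$, using $\Gamma^\top=VSU^\top$, and invoking Lemma~\ref{lem:identical} in the form $(R-\Gamma)^\top U=(R-\Gamma)^\top\tilde U=0$ (so that $(R-\Gamma)^\top\hat U=(R-\Gamma)^\top(I-\tilde U\tilde U^\top)\hat U$ and $\Gamma^\top(I-UU^\top)=0$), then inserting the alignments $\tilde U^\top\hat U\approx W_U^*$ (Lemma~\ref{lem:nearorthogonal}), $U=\tilde UO^*$, $W=(O^*)^\top W_U^*\hat X$, and $SW\approx W_V^*\hat S$ (Lemma~\ref{lem:intertwining}), one gets the three-term identity
\[
\hat V-VW_V^*=\underbrace{V(SW-W_V^*\hat S)\hat S^{-1}}_{(A)}\;+\;\underbrace{VS(O^*)^\top(\tilde U^\top\hat U-W_U^*)\hat X\hat S^{-1}}_{(B)}\;+\;\underbrace{(R-\Gamma)^\top(I-\tilde U\tilde U^\top)\hat U\,\hat X\hat S^{-1}}_{(C)}.
\]
No occurrence of $\hat V$ survives, so there is no genuine self-bounding: the factor $1/(1-\rho)$ enters only through the uniform bound $\|\hat S^{-1}\|=\sigma_d(\hat U^\top R)^{-1}\le\big((1-\rho)\sigma_d(\Gamma)\big)^{-1}$, which follows from the lower bound on $\sigma_d(\hat U^\top R)$ established just before Lemma~\ref{lem:intertwininginverses} together with the strengthening of \eqref{eq:bigdelta} to $\Delta\ge 3b\kappa$; I also use $\sigma_d^2(\Gamma)\ge m\bar k/(\tilde c n\kappa^2 d)$ from \eqref{eq:lowerboundgamma}.

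Terms $(A)$ and $(B)$ are routine. For $(A)$ bound $\|(A)\|_{2\to\infty}\le\|V\|_{2\to\infty}\|SW-W_V^*\hat S\|_F\|\hat S^{-1}\|$ and plug in Lemma~\ref{lem:intertwining}. For $(B)$ bound $\|(B)\|_{2\to\infty}\le\|VS\|_{2\to\infty}\|\tilde U^\top\hat U-W_U^*\|_2\|\hat S^{-1}\|$, estimate $\|VS\|_{2\to\infty}\le k_{\max}/\sqrt{\tilde c n}$ from $VS=\mathcal T^\top\diag(1/\sqrt{n_r})O$ and $\sum_r\tau_{rp}^2\le k_p^2$, and control $\|\tilde U^\top\hat U-W_U^*\|_2\le\tfrac12\max_i\sin^2\theta_i\lesssim m\log(m)k_{\max}\bar k/\Delta^2$ via \eqref{eq:ubound}. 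After multiplying by $\|\hat S^{-1}\|$ and $\sigma_d(\Gamma)^{-1}$, comparing the powers of $\kappa,k_{\max},d,m,\bar k,\log m$ (using only $\Delta\ge 3b\kappa$ and $\rho\to0$) shows both are $O\!\big(\tfrac{\rho}{1-\rho}\|V\|_{2\to\infty}\big)$.

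Term $(C)$ is the only delicate one. The naive estimate $\|(C)\|_{2\to\infty}\le\max_p\|R_{\ast p}-\Gamma_{\ast p}\|\cdot\|\sin\Theta(\tilde U,\hat U)\|\cdot\|\hat S^{-1}\|\le\sqrt{k_{\max}}\|\sin\Theta\|\|\hat S^{-1}\|$ scales like $\sqrt n$ and fails once $n$ grows, so I use a leave-one-out argument. For each interaction $p$, let $R^{(p)}$ be $R$ with its $p$-th column replaced by $\Gamma_{\ast p}$, and let $\hat U^{(p)}$ be the corresponding signal eigenvectors of $\hol{R^{(p)}R^{(p)\top}}$. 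Since $\hol{R^{(p)}R^{(p)\top}}$ differs from $\hol{RR^\top}$ only by $-\hol{R_{\ast p}R_{\ast p}^\top}+\hol{\Gamma_{\ast p}\Gamma_{\ast p}^\top}$, of operator norm $O(k_{\max})$, and the relevant eigengap is $\gtrsim\Delta$ under \eqref{eq:bigdelta}, Davis--Kahan gives $\|\hat U-\hat U^{(p)}\hat W^{(p)}\|_2=O(k_{\max}/\Delta)$ for a suitable orthogonal $\hat W^{(p)}$. Splitting the $p$-th row of $(C)$ accordingly, the ``swap'' piece is $O\!\big(\sqrt{k_{\max}}\cdot(k_{\max}/\Delta)\cdot\|\hat S^{-1}\|\big)$, negligible since $m\bar k\gg n$. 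For the leading piece, $(I-\tilde U\tilde U^\top)\hat U^{(p)}$ is (conditionally on $R^{(p)}$) a fixed $n\times d$ matrix of Frobenius norm $\lesssim\sqrt d\,\|\sin\Theta(\tilde U,\hat U^{(p)})\|$, while $R_{\ast p}-\Gamma_{\ast p}$ is mean-zero, bounded by $\sqrt{k_{\max}}$, independent of $R^{(p)}$, with $\|\mathrm{Cov}(R_{\ast p})\|=O(k_{\max}/n)$ (from the block-diagonal covariance computed in the proof of Lemma~\ref{lem:rgamma}). A conditional vector Bernstein inequality, union-bounded over $p\in[m]$, then yields $\|(R_{\ast p}-\Gamma_{\ast p})^\top(I-\tilde U\tilde U^\top)\hat U^{(p)}\|=O\!\big(\|\sin\Theta\|\sqrt{d\,k_{\max}\log(m)/n}\big)$ with overwhelming probability: the $\sqrt{k_{\max}}$ of the naive bound is replaced by $\sqrt{d\,k_{\max}\log(m)/n}$, and after multiplying by $\|\hat S^{-1}\|$ this too becomes $O\!\big(\tfrac{\rho}{1-\rho}\|V\|_{2\to\infty}\big)$.

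Adding the three bounds gives $\|\hat V-VW_V^*\|_{2\to\infty}=O\!\big(\tfrac{\rho}{1-\rho}\|V\|_{2\to\infty}\big)$; the explicit second bound follows from $\|V\|_{2\to\infty}\le k_{\max}\kappa\sqrt d/\sqrt{m\bar k}$ (obtained from $V=\mathcal T^\top\diag(1/\sqrt{n_r})OS^{-1}$, $\sum_r\tau_{rp}^2\le k_p^2$, and \eqref{eq:lowerboundgamma}) together with $\rho\to0$, since then $\rho\|V\|_{2\to\infty}$ collapses to $\log(m)\kappa^7 k_{\max}^{5/2}d^{3/2}/\Delta$. The main obstacle is the leave-one-out analysis of $(C)$: defining $R^{(p)},\hat U^{(p)}$ and carrying the one-column perturbation and conditional concentration through, while keeping the nested rotations $O^*,W_U^*,W_V^*,\hat X,\hat W^{(p)}$ consistent, and verifying that every resulting term is dominated by $\tfrac{\rho}{1-\rho}\|V\|_{2\to\infty}$ after matching the many powers of $\kappa,k_{\max},d$; a secondary point is never to bound quantities like $\|(R-\Gamma)\hat V\|$ by $\|R-\Gamma\|$, which would be far too lossy when $m\gg n^2$, though the clean three-term decomposition above sidesteps this.
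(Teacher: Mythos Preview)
Your three-term decomposition $\hat V-VW_V^*=(A)+(B)+(C)$ is correct and your handling of $(A)$ and $(B)$ is fine, but the route diverges from the paper at $(C)$. The paper does \emph{not} introduce leave-one-out surrogates; instead it keeps $\hat V$ on the right-hand side and closes the argument by self-bounding. Concretely, it writes
\[
W_V^*(\hat V-VW_V^*)^\top e_p
= S^{-1}\bigl(\hat U(W_U^*)^\top O^*-U\bigr)^\top Re_p
+(W_V^*\hat S^{-1}-S^{-1}W)\,\hat S\hat V^\top e_p,
\]
controls the second summand via Lemma~\ref{lem:intertwininginverses} by a multiple of $\|\hat S\|\|\hat V^\top e_p\|$, and then bounds the first summand by $\|S^{-1}\|\,\|\hat U-\tilde UW_U^*\|\,\|\hat S\|\,\|\hat V^\top e_p\|$ as well. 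This yields $\|(\hat V-VW_V^*)^\top e_p\|\lesssim\rho\,\|\hat V^\top e_p\|\le\rho\bigl(\|V^\top e_p\|+\|(\hat V-VW_V^*)^\top e_p\|\bigr)$, and solving the contraction produces the $\rho/(1-\rho)$ factor directly---no decoupling, no conditional concentration. What your approach buys is rigor at the delicate step: the paper's passage from $\|Re_p\|$ to $\|\hat S\|\|\hat V^\top e_p\|$ in the first summand is not justified as written (in general $\|Re_p\|=\sqrt{k_p}$ while $\|\hat S\|\|\hat V^\top e_p\|$ is of order $k_p/\sqrt{n}$, so the inequality can fail once $n$ is large), and your leave-one-out construction is precisely the standard device for controlling such a row-wise cross term when the naive bound is too crude.

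One caveat on your treatment of $(C)$: the entries of $R_{\ast p}-\Gamma_{\ast p}$ are not independent---within each node class they arise from sampling $\tau_{rp}$ out of $n_r$ without replacement---so an off-the-shelf vector Bernstein inequality does not apply directly. You need a concentration result adapted to this dependence, for instance Serfling's inequality for linear statistics under sampling without replacement applied class by class (the $d$ classes being mutually independent), or an argument through negative association. The variance proxy $\|\mathrm{Cov}(R_{\ast p})\|=O(k_{\max}/n)$ you cite from the proof of Lemma~\ref{lem:rgamma} is correct and gives the right scale, but the tail bound needs this additional ingredient to be complete.
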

\begin{proof}
Let $W=(O^*)^\top W_U^* \hat{X}$, so 

For $1\leq p\leq m$, 
\begin{align*}
W_V^*(\hat{V}-VW_V^*)^\top e_p&=(W_V^*\hat{V}^\top-V^\top)e_p\\
&=(W_V^*\hat{S}^{-1}\hat{S}\hat{V}^\top-V^\top)e_p\\
&=(S^{-1}W\hat{S}\hat{V}^\top-V^\top)e_p+(W_V^*\hat{S}^{-1}-S^{-1}W)\hat{S}\hat{V}^\top e_p\\
&=S^{-1}((O^*)^\top W_U^* \hat{U}^\top R - U^\top \Gamma)e_p+(W_V^*\hat{S}^{-1}-S^{-1}W)\hat{S}\hat{V}^\top e_p\\
&=S^{-1}(\hat{U}(W_U^*)^\top O^*-U)^\top Re_p +(W_V^*\hat{S}^{-1}-S^{-1}W)\hat{S}\hat{V}^\top e_p.
\end{align*}
The norm of this vector may be bounded by
\begin{align*}
\|(\hat{V}-VW_V^*)^\top e_p\|&\leq \|S^{-1}\|\|\hat{U}-\tilde{U}W_U^*\|\|Re_p\|+\|S^{-1}W-W_V^*\hat{S}^{-1}\|\|\hat{S}\|\|\hat{V}^\top e_p\|\\
&\leq \|S^{-1}\|\|\hat{U}-\tilde{U}W_U^*\|\|\hat{S}\|\|\hat{V}^\top e_p\|+\|S^{-1}W-W_V^*\hat{S}^{-1}\|\|\hat{S}\|\|\hat{V}^\top e_p\|\\
&\lesssim \left(\frac{\sqrt{m\log(m)\kappa^2 k_{\mathrm{max}}^2\bar{k}}}{\Delta}+\frac{\sqrt{m}\log(m)\kappa^6 k_{\mathrm{max}}^{3/2}\sqrt{\bar{k}} d}{\Delta}\right)\|\hat{V}^\top e_p\|\\
&\lesssim \frac{\sqrt{m}\log(m)\kappa^6 k_{\mathrm{max}}^{3/2}\sqrt{\bar{k}}d}{\Delta}(\|V^\top e_p\|+\|(\hat{V}-VW_V^*)^\top e_p\|).
\end{align*}
When the fractional part $\rho\rightarrow 0$, then even with hidden constants, it will eventually be $<1$. In this case,
$$
\|\hat{V}-VW_V^*\|_{2\rightarrow\infty} \leq \frac{\rho}{1-\rho}\|V\|_{2\rightarrow\infty}.
$$

We may control the $2\rightarrow\infty$ norm of $V$ by bounding
$V^\top e_p = S^{-1}U^\top \Gamma e_p = S^{-1}\sqrt{B}\tau_p$ as
$$
\|V^\top e_p\|^2\leq \frac{1}{\sigma_d^2(\Gamma)}\sum_{r=1}^d \frac{\tau_{rp}^2}{n_r}\leq \frac{k_{\mathrm{max}}^2}{\tilde{c}n\sigma_d^2(\Gamma)}.
$$
By Equation~\ref{eq:lowerboundgamma},
$$\sigma_d^2(\Gamma)\geq \frac{m\bar{k}}{\tilde{c} n\kappa^2 d}, $$
so
\begin{equation}
\|V\|_{2\rightarrow\infty} \leq O\left(\frac{\kappa k_{\mathrm{max}}\sqrt{d}}{\sqrt{m\bar{k}}}\right).
\label{eq:v2inf}
\end{equation}
Combining this with the previous bound on $\|\hat{V}-VW_V^*\|_{2\rightarrow\infty}$ yields the stated bound.
\end{proof}

Let us now prove Theorem~\ref{thm:scaledv}.

\begin{proof}

Let $W=(O^*)^\top W_U^* \hat{X}$. 
We may write the matrix in question as
$$
\hat{V}\hat{S}-VSW = (\hat{V}-VW_V^*)\hat{S}+V(W_V^*\hat{S}-SW)
$$
Now we have the bounds for each term given by 
$$\|\hat{V}-VW_V^*\|_{2\rightarrow\infty} \lesssim\frac{\log(m)\kappa^7 k_{\mathrm{max}}^{5/2}d^{3/2}}{\Delta}.$$
$$\|\hat{S}\|\lesssim \sqrt{\frac{m k_{\mathrm{max}} \bar{k}}{n}}$$
$$
\|SW-W_V^* \hat{S}\|_F \leq O\left(\max\left\{\frac{m^{3/2}\log(m)\kappa^4 (k_{\mathrm{max}} \bar k)^{3/2}}{\sqrt{n}\Delta^2},\frac{m\log(m)\sqrt{k_{\mathrm{max}}}\bar k}{\sqrt{n}\Delta}\right\}\right).
$$
$$
\|V\|_{2\rightarrow\infty} \leq O\left(\frac{\kappa k_{\mathrm{max}}\sqrt{d}}{\sqrt{m\bar{k}}}\right)
$$

We get
$$
\|\hat{V}\hat{S}-VSW\|_{2\rightarrow\infty} \lesssim \frac{\sqrt{m}\log(m)\kappa^7 k_{\mathrm{max}}^3\sqrt{\bar{k}} d^{3/2}}{\sqrt{n}\Delta}.
$$
So for $\sqrt{n}\|\hat{V}\hat{S}-VSW\|_{2\rightarrow\infty}\rightarrow 0$, we need
$$\Delta^2 \gg m\log^2(m)\kappa^{14}k_{\mathrm{max}}^6 \bar{k} d^3.$$
This condition is sufficient for Theorem~\ref{thm:unscaled2inf} to hold.

To compare $S V_p$ to $S V_{p'}$, observe that $SV^\top(e_p-e_{p'})=U^\top \Gamma (e_p-e_{p'})= W \sqrt{B}\mathcal{T}(e_p-e_{p'}),$ since $U=Z\sqrt{B} W$ for some orthogonal matrix $W$, and $\sqrt{B}Z^\top Z\sqrt{B}=I.$ Then whenever $\tau_p\neq \tau_{p'}$, we have:
$$
\|SV_p - SV_{p'}\|=\|\sqrt{B}\mathcal{T}(e_p-e_{p'})\|^2=\sum_r \frac{(\tau_{rp}-\tau_{rp'})^2}{n_r}\geq \frac{1}{\tilde{c}n}.
$$
Comparing this to the size of the noise, measured in $2\rightarrow\infty$ norm, we have that with overwhelming probability as $m\rightarrow\infty$, 
$$
\sqrt{n}\|\hat{V}\hat{S}-VSW\|_{2\rightarrow\infty} \rightarrow 0.
$$

Consider applying hierarchical clustering to the rows of $X\in \RR^{m\times d}$ using complete linkage, and let $x_p\in\RR^d$ denote the $p$th row of $X$. The initial clustering corresponds to all singleton clusters, $\{1\}, \{2\}, \ldots, \{m\}$. After several steps, we have a collection of clusters $C_1^{(i)},\ldots, C_{n_i}^{(i)}\subseteq [m]$. To obtain the new collection of clusters $C_1^{(i+1)},\ldots, C_{n_{i+1}}^{(i+1)},$ find the pair of clusters $C_k^{(i)}, C_{\ell}^{(i)}$ minimizing the quantity 
$$ 
d(C_k^{(i)},C_{\ell}^{(i)})=\max\{\|x_p-x_s\|: p\in C_k^{(i)}, s\in C_{\ell}^{(i)}\}
$$
and merge them to get $C_1^{(i+1)}$. The remaining clusters are equal to the clusters $C_j^{(i)}$, listed in any order, so $n_{i+1}=n_i - 1.$ When $m$ is large enough, with high probability, we have that $\sqrt{n}\|\hat{V}\hat{S}-VSW\|_{2\rightarrow\infty} \leq 1/(10\tilde{c}n)$. When $\tau_p=\tau_s$, $e_p^\top VSW= e_s^\top VSW$, so 
$$
\|(e_p^\top-e_s^\top)\hat{V}\hat{S}\|\leq \|e_p^\top(\hat{V}\hat{S}-VSW)\|+\|e_s^\top(\hat{V}\hat{S}-VSW)\|\leq 2\|\hat{V}\hat{S}-VSW\|_{2\rightarrow\infty}\leq 1/(5\tilde{c}n).
$$
When $\tau_p\neq \tau_s$, $\|(e_p-e_s)^\top VSW\|\geq 1/(\tilde{c}n),$ so
\begin{align*}
\|(e_p-e_s)^\top \hat{V}\hat{S}\|&\geq \|(e_p-e_s)^\top VSW\|-\|(e_p-e_s)^\top(\hat{V}\hat{S}-VSW)\|\\
&\geq 1/(\tilde{c}n)-2\|\hat{V}\hat{S}-VSW\|_{2\rightarrow\infty}\\
&\geq 4/(5\tilde{c}n).
\end{align*}
Then at any step in the hierarchical clustering algorithm, if there remains two clusters $C_k^{(i)}, C_{\ell}^{(i)}$, such that all of the corresponding type vectors are equal, 
$$
d(C_k^{(i)}, C_{\ell}^{(i)}) \leq 1/(\tilde{c}n).
$$
Meanwhile, any pair of clusters $C_k^{(i)}, C_{\ell}^{(i)}$ such that their union contains $p,s$ with $\tau_p\neq \tau_s$ must satisfy 
$$
d(C_k^{(i)},C_{\ell}^{(i)})\geq 4/(5\tilde{c}n).
$$
This proves that the initial sequence of merge steps will always unite clusters of interactions that all share a common type vector. Indeed, once a first merge step $i+1$ unites two clusters of interactions which do not all share a common type vector, this implies that $C_1^{(i)},\ldots, C_{n_i}^{(i)}$ cluster the type vectors perfectly.
\end{proof}

\section{Additional Simulation Results}
\label{sec:additional}

All of the simulation settings are identical to Section~\ref{sec:simulation}, but we include additional figures showing the growth or decrease of intermediate quantities bounded in Appendix~\ref{sec:proofs}.

\begin{figure}[ht]
\begin{tabular}{cc}
  \includegraphics[width=0.5\textwidth]{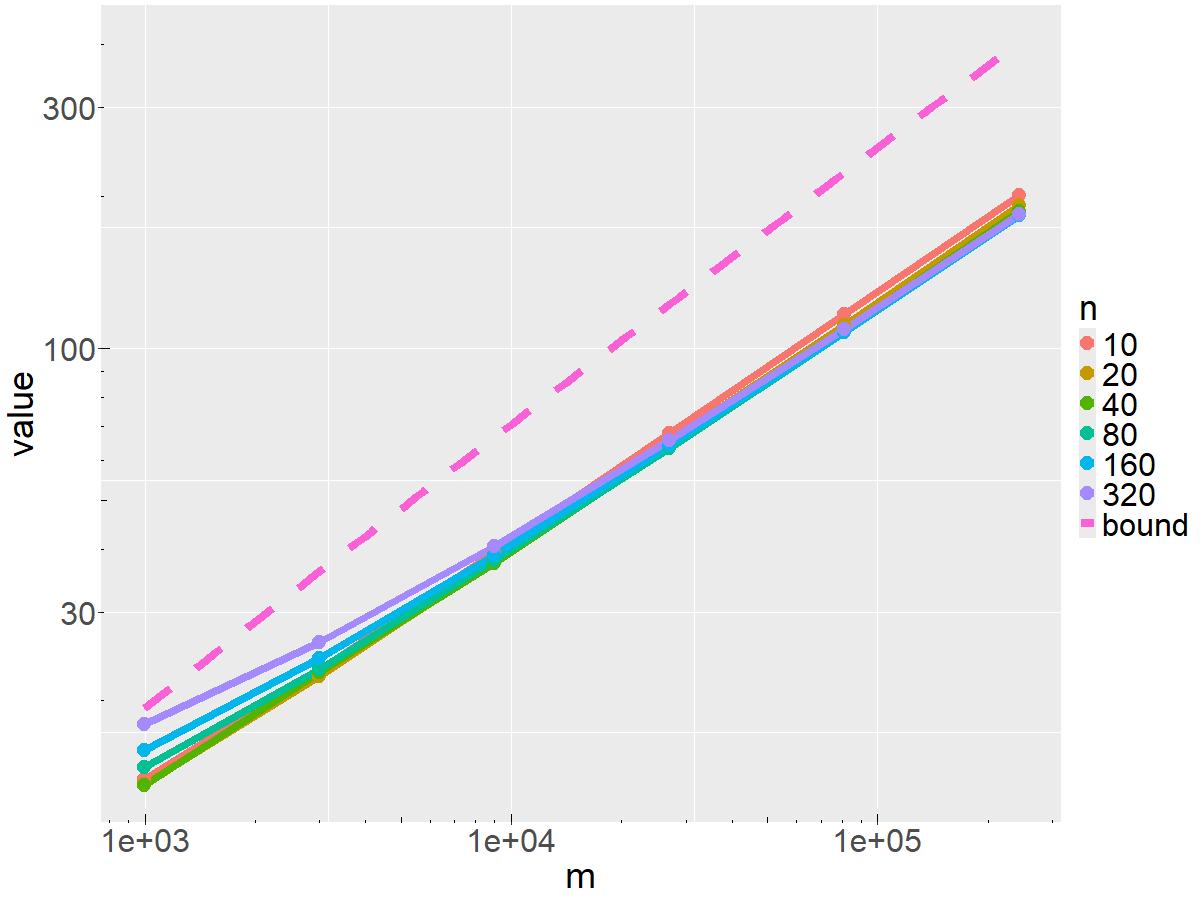} &   \includegraphics[width=0.5\textwidth]{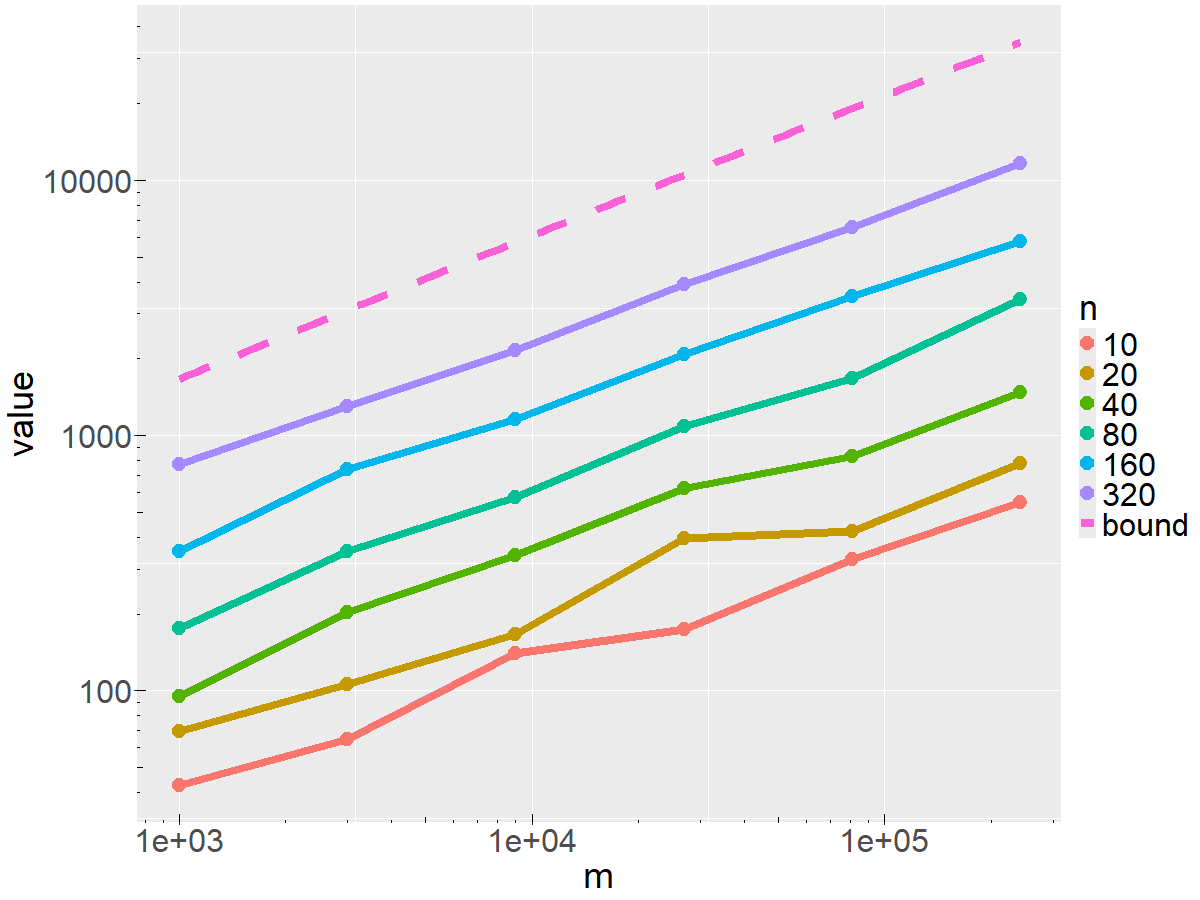} \\
(a)$\norm{R-\Gamma}$ & (b) $\norm{\hol{RR^\top}-\hol{\EE(RR^\top)}}$ \\
 \includegraphics[width=0.5\textwidth]{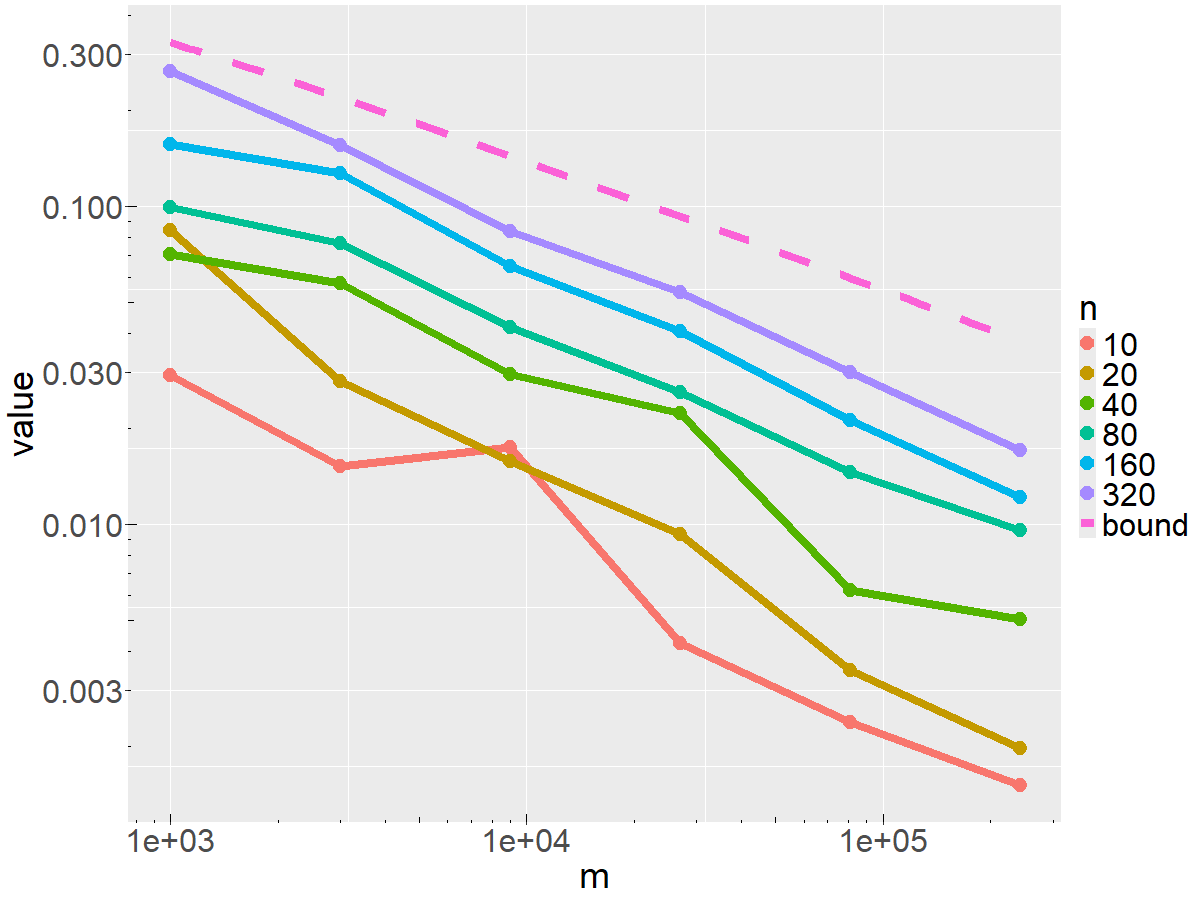} &   \includegraphics[width=0.5\textwidth]{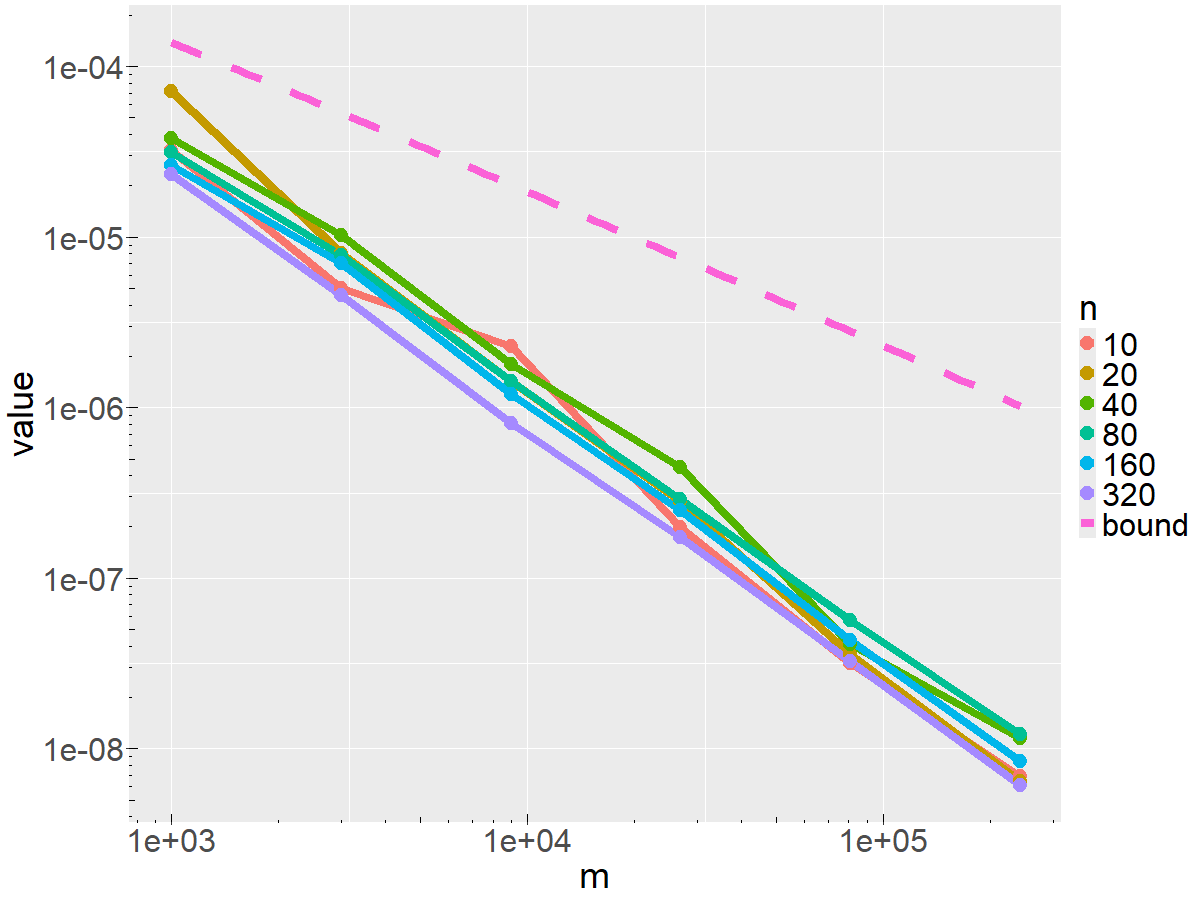} \\
(c)$\norm{SW-W_V^*\hat S}_F$ & (d)  $\norm{S^{-1}W-W_V^*\hat S^{-1}}_F$ \\
\includegraphics[width=0.5\textwidth]{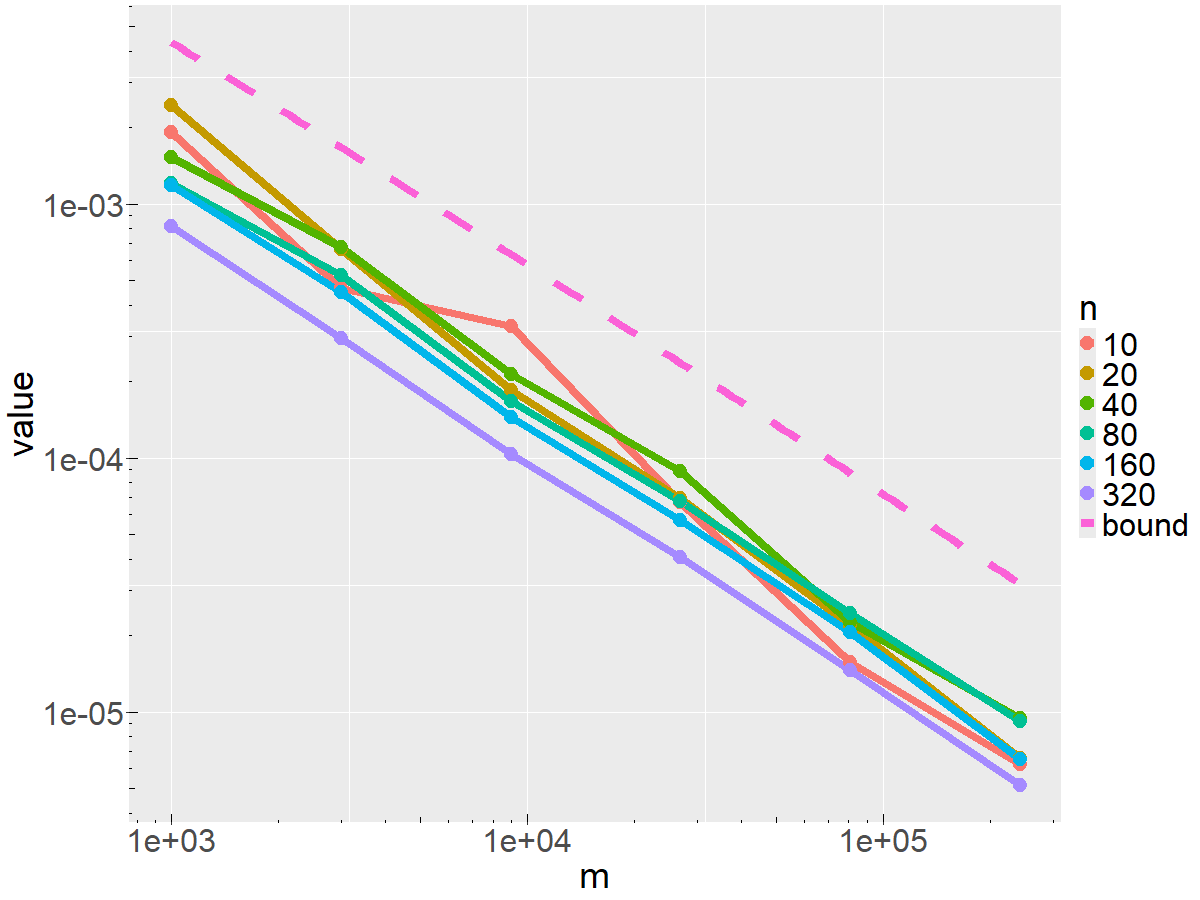} &   \includegraphics[width=0.5\textwidth]{images/sim_022825_VhatShat_VSW_2Inf_log.png} \\
(e)$\norm{\hat V-VW_V^*}_{2\rightarrow\infty}$ & (f)  $\norm{\hat V\hat S-VSW}_{2\rightarrow\infty}$ \\
\end{tabular}
\caption{\label{fig:IH-sim-growing-k} Simulation results under different $m,n$ with $k_{\mathrm{max}}=n/2$.}
\end{figure}

\begin{figure}[ht]
\begin{tabular}{cc}
  \includegraphics[width=0.5\textwidth]{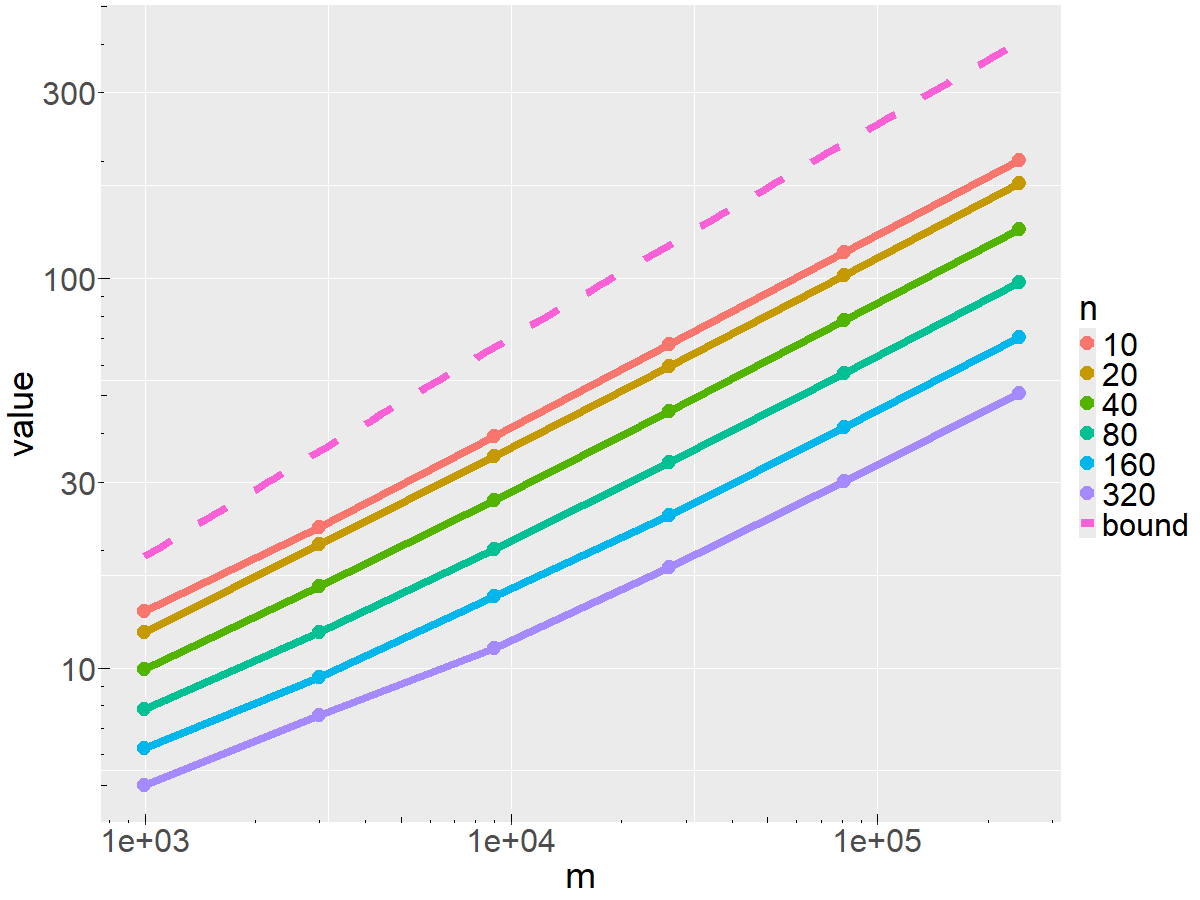} &   \includegraphics[width=0.5\textwidth]{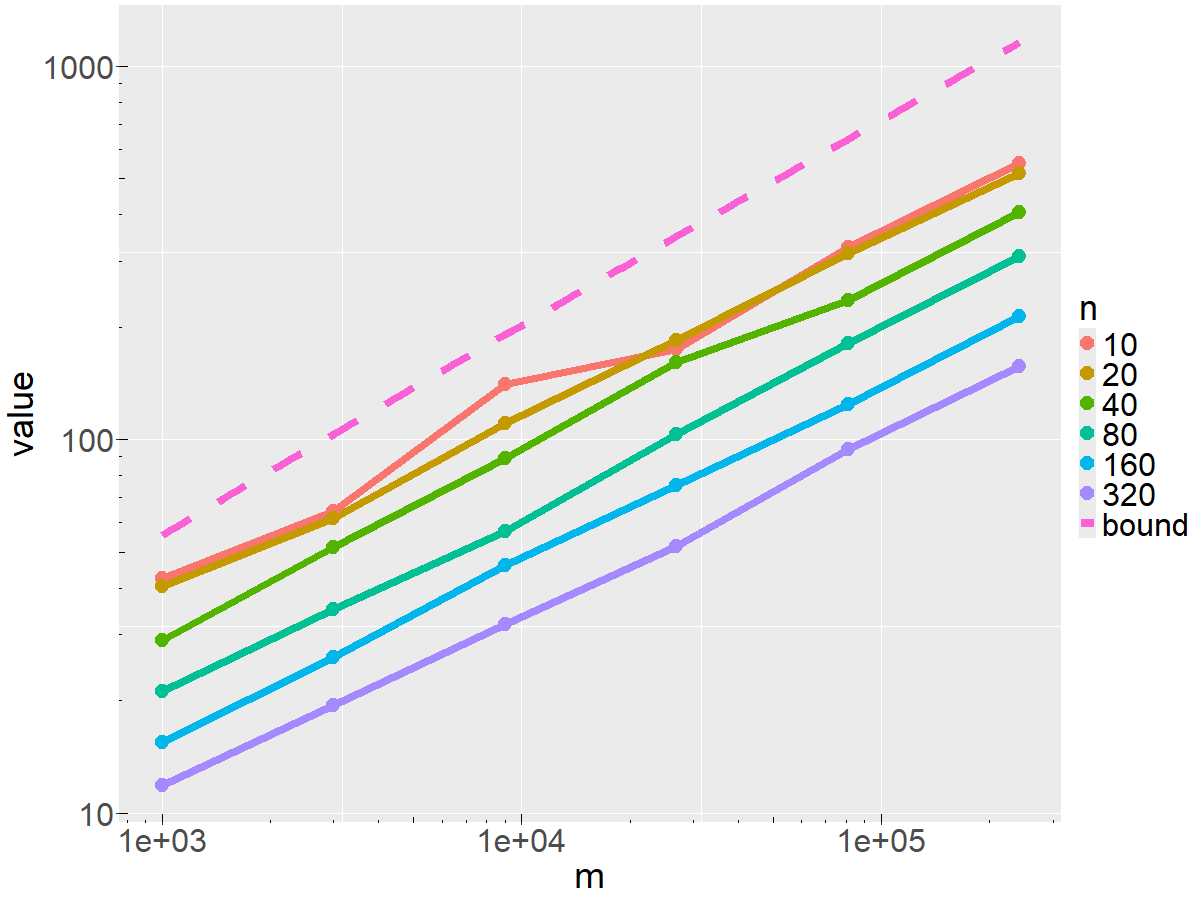} \\
(a)$\norm{R-\Gamma}$ & (b) $\norm{\hol{RR^\top}-\hol{\EE(RR^\top)}}$ \\
 \includegraphics[width=0.5\textwidth]{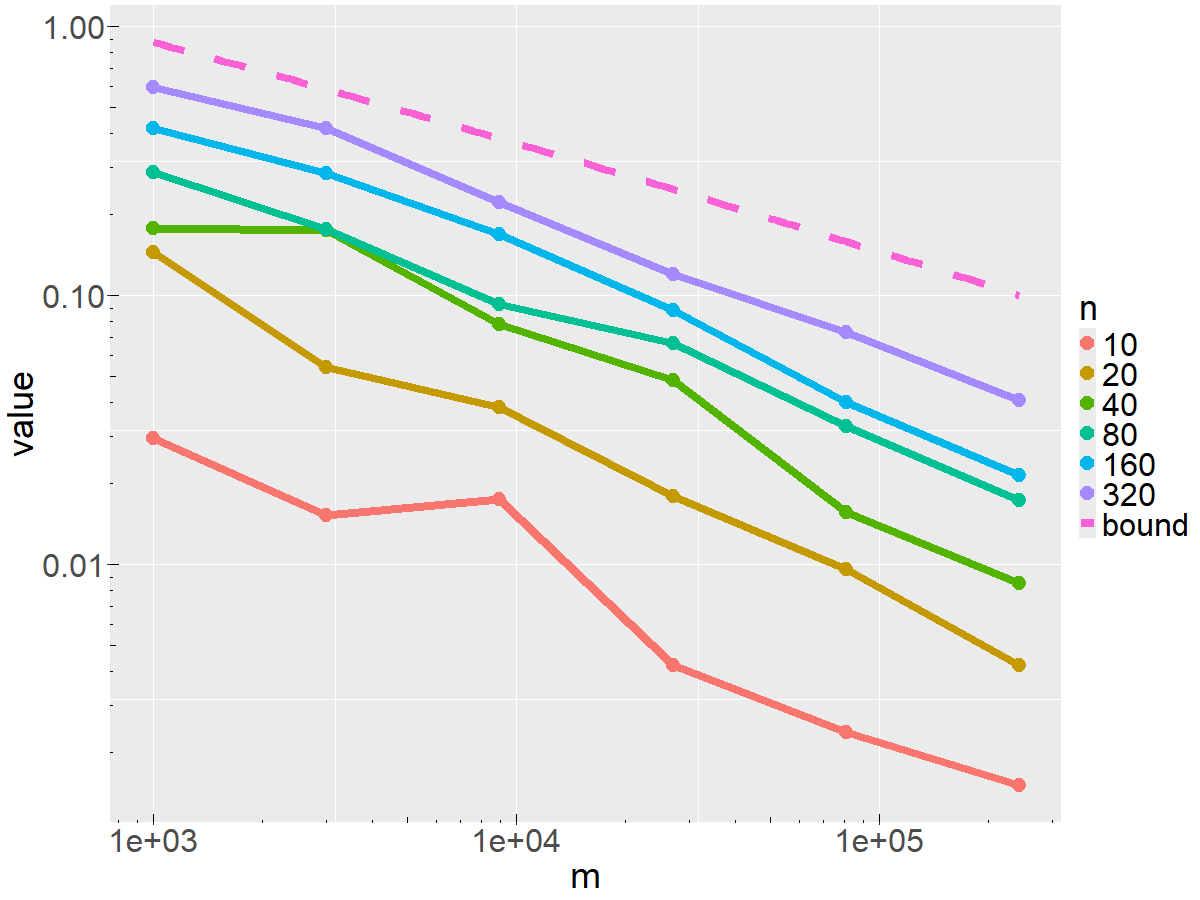} &   \includegraphics[width=0.5\textwidth]{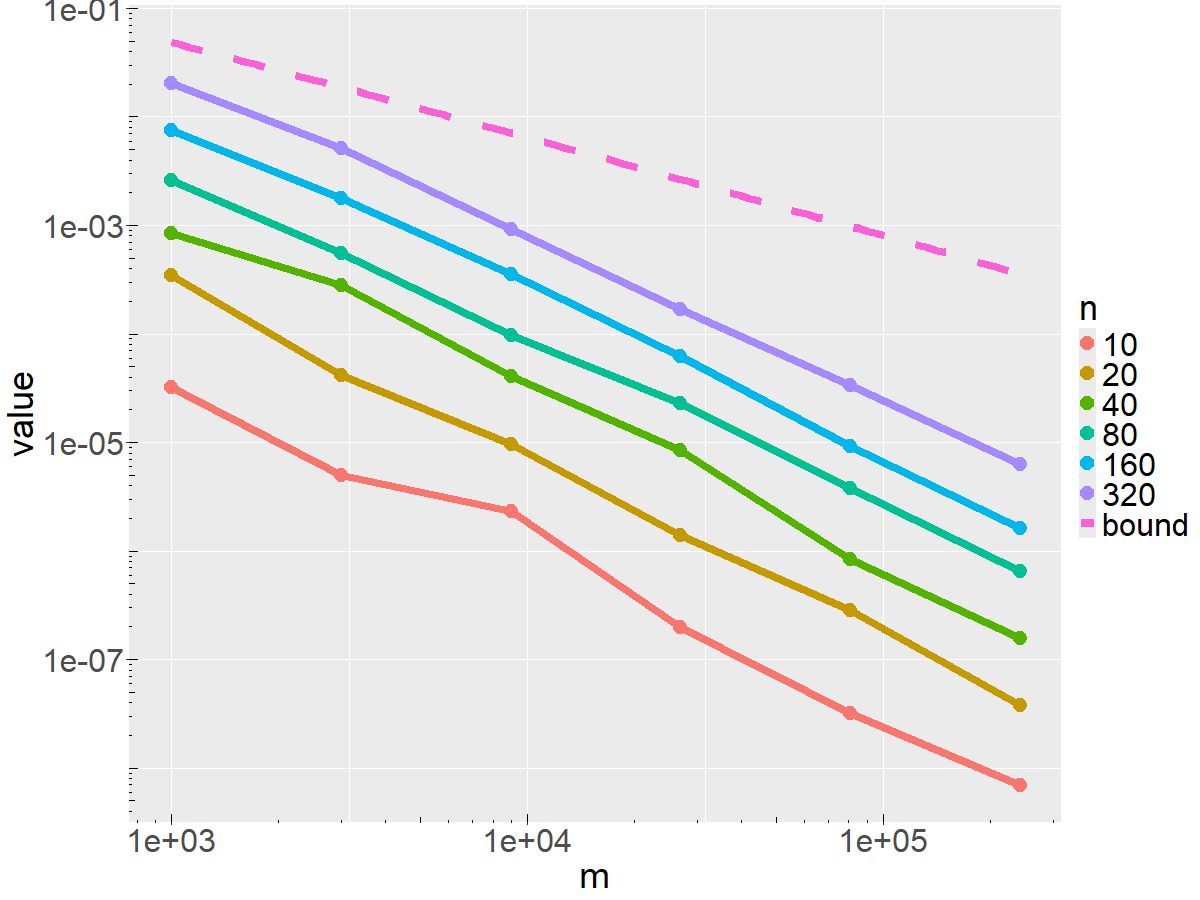} \\
(c)$\norm{SW-W_V^*\hat S}_F$ & (d)  $\norm{S^{-1}W-W_V^*\hat S^{-1}}_F$ \\
\includegraphics[width=0.5\textwidth]{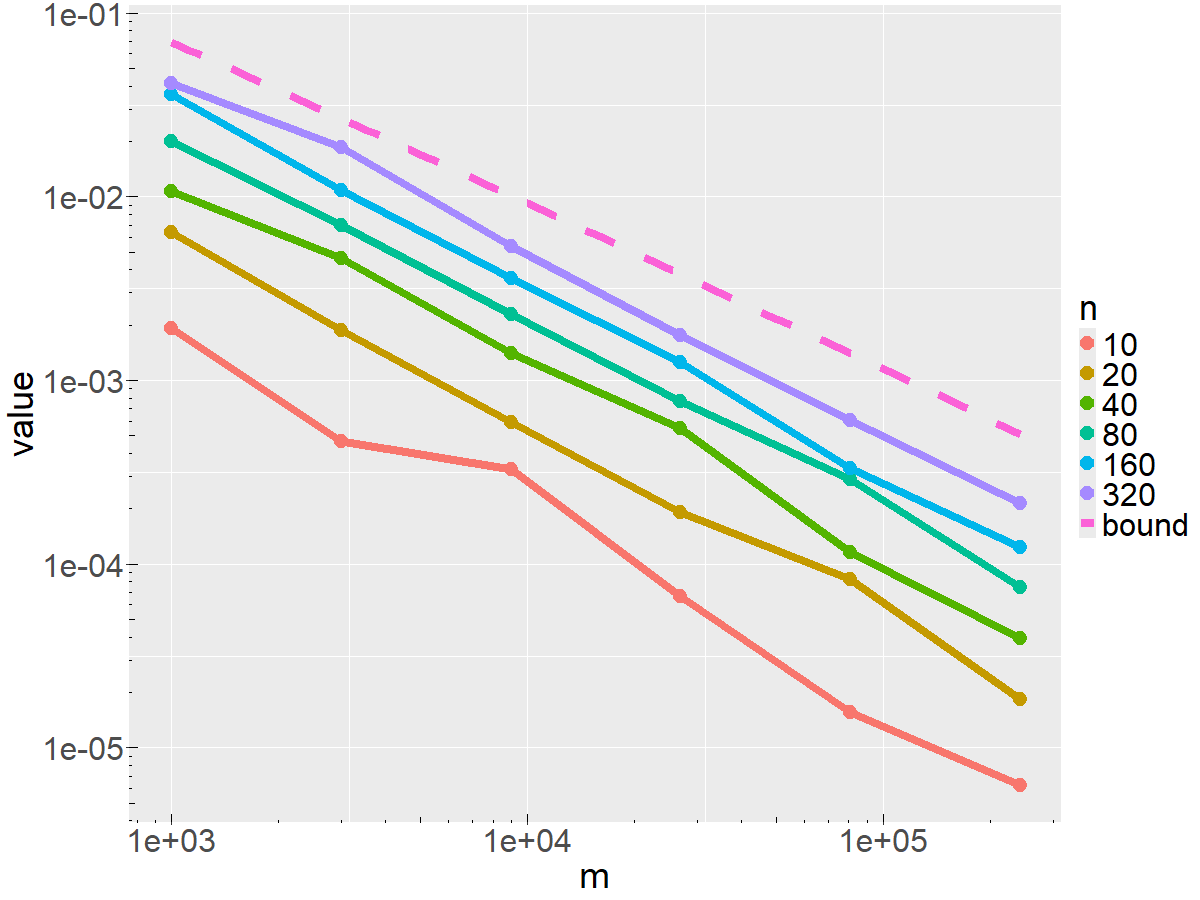} &   \includegraphics[width=0.5\textwidth]{images/sim_022825_fixed_k_VhatShat_VSW_2Inf_log.png} \\
(e)$\norm{\hat V-VW_V^*}_{2\rightarrow\infty}$ & (f)  $\norm{\hat V\hat S-VSW}_{2\rightarrow\infty}$ \\
\end{tabular}
\caption{\label{fig:IH-sim-fixed-k} Simulation results under different $m,n$ with $k_{\mathrm{max}}=5$.}
\end{figure}

\end{document}